\numberwithin{equation}{section}
\titleformat*{\section}{\centering \large\bfseries}
\titleformat*{\subsection}{\centering \large\itshape}
\DeclareMathOperator{\Tr}{Tr}
\DeclareMathOperator{\rank}{rank}
\DeclareMathOperator{\cof}{cof}
\DeclareMathOperator{\II}{II}
\newtheorem{theorem}{Theorem}[section]
\newtheorem{proposition}[theorem]{Proposition}
\title{
Microstructure-enabled control of wrinkling in nematic elastomer sheets
}
\author{Paul Plucinsky and Kaushik Bhattacharya} 
\affil{California Institute of Technology, Pasadena, California 91125, USA}
\date{\today}
\begin{document}
\maketitle

\begin{abstract} 
Nematic elastomers are rubbery solids which have liquid crystals incorporated into their polymer chains.  These materials display many unusual mechanical properties, one such being the ability to form fine-scale microstructure.  In this work, we explore the response of taut and appreciably stressed sheets made of nematic elastomer.  Such sheets feature two potential instabilities -- the formation of fine-scale material microstructure and the formation of fine-scale wrinkles.  We develop a theoretical framework to study these sheets that accounts for both instabilities, and we implement this framework numerically.  Specifically, we show that these instabilities occur for distinct mesoscale stretches, and observe that microstructure is finer than wrinkles for physically relevant parameters.  Therefore, we relax (i.e., implicitly but rigorously account for) the microstructure while we regularize (i.e., compute the details explicitly) the wrinkles.  Using both analytical and numerical studies, we show that nematic elastomer sheets can suppress wrinkling by modifying the expected state of stress through the formation of microstructure.
\end{abstract}

%%%%%%%%%%%%%%%%%%%%%%%%%%%%%%%%%%%%%%%%%
%%%%%%%%%%%%%%%%%%%%%%%%%%%%%%%%%%%%%%%%%
\section{Introduction}

Nematic elastomers are rubbery solids made of cross-linked polymer chains that have nematic mesogens (rod-like molecules) either incorporated into the main chain or pendent from them.  Their structure enables a coupling between the entropic elasticity of the polymer network and the ordering of the liquid crystals, and this in turn results in fairly complex mechanical properties (see Warner and Terentjev \cite{wt_lceboox_03} for a comprehensive introduction and review).  At high temperatures, the mesogens are randomly oriented and the material is isotropic.  However, on cooling below a critical temperature, the mesogens undergo an isotropic to nematic phase transformation and develop a local orientational order described by a director.  This is accompanied by a spontaneous elongation along the director and contraction transverse to it.  However, since the material is isotropic in the high temperature state, there is no preferential orientation for the director. Thus, the director may rotate freely with respect to the material frame and form domains where the director varies spatially.  This manifests itself in a very rich range of phenomena.

\begin{figure}
\centering
\includegraphics[height = 3.0 in]{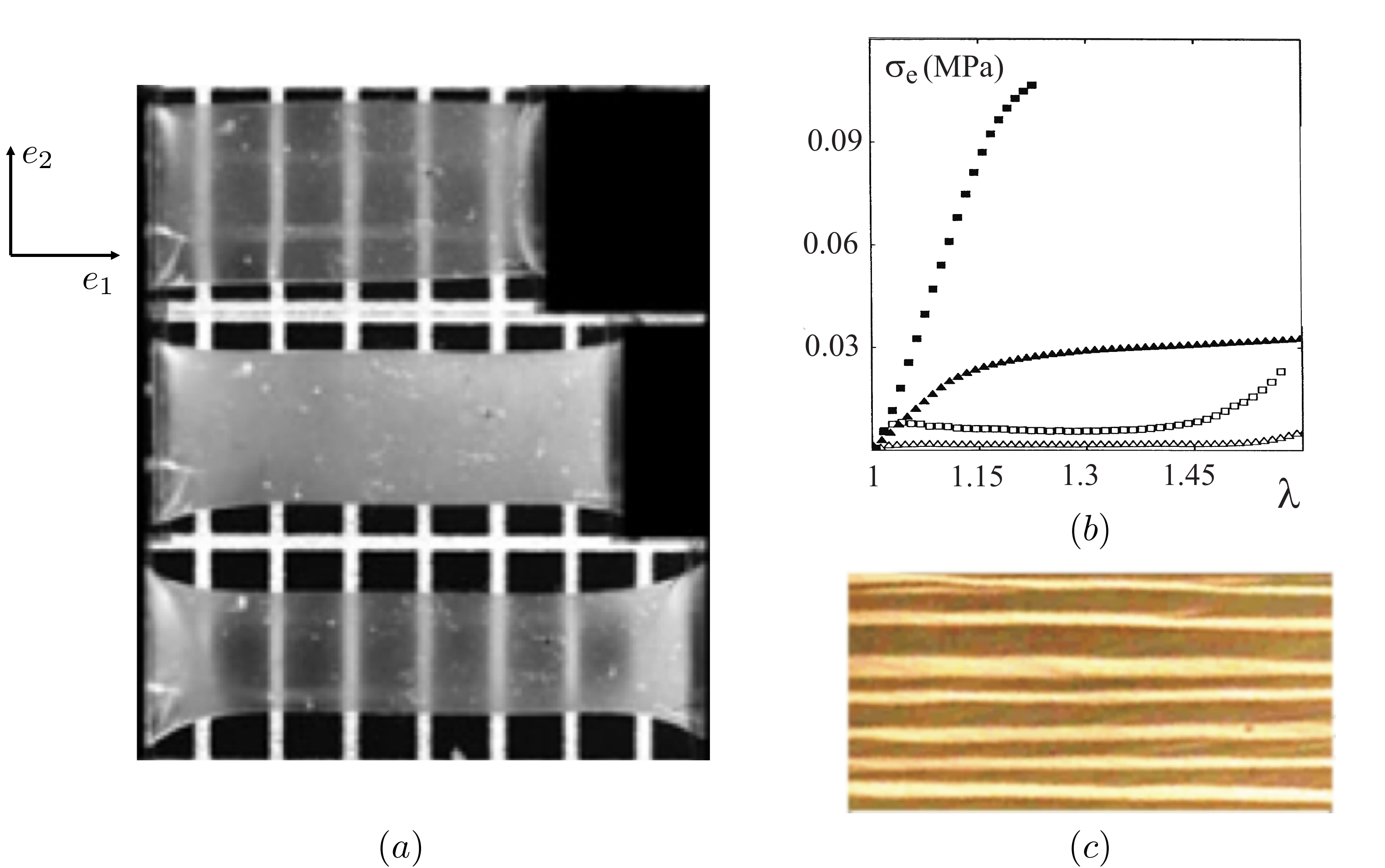}
\caption{The clamped-stretch experiments of Kundler and Finkelmann \cite{kf_mrc_95} showing soft elasticity and fine scale microstructure, but no wrinkling. (a) Snapshots of the sheet.  The nematic director is oriented vertically when undeformed (top), develops stripe domains of alternating rotated directors depicted in (c) for moderate deformation (middle) and eventually is uniform and oriented horizontally (bottom). (b)  Stress-strain curves  of K\"{u}pfer and Finkelmann \cite{kupf_macro_94} for elastomers of different preparation histories -- the lowest curve is akin to the clamped-stretch sheet in part (a) and describes a sheet with significant soft elasticity. (c) Fine-scale strip domain microstructure. }
\label{fig:CSKF}
\end{figure}
Of particular interest is the soft elasticity and fine scale microstructure (textured deformation or striped domains) observed in the clamped-stretch experiments Kundler and Finkelmann \cite{kf_mrc_95}.  
Some of their key observations are reproduced in Figure \ref{fig:CSKF}.  
They begin with a thin rectangular sheet where the director is uniformly oriented tangential to the sheet but in the short direction (top of Figure \ref{fig:CSKF}(a)).  The fact that the director is uniform is evident from the fact that the sheet is transparent.  They clamp the short edges and pull it along the long edge.  The nominal stretch vs. nominal stress behavior for this sheet is akin the bottom-most curve in Figure \ref{fig:CSKF}(b).  Notice that nematic sheets display essentially zero stress for very significant values of stretch: this is known as the {\it soft elasticity}.  The center and bottom images of Figure \ref{fig:CSKF}(a) capture stretch midway through the soft behavior and at the end of the soft behavior respectively.  During the soft stage of stretch, the entire sheet is strongly scattering light and so visibly cloudy, an optical indication that the director is no longer uniform.  Beyond the soft plateau (i.e., the bottom image), the sheet becomes transparent again in its central region, indicating uniform director arrangement in this region.  However, it remains cloudy near the clamped edges.

The cloudy regions indicate material microstructure in the form of strip domains of oscillating director as shown in Figure \ref{fig:CSKF}(c).  The heuristic is as follows: A nematic elastomer features a director that can rotate through the material, and this rotation is accommodated by spontaneous elongation along the director.  Thus, the sheet can elongate along the direction of stretch with little stress by having its director rotate from vertical to horizontal.  Doing so uniformly, however, results in a shear at intermediate orientations, but this shear can be avoided on average by breaking up the cross-section into domains where one half the directors rotate one way (say through $\theta$) while the other half rotates the other way (through $- \theta$).  This is exactly what happens in the clamped stretched sheet at a very fine scale (microns), manifesting in stripe domains.  Finally, when the director has fully rotated to the horizontal, it becomes uniform again (since the material is invariant under the change of sign of the director).

Bladon {\it {\it et al.}\ }\cite{btw_pre_93} proposed a free energy based on entropic elasticity of the chains in the presence of nematic order to describe the elasticity of nematic elastomers, and Verwey {\it {\it et al.}\ }\cite{vwt_jphys_96} explained how stripe domains can arise as a means of minimizing this free energy.  
DeSimone and Dolzmann \cite{dd_arma_02} noted that the free energy density proposed by Bladon {\it {\it et al.}\ }\cite{btw_pre_93} is not quasiconvex, and thus fine-scale microstructure can arise naturally in these materials.  These include stripe domains, but also more complex microstructure.  They also computed explicitly the relaxation of the Bladon {\it {\it et al.}\ }free energy which implicitly but rigorously accounts for the microstructure.  Conti {\it {\it et al.}\ }\cite{cdd_02_pre, cdd_02_jmps} used the planar version of free energy to study the stretching of sheets and were able to explain various details of the experiments described above including the soft elasticity, formation and disappearance of stripe domains, and the persistence of domains near the grips even at high stretches.

In this paper, we study another surprising observation inherent in these experiments, one that that has thus far escaped notice and exploration.  Even though these thin sheets have been stretched significantly with clamped grips, they remain planar and do not wrinkle.  In fact, similar experiments have been conducted by a number of researchers, and none of them have reported any wrinkling instability.  This is surprising because thin sheets of purely elastic materials wrinkle readily when subjected to either shear \cite{wp_06_jmms} or stretching with clamped grips \cite{nrh_11_ijss,tbs_14_jmps,z_08_wrinkle}. 

The wrinkling of thin elastic membranes has been widely studied, motivated by various application.  Early research was motivated by the use of membranes for aircraft skins where wrinkling altered their aerodynamic performance.  More recent interest stems from the use of membranes in light-weight deployable space structures including solar sails, telescopes and antennas \cite{j_paa_01,m_spring_13}, and renewed interest in fabric roofs of complex shape \cite{bgb_se_04} (see also \cite{z_08_wrinkle} and references therein).  The underlying mechanism is relatively simple: thin elastic membranes are unable to sustain any compression; instead they accommodate imposed compressive strains by buckling out of plane.  When a sheet is pulled on clamped edges, the clamps inhibit the natural lateral contraction, leading to compressive stresses in the lateral direction, which in turn leads to wrinkles or undulations elongated along the direction of stretch.  The wavelength of the undulations are large compared to thickness, but small compared to the overall dimensions of the sheet.

Mathematically, any finite deformation theory of membranes is not quasiconvex, and thus suffers from instabilities which can be interpreted as wrinkles (see for example  \cite{p_86_jam,s_90_prsa,sp_89_qjmam}).
Further, the relaxation of such theories gives rise to tension-field theories like those of Mansfield  \cite{m_70_prsa} where membranes can resist tension but not compression.  Such theories are zero thickness idealizations of the membrane which account for the consequences of wrinkles at a scale large compared to the wrinkles but do not describe them explicitly.  Alternatively, in recognizing that wrinkles cause bending due to the non-zero thickness of the membrane, Koiter-type theories, which capture a sum  of bending and membranes energies, lead to an explicit description of wrinkles.  Such theories form the basis of the analysis of wrinkling described above \cite{nrh_11_ijss,tbs_14_jmps, wp_06_jmms,z_08_wrinkle}.

In general, there are two approaches in dealing with instabilities resulting from the failure of (an appropriate notion of) convexity that results in features at a fine scale.  The first is relaxation, where one derives an {\it effective} or {\it relaxed} theory that describes the overall behavior after accounting for the formation of fine-scale features.  The relaxed theory of DeSimone and Dolzmann in the context of liquid crystal elastomers and the tension-field theories for thin membranes are examples of such relaxation.   
While these theories are extremely useful in describing overall behavior, they are often difficult to compute explicitly and they do not resolve all fine scale details though it is at times possible to {\it a posteriori} reconstruct them.  Further, they are often degenerately convex and therefore lead to extremely stiff numerical problems.  The second approach is {\it regularization} where one recognizes that the source of the nonconvexity is the neglect of some smaller order physics, and adds some higher order term to the energy.  The second gradient theories of plasticity, the phase field theories of phase transformations and the theories of Verwey{\it {\it et al.}\ } \cite{vwt_jphys_96} (where the Frank elasticity regularizes the entropic elasticity) are examples.  These resolve the fine-scale details, but are computationally extremely expensive as they require a very fine resolution.

In this work, we are interested in the potential wrinkling behavior of stretched nematic elastomer sheets.  Therefore, we have to account for two sources of instability -- a material instability that results in the formation of fine-scale microstructure and a structural instability that results in fine-scale wrinkles.  We note that the scale of the microstructure (microns) is small compared to the scale of wrinkles (millimeters).   Therefore, we take a multiscale view and systematically develop a theory that is a relaxation for microstructure but a regularization for wrinkles.   The resulting theory is a Koiter-type theory (\ref{eq:Koiter}) with two terms; the first is the two-dimensional or plane stress reduction of the relaxed energy of DeSimone and Dolzmann  \cite{dd_arma_02} and the second is bending.

To develop this theory, we build on the work of Cesana {\it {\it et al.}\ }\cite{cpk_arma_15} who started from an appropriate three dimensional formulation and derived from it the relaxed membrane energy.  They also provided the explicit characterization of the instabilities or oscillations (Young measures) that underly the relaxation.  This had two sources -- microstructure or stripe domains and wrinkles.  Remarkably, for the taut sheets of current interest, they found that the overall deformation gradients for which microstructure occurs are distinct from the overall deformation gradients for which wrinkling occurs.  Here, we show, in addition, that the plane stress reduction of the relaxed entropic elastic energy coincides with the relaxed membrane energy in all regions of interest for taut membranes except the one involving tension wrinkles, where it is the plane stress reduction of the original entropic energy.   Consequently, this reduction accurately describes the role of microstructure in the in-plane deformation of nematic elastomer sheets.  It does not, however, accurately describe tension wrinkling in these sheets; rather, regularization or relaxation is needed.  Taking the regularization approach, we use the Young measure characterization of tension wrinkling oscillations to compute the bending energy for these oscillations systematically from the relaxed entropic elastic energy.  This bending together with the plane stress reduction of the relaxed entropic elastic energy gives the appropriate Koiter-type theory for taut sheets of nematic elastomer.

We use this theory in numerical studies to demonstrate that the ability of the material to form microstructure does indeed suppress wrinkling.  Specifically, we study the clamped-stretch experiments of Kundler and Finkelmann \cite{kf_mrc_95}  and focus on sheets with lateral dimensions for which purely elastic materials readily wrinkle under this stretch.  We show that as a parameter that describes the strength of the nematic order increases, the onset of wrinkling is delayed and the amplitude is decreased, until it is completely suppressed for large enough nematic order.  We further show that the reason for this is that the ability to form microstructure alters the stress distribution close to the clamps.  These results open up the possibility of exploiting these materials in applications where one seeks membranes that do not wrinkle.

This paper is organized as follows:  In section \ref{sec:Hierarchy}, we comment on the notation and present for clarity a visual summary capturing  the hierarchy of theories for nematic elastomers described and developed in sequel. We provide the background in Section \ref{sec:TheoryBack}, recalling the theory of Bladon {\it {\it et al.}\ }\cite{btw_pre_93} for the entropic elasticity of nematic elastomers (Section \ref{ssec:FreeEnergy}), the relaxation of this energy by DeSimone and Dolzmann  \cite{dd_arma_02} (Section \ref{ssec:3Dsec}) and the membrane theory of Cesana {\it et al.} \cite{cpk_arma_15}  (Section \ref{ssec:EffectiveTheory}).  We develop the Koiter theory in Sections \ref{sec:k}.  We describe the numerical method in Section \ref{sec:Numerical}, and apply it to study clamped extension of sheets in Section \ref{sec:results}.  We conclude in Section \ref{sec:conc}.

\section{Notation and Overview}\label{sec:Hierarchy}

\begin{figure*}
\centering
\begin{subfigure}{\textwidth}
\centering
\includegraphics[width = 5.0 in]{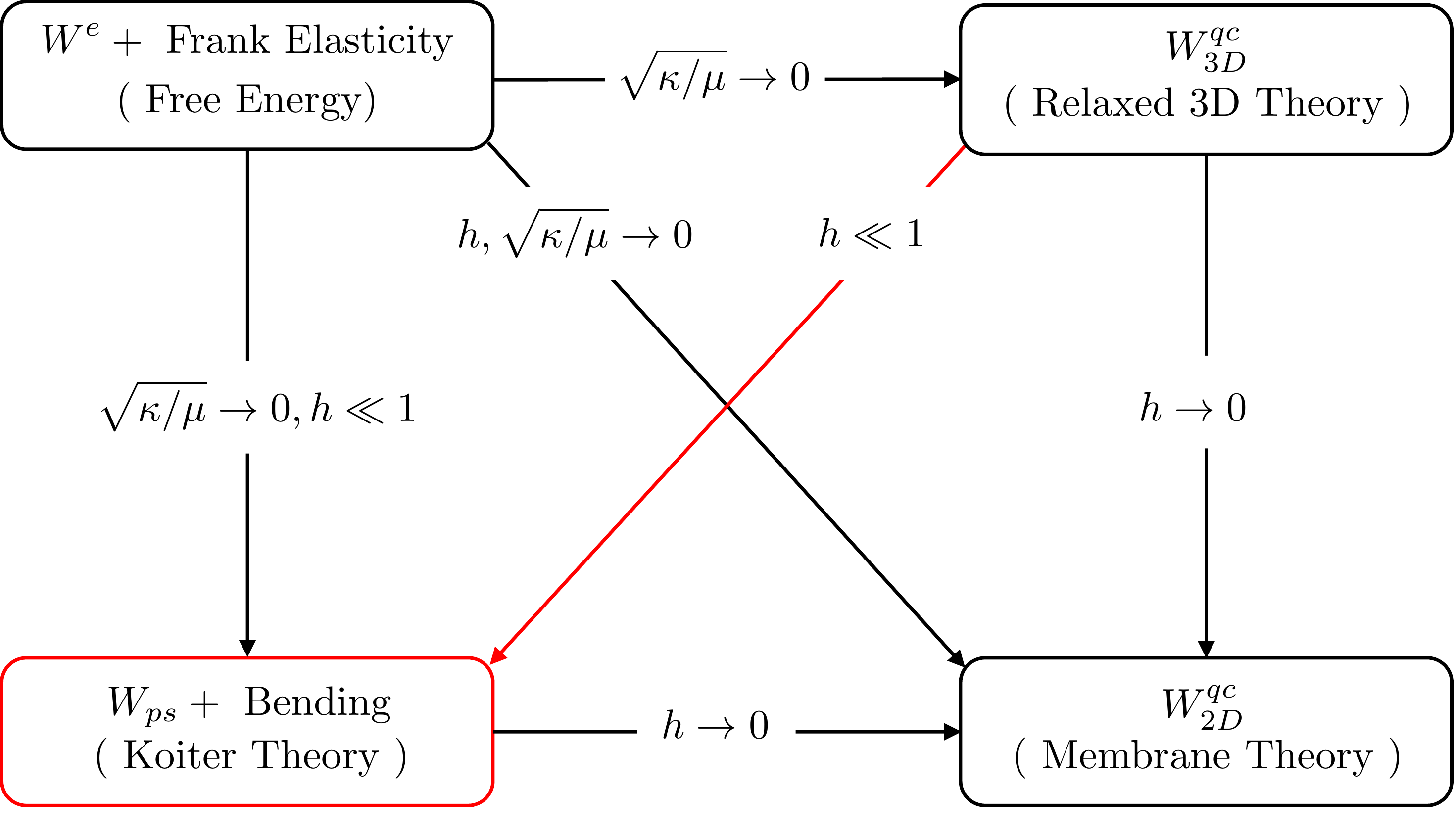}
\caption{Hierarchy of theories for nematic elastomers.}
\label{fig:Schematic1}
\end{subfigure}
\begin{subfigure}{\textwidth}
\vspace*{.2cm}
\centering
\includegraphics[width = 5.0 in]{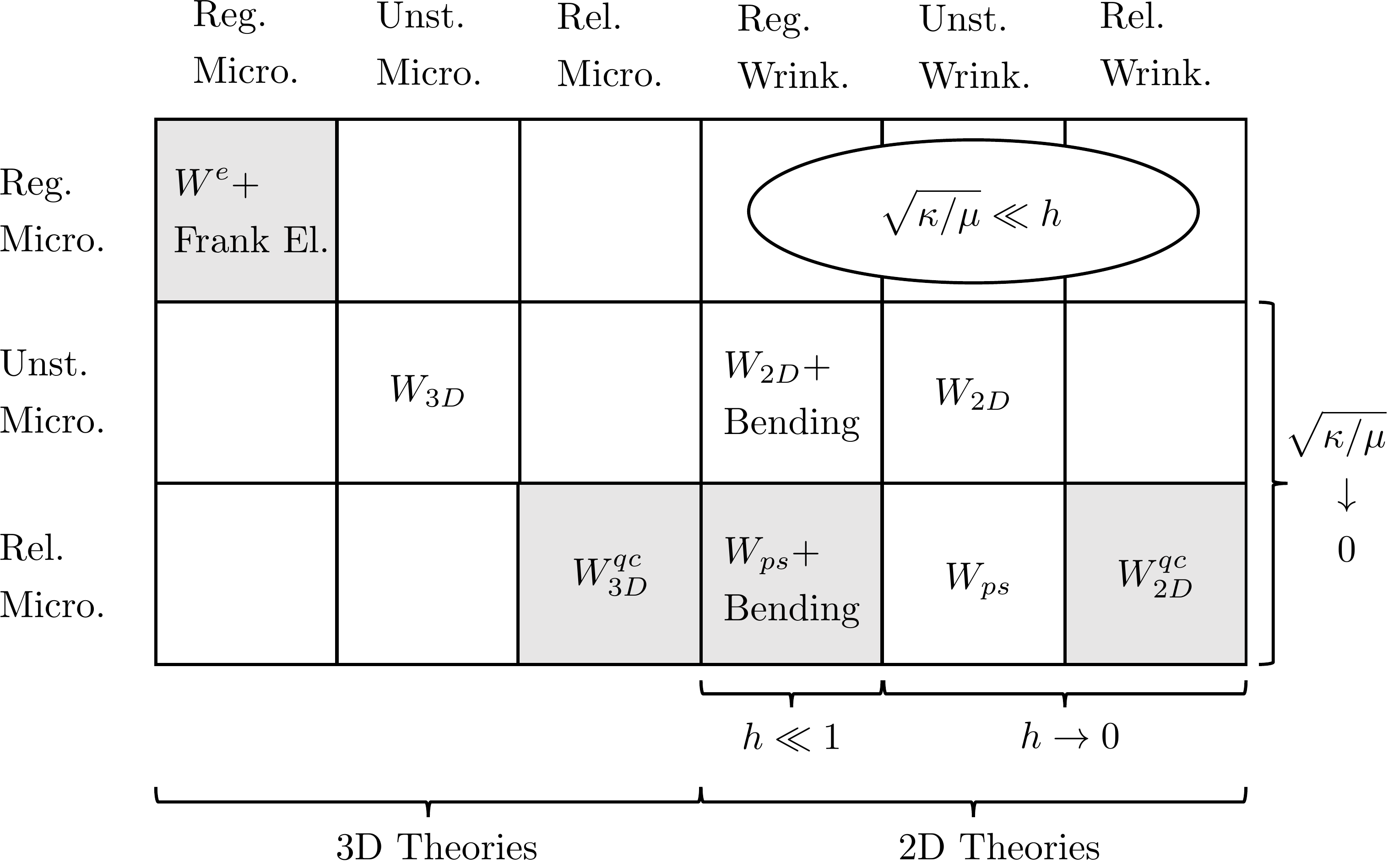}
\caption{Strain energy densities and their relation to instabilities.}
\label{fig:Schematic2}
\end{subfigure}
\caption{(a) Theories which properly account for the formation of microstructure in 3D and both microstructure and wrinkling in 2D.  Each theory can be derived in the treatment of the lengthscale as depicted. We focus here on deriving the Koiter theory (highlighted in red) starting from the relaxed 3D theory. Elsewhere \cite{p_thesis}, we derive it starting form the free energy, where there are simply more details to track. (b) The various strain energy densities of nematic elastomers and whether they are stable $-$ regularized (reg.)  or relaxed (rel.) $-$ or unstable (unst.) to instabilities: microstructure (micro.) in the case of 3D densities, and both microstructure and wrinkling (wrink.) in the case of 2D densities.  The stable theories are highlighted.  Starting from any strain energy density on this chart, the densities to the right, lower diagonal, and directly below can be derived under the treatment of the lengthscales as depicted.  In principal, the top right of this chart can be populated, though such theories are less physically relevant given the disparity in lengthscales.}
\label{fig:EnergiesCartoon}
\end{figure*}

\subsection{Some of the notation}
We denote with $\mathbb{R}^n$ the $n$ dimensional Euclidian space endowed with the usual scalar product $u \cdot v := u^T v$ and norm $|u|:= \sqrt{u \cdot u}$. We denote the unit sphere in $\mathbb{R}^n$ by $\mathbb{S}^{n-1}$ and it is defined as the set of all vectors $u \in \mathbb{R}^n$ with $|u| = 1$. We label with $\mathbb{R}^{m\times n}$ the space of $m \times n$ matrices with real entries. For $n > 1$, we denote with $SO(n)$  the space of rotation matrices (i.e., each $F \in \mathbb{R}^{n\times n}$ such that $F^TF = I$ and $\det F = 1$).  We take $\mathbb{R}^+$ to be the set of non-negative real numbers.  

We often describe the material points of a three dimensional solid with $x := x_1e_1 + x_2 e_2 + x_3 e_3$ for the fixed right-handed orthonormal basis $\{e_1, e_2, e_3\} \subset \mathbb{R}^3$ depicted in Figure \ref{fig:CSKF}.  Similarly, we often describe the material points of a two dimensional sheet with $\tilde{x} := x_1 \tilde{e}_1 + x_2 \tilde{e}_2$ for the analogous two dimensional orthonormal basis $\{\tilde{e}_1, \tilde{e}_2\} \subset \mathbb{R}^2$.   As with the basis vectors and material points, we use tilde as a mean to distinguish between two dimensional and three dimensional quantities (if there is no conflict with previous notation). So we use $F \in \mathbb{R}^{3\times3}$ to describe the deformation gradient of a solid and $\tilde{F} \in \mathbb{R}^{3\times2}$ to describe the planar deformation gradient of a sheet; we denote with $\nabla$  the three dimensional gradient (with respect to $x$) and $\tilde{\nabla}$ the planar gradient (with respect to $\tilde{x}$); $\ldots$; etc.  

Lastly, we find it natural at points to introduce certain mathematical concepts: $W^{1,p}$ Sobolev Spaces and weak convergence (i.e., $\rightharpoonup$) in these spaces, quasiconvexification as a means of relaxation, $\Gamma$-convergence as a means of dimension reduction and the theory of gradient Young measures for characterizing instabilities.  We refer to Evans \cite{e_pde_98}, Dacorogna \cite{dac_book_07}, Braides \cite{b_02_gcbook} and M\"{u}ller \cite{m_99_cov} respectively for introductions into these concepts.

\subsection{Overview on the hierarchy of theories for nematic elastomers}
In this work, we systematically develop a two dimensional Koiter theory for nematic elastomer sheets by both (i) starting from an appropriate three dimensional description of these elastomers \cite{btw_pre_93,dd_arma_02,vwt_jphys_96,wt_lceboox_03}, and (ii) building off of the development of the effective or relaxed membrane theory \cite{cpk_arma_15}.  Thus, in the course of this development, we find it natural to introduce several variants of strain energy densities modeling nematic elastomers, for which their is a well-characterized hierarchy.  The hierarchy is related to the mathematical treatment of small-length scales (i.e., $\sqrt{\kappa/\mu}$ and $h$) inherent to nematic elastomer sheets, and how instabilities (both wrinkling and microstructure) are accounted for in this treatment.  

Briefly (all of this is expanded upon in Sections \ref{sec:TheoryBack} and \ref{sec:k}), a three dimensional nematic elastomer can form fine-scale microstructure on a lengthscale $\sqrt{\kappa/\mu}$ related to the competition between entropic and Frank elasticity in these solids.  Here, $\mu$ is the shear modulus of the polymer network, and $\kappa$ is the characteristic modulus penalizing deviations in the average ordering of liquid crystals away from the preferred spatially uniform alignment.  In addition to this microstructure, a sheet of nematic elastomer may wrinkle since the thickness $h$ is small compared to the lateral extent of the sheet.  In typical sheets, $\sqrt{\kappa/\mu} \ll h$.  To guide the reading of Sections \ref{sec:TheoryBack} and \ref{sec:k}, we provide a visual summary of the theories which emerge in the competition of these two lengthscales and their hierarchy (Figure \ref{fig:EnergiesCartoon}).

%%%%%%%%%%%%%%%%%%%%%%%%%%%%%%%%%%%%%%%%%
%%%%%%%%%%%%%%%%%%%%%%%%%%%%%%%%%%%%%%%%%
\section{Background}\label{sec:TheoryBack}

%%%%%%%%%%%%%%%%%%%%%%%%%%%%%%%%%%%%%%%%%
%%%%%%%%%%%%%%%%%%%%%%%%%%%%%%%%%%%%%%%%%
\subsection{Free energy density for nematic elastomers}\label{ssec:FreeEnergy}

 A widely accepted theory for the free energy or entropic elasticity of nematic elastomers is due to Bladon {\it et al.} \cite{btw_pre_93} (see also Warner and Terentjev \cite{wt_lceboox_03}).  As with the classical neo-Hookean model for rubbery solids, this formulation emerges from the statistics of polymer chain conformations, with the caveat being that the distribution properly account for nematic anisotropy associated with the liquid crystal rod-like molecules.  The free energy has the form
\begin{align}\label{eq:We}
W^e(F,n) := \frac{\mu}{2}\begin{cases}
\Tr( F^T \ell_{n}^{-1} F) - 3, &\text{ if } n \in \mathbb{S}^2, \det F = 1 \\
+\infty &\text{ otherwise},
\end{cases}
\end{align}
where $F\in \mathbb{R}^{3\times3}$ is the deformation gradient from the isotropic reference configuration,
\begin{align}
\ell_n := r^{-1/3}( I + (r-1) n \otimes n) \label{eq:nStepLength}
\end{align}
is the step-length tensor that incorporates the nematic anisotropy, $n$ is the director in the current configuration and $r \geq 1$ is an order parameter characterizing the shape changing response due to the local alignment of the liquid crystal molecules.  Notice that incompressibility is imposed since nematic elastomers are polymers which are nearly incompressible,
and $n$ is restricted to be a unit vector.  This energy is frame-indifferent and isotropic.

At high temperatures, thermal fluctuations suppress nematic ordering, leading to an isotropic polymer represented by $r = 1$.  In this case, $\ell_n = I$ and $W^e$ reduces to the incompressible neo-Hookean model.  As the temperature is lowered, the tendency for the liquid crystal molecules to align is increased, leading to a more pronounced average alignment modeled by increasing $r$.  The energy reaches its minimum (i.e., it is zero) when the left Cauchy-Green tensor $FF^T = \ell_n$, reflecting the spontaneous elongation along the director and contraction transverse to it.  Notice that larger $r$ corresponds to greater shape changing zero-energy distortion.  For the analysis herein, we keep the temperature fixed and consider only mechanical boundary conditions, so $r$ is simply a fixed constant greater than or equal to 1.

In the experiments of Kundler and Finkelmann \cite{kf_mrc_95} depicted in Figure \ref{fig:CSKF}, the undeformed state is a monodomain sample with director alignment in the vertical direction.  This monodomain synthesis is quite common for nematic elastomers, and one can adapt the entropic elasticity (\ref{eq:We}) to appropriately account for deformation and stress in these samples by a change of reference configuration.  One simply substitutes
\begin{align}\label{eq:isoMono}
F = F_{m} \ell_{n_0}^{1/2}
\end{align}
into the entropic elasticity (\ref{eq:We}) to obtain the appropriate theory \cite{wt_lceboox_03}.   Here, $\ell_{n_0}^{1/2}$ is the square-root of the step-length tensor (\ref{eq:nStepLength}) with $n_0$ replacing $n$, $n_0$ represents the director prior to deformation (e.g., $n_0 = e_2$ in the example in Figure \ref{fig:CSKF}) and $F_m \in \mathbb{R}^{3\times3}$ represents the deformation gradient from the monodomain state.  Later on, we introduce effective and two dimensional theories for nematic elastomers which are systematically derived from the entropic energy (\ref{eq:We}) (deformed from an isotropic reference state).  For a monodomain sheet in which the initial director $n_0$ is in the plane of the sheet, an analogous substitution to (\ref{eq:isoMono}) properly accounts for the distinction between the isotropic reference state and the monodomain reference state in  the resulting two dimensional theories (we show this elsewhere \cite{p_thesis}).

A second contribution to the energy is the Frank elasticity which reflects the and elastic resistance due to deviation of the director field away from the preferred uniform alignment.  It is  given by (see for example De Gennes and Prost \cite{dgp_book_95})
\begin{equation}
W^n = \frac{1}{2} \kappa_1 (\text{div }n)^2 + \frac{1}{2} \kappa_2 (n \cdot \text{curl }n)^2 + 
\frac{1}{2} \kappa_3 (n \times \text{curl }n)^2
\end{equation}
where all the derivatives are spatial.  Since the moduli $\kappa_1, \kappa_2, \kappa_3$ are close to each other, we can approximate this energy (and bound it from above and below as)
\begin{equation}
W^n \approx \frac{1}{2} \kappa |\nabla_y n|^2 
=  \frac{1}{2} \kappa | (\nabla n) F^{-1}|^2 
=  \frac{1}{2} \kappa | (\nabla n) (\text{cof } F)^T|^2 
\end{equation}
where the final equality holds when $\det F =1$ as $\cof F \in \mathbb{R}^{3\times 3}$ denotes the cofactor matrix of $F$.

So, the free energy of a specimen occupying a region $\Omega$ in the isotropic reference state  under the deformation $y$ and with current director field $n$ may be written as
\begin{equation} \label{eq:e}
{\mathcal E}(y,n) = \int_\Omega \left\{ W^e (\nabla y, n) + \frac{1}{2} \kappa | (\nabla n) (\text{cof } \nabla y)^T|^2 \right\} dx. 
\end{equation}

Verwey {\it {\it et al.}\ } \cite{vwt_jphys_96} argued that stripe domains with alternating directors can emerge as minimizers of this energy functional, with transition zones with the length-scale $\sqrt{\kappa/\mu}$.  We note that this length-scale is on the order of $1-50 nm$, much smaller than the typical thickness of a film which is of the order $10-200 \mu m$ \cite{wt_lceboox_03}.

%%%%%%%%%%%%%%%%%%%%%%%%%%%%%%%%%%%%%%%%%
%%%%%%%%%%%%%%%%%%%%%%%%%%%%%%%%%%%%%%%%%
\subsection{A macroscopic three dimensional description via relaxation}\label{ssec:3Dsec}

\begin{figure}
\centering
\includegraphics[width = 6.0 in]{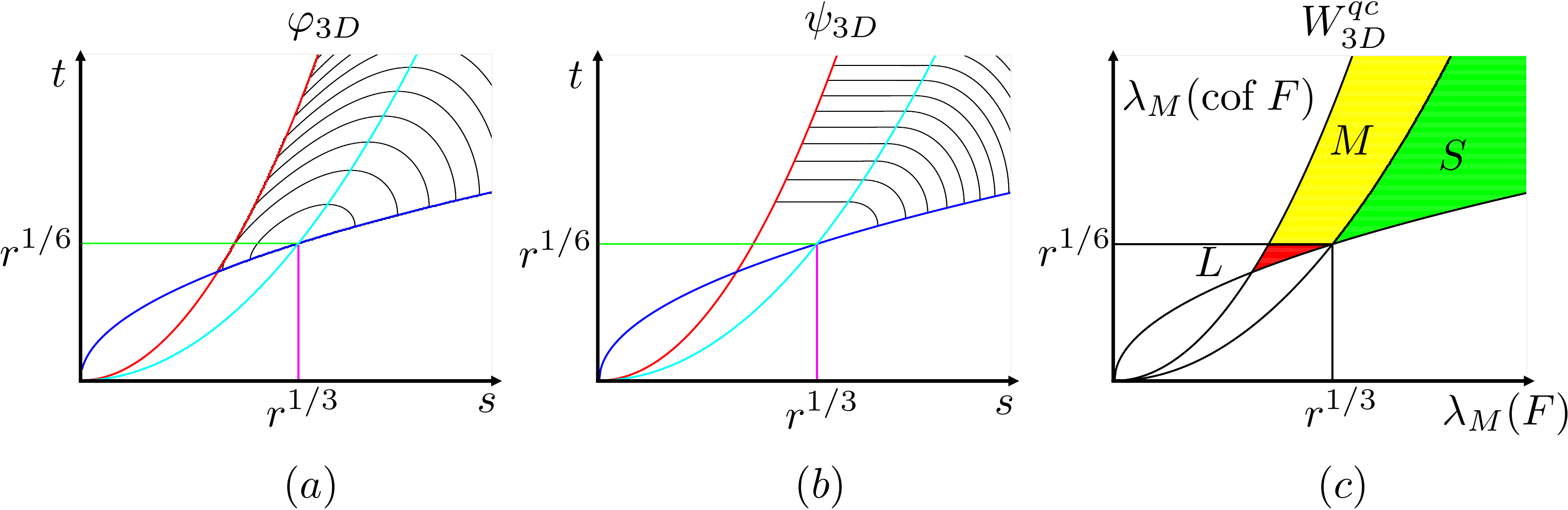}
\caption{Macroscopic three-dimensional energy of nematic elastomers following DeSimone and Dolzmann \cite{dd_arma_02}.  (a) Contour plots of the function $\varphi_{3D}$ that describes the entropic elastic energy $W_{3D}$,  (b) Contour plots of the function $\psi_{3D}$ that describes the relaxed elastic energy $W_{3D}^{qc}$ (i.e., one that implicitly accounts for microstructure and (c) Identification of the regions $L$, $M$ and $S$. }
\label{fig:3DEnergies}
\end{figure}
 
In this section, we recall the results of DeSimone and Dolzmann \cite{dd_arma_02} concerning the macroscopic behavior of nematic elastomers.
Since the Frank energy is small, we can neglect it while studying the behavior of specimens that are large compared to $\sqrt{\kappa/\mu}$.  Thus, we can define a purely mechanical energy density by minimizing out the effect of the director $n$ in $W^e$.  This is given by
\begin{align}\label{eq:W3D}
W_{3D}(F) := \inf_{n \in \mathbb{S}^2} W^e(F,n) = \begin{cases}
\varphi_{3D}(\lambda_M(F), \lambda_M(\cof F))&\text{ if } \det F = 1\\
+\infty & \text{ otherwise},
\end{cases}
\end{align}
where
\begin{align}
\varphi_{3D}(s,t) := \frac{\mu}{2} \left(r^{1/3} \left(\frac{s^2}{r} + \frac{t^2}{s^2} + \frac{1}{t^2}\right) - 3 \right),
\end{align}
$\lambda_M(F)$ is the maximum singular value of $F$ (i.e., the largest principal stretch or the square-root of the maximum eigenvalue of $F^TF$ and $F F^T$) and $\lambda_M(\cof F)$ is the maximum singular value of  $\cof F \in \mathbb{R}^{3\times3}$, (it is easy to show that the this is also equal to the product of the largest two principal values of $F$).  
%Physically, a director $n^{\ast}$ which achieves the minimization above coincides with any direction of maximum stretch for the deformation $F$, i.e.,
%\begin{align}\label{eq:lambdaM}
%W_{3D}(F) = W^e(F,n^{\ast}) \neq + \infty \quad \Rightarrow \quad \lambda_M(F) = |F^T n^{\ast}|.
%\end{align} 
A contour plot of $\varphi_{3D}$ is given in Figure \ref{fig:3DEnergies}(a).

This energy density is not quasiconvex.  Thus, fine-scale microstructure can drive energy minimization in the variational formulation of the elastic energy with this strain energy density, and this leads to a possible non-existence of minimizers.  We account for this by replacing $W_{3D}$ with its relaxation.  Mathematically, this is the quasiconvex envelope of $W_{3D}$ (see Dacorogna \cite{dac_book_07}), 
\begin{align}\label{eq:W3Dqc1}
W_{3D}^{qc}(F) = \inf \left\{ \fint_{\Omega} W_{3D}(F + \nabla \phi) dx \colon  \phi \in W^{1,\infty}_0(\Omega,\mathbb{R}^3) \right\}
\end{align}
where $W^{1,\infty}_{0}(\Omega,\mathbb{R}^3)$ is the space of Lipschitz continuous functions $\phi \colon \Omega \rightarrow \mathbb{R}^3$ which vanish on the boundary of $\Omega$, and $\fint_{\Omega} = \frac{1}{|\Omega|} \int_{\Omega}$ averages the energy density over $\Omega$.  

DeSimone and Dolzmann \cite{dd_arma_02} computed the analytical expression for $W_{3D}^{qc}$ for $W_{3D}$ in (\ref{eq:W3D}), 
\begin{align}\label{eq:W3DQC}
W_{3D}^{qc}(F) = \begin{cases}
\psi_{3D}( \lambda_M(F), \lambda_M( \cof F)) &\text{ if } \det F= 1\\
+\infty & \text{ otherwise}
\end{cases}
\end{align}
where
\begin{align}
&\psi_{3D}(s,t) := \frac{\mu}{2}\begin{cases}
0 &\text{ if } (s,t) \in L \\
r^{1/3}\left(\frac{2t}{r^{1/2}} + \frac{1}{t^2} \right)- 3 &\text{ if } (s,t) \in M \\
r^{1/3}\left(\frac{s^2}{r} + \frac{t^2}{s^2} + \frac{1}{t^2} \right)- 3   &\text{ if }  (s,t) \in S
\end{cases}, \\
&L := \{(s,t) \in \mathbb{R}^{+}\times \mathbb{R}^{+} \colon t \leq s^2,\; t \leq r^{1/6}, \;t \geq s^{1/2} \}, \\
&M := \{(s,t) \in \mathbb{R}^{+} \times \mathbb{R}^{+} \colon t \leq s^2, \;t \geq r^{-1/2} s^2,\; t \geq r^{1/6} \}, \label{eq:setM}\\
&S := \{(s,t) \in \mathbb{R}^{+} \times \mathbb{R}^{+} \colon t \leq r^{-1/2} s^2,\; t \geq s^{1/2} \}.
\end{align}

A contour plot of $W_{3D}^{qc}$ is given in Figure \ref{fig:3DEnergies}(b), and the regions $L$ (of {\it liquid}-like behavior), $M$ (related to stressed {\it microstructure}) and $S$ (of normal {\it solid} behavior) are identified in in Figure \ref{fig:3DEnergies}(c).   Specifically, note that the relaxation $W_{3D}^{qc}$ deviates from the energy density $W_{3D}$ in regions $L$ and $M$ in Figure \ref{fig:3DEnergies}(c).  Importantly, these are the regions where macroscopic deformation can be accommodated by fine-scale oscillations in the director field of a nematic elastomer, resulting in the relaxation having lower energy in region $M$ and zero energy in region $L$.  These features were used by Conti {\it et al.} \cite{cdd_02_pre,cdd_02_jmps} to explain soft elasticity and the complex deformation states in the clamped-stretch experiments of Kundler and Finkelmann \cite{kf_mrc_95} (Figure  \ref{fig:CSKF}) assuming purely planar deformations.

%%%%%%%%%%%%%%%%%%%%%%%%%%%%%%%%%%%%%%%%%
%%%%%%%%%%%%%%%%%%%%%%%%%%%%%%%%%%%%%%%%%
\subsection{Membrane (tension field) theory for nematic elastomer sheets}\label{ssec:EffectiveTheory}

Cesana {\it {\it et al.}\ }\cite{cpk_arma_15} developed a membrane (tension field) theory for nematic elastomers, and we recall their results, emphasizing the ideas that are necessary for our later development.  They consider a  sheet of small thickness $h$ and lateral extent $\omega \subset {\mathbb R}^2$ ($\Omega_h = \omega \times (-h/2, h/2)$), and follow methods in LeDret and Raoult \cite{lr_95_jmpa} and Conti and Dolzmann \cite{cd_06_chap} to study the asymptotic behavior as $h \rightarrow 0$ of the functional $\frac{1}{h} {\mathcal E}^h$ by $\Gamma$-convergence (where ${\mathcal E}^h$ denotes the energy (\ref{eq:e}) of a sheet initially occupying a region $\Omega_h$).  They show that the behavior of very thin sheets is described by the following energy
\begin{equation}
{\mathcal E}_{m}(y) = \int_\omega W_{2D}^{qc} (\tilde \nabla y) d\tilde x
\end{equation}
as the $\Gamma$-limit, a result which is independent of the ratio of $\kappa$ to $h$ (as long as $\kappa \rightarrow 0$ as $h \rightarrow 0$).  
Here, quantities with a tilde denote quantities on the midplane $\omega$ of the membrane, and $y: \omega \to {\mathbb R}^3$ denotes a deformation of the midplane of the membrane.  Thus, $\tilde \nabla y$ maps to  ${\mathbb R}^{3\times 2}$.

\begin{figure}
\centering
\includegraphics[width = 6.0 in]{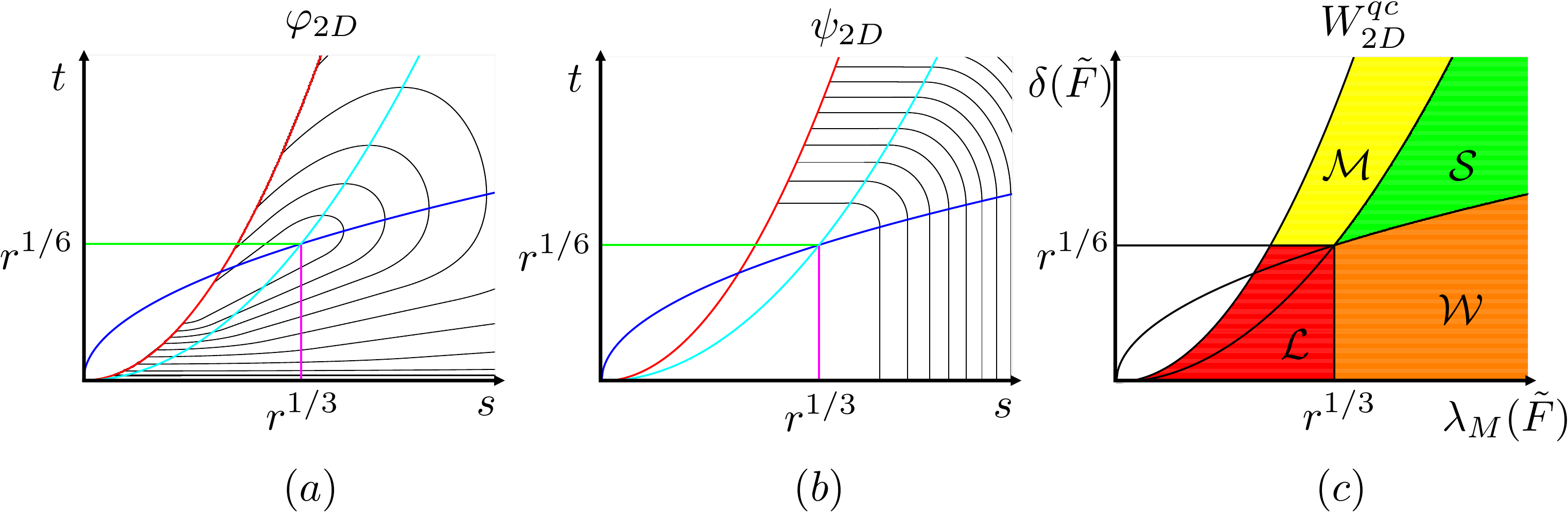}
\caption{Membrane energy density of nematic elastomers following Cesana {\it {\it et al.}\ }\cite{cpk_arma_15}.  (a) Contour plots of the function $\varphi_{2D}$ that describes the plane stress energy $W_{2D}$,  (b) Contour plots of the function $\psi_{3D}$ that describes the relaxed membrane energy $W_{2D}^{qc}$ (i.e., one that implicitly accounts for microstructure and wrinkling) and (c) identification of the regions $\mathcal{L}, \mathcal{M}, \mathcal{W}$ and $\mathcal{S}$.  Microstructure or stripe domains occur in the region ${\mathcal M}$, wrinkling in region ${\mathcal W}$, crumpling and microstructure in region ${\mathcal L}$ and no relaxation in region ${\mathcal S}$.}
\label{fig:2DEnergies}
\end{figure}

The the analytical expression for the membrane energy density $W_{2D}^{qc}$ is obtained from $W_{3D}$ in two steps (the derivation is provided in \cite{cpk_arma_15}).  In the first step, we obtain a two dimensional or plane-stress reduction of $W_{3D}$ by minimizing over the out-of-plane deformation gradient 
\begin{align}\label{eq:W2Ddef}
W_{2D}(\tilde{F}) := \inf_{b \in \mathbb{R}^3} W_{3D}(\tilde{F} |b)
\end{align}
for any planar deformation gradient $\tilde{F} \in \mathbb{R}^{3\times2}$.  This takes the explicit form
\begin{align}\label{eq:W2D}
W_{2D}(\tilde{F}) = \begin{cases} 
\varphi_{2D}(\lambda_M(\tilde{F}), \delta(\tilde{F})) & \text{ if } \rank \tilde{F} = 2  \\
+\infty  &\text{ otherwise},
\end{cases}
\end{align}
where
\begin{align}
&\varphi_{2D}(s,t) := \frac{\mu}{2} \begin{cases}
r^{1/3}\left( \frac{s^2}{r} + \frac{t^2}{s^2} + \frac{1}{t^2}\right) - 3  & \text{ if } (s,t) \in \mathcal{N}_1 \\
 r^{1/3} \left(\frac{t^2}{s^2} + \frac{2s}{r^{1/2} t} \right) - 3 & \text{ if } (s,t) \in \mathcal{N}_2 \\
 r^{1/3} \left( s^2 + \frac{t^2}{s^2} + \frac{1}{r t^2}\right) - 3 & \text{ if } (s,t) \in \mathcal{N}_3 \\
 \end{cases},\\
 &\mathcal{N}_1 := \{(s,t) \in \mathbb{R}^+ \times \mathbb{R}^{+} \colon t \leq s^2, \; t \geq r^{1/2} s^{-1} \}, \\
&\mathcal{N}_2 := \{(s,t) \in \mathbb{R}^+ \times \mathbb{R}^{+} \colon t \leq s^2, \;  r^{-1/2} s^{-1} \leq t \leq r^{1/2} s^{-1} \}, \\
&\mathcal{N}_3 := \{(s,t) \in \mathbb{R}^+  \times \mathbb{R}^{+} \colon t \leq s^2, \; t \leq r^{-1/2} s^{-1} \}.
\end{align}
Above, $\lambda_M(\tilde{F}) := \sup_{e \in \mathbb{S}^2} |\tilde{F}^Te|$ denotes the maximum principal value of $\tilde F$ and $\delta(\tilde F):= |\tilde{F} \tilde{e}_1 \times \tilde{F} \tilde{e}_2|$ denotes the product of the two principal values of $\tilde F$.   Physically, since we are considering the deformations of the plane, the isotropic invariants of the deformation are given by the principal stretch $\lambda_M$ and areal stretch $\delta$.
The contour plot of $\varphi_{2D}$ is shown in Figure \ref{fig:2DEnergies}(a).

As with $W_{3D}$, $W_{2D}$ is not quasiconvex.  In this case, in addition to microstructure in the form of oscillations in nematic orientation relaxing the energy, boundary conditions which induce compressive stresses associated with $W_{2D}$ can be relaxed through out-of-plane wrinkling and crumpling.  We account for this in the second step through the relaxation of $W_{2D}$, 
\begin{align}\label{eq:W2Dqc}
W_{2D}^{qc}(\tilde{F})  := \inf \left\{ \fint_{\omega} W_{2D}(\tilde{F} + \tilde{\nabla} \phi) d\tilde{x} \colon \phi \in W^{1,\infty}_0(\omega,\mathbb{R}^3) \right\} =\psi_{2D} ( \lambda_M(\tilde{F}), \delta(\tilde{F})).
\end{align}
Here,
\begin{align}
&\psi_{2D}(s,t) := \frac{\mu}{2}\begin{cases}
0 &\text{ if } (s,t) \in \mathcal{L} \\
r^{1/3}\left(\frac{2t}{r^{1/2}} + \frac{1}{t^2}\right) - 3 &\text{ if } (s,t) \in \mathcal{M} \\ 
r^{1/3} \left(\frac{s^2}{r} + \frac{2}{s} \right) - 3 &\text{ if } (s,t) \in \mathcal{W} \\
r^{1/3} \left(\frac{s^2}{r} + \frac{t^2}{s^2} + \frac{1}{t^2}\right) - 3   &\text{ if }  (s,t) \in S
\end{cases}, \\
&\mathcal{L} := \{(s,t) \in \mathbb{R}^{+}\times \mathbb{R}^{+} \colon t \leq s^2, t \leq r^{1/6}, s \leq r^{1/3} \}, \\
&\mathcal{M} := \{(s,t) \in \mathbb{R}^{+} \times \mathbb{R}^{+} \colon t \leq s^2, t \geq r^{-1/2} s^2, t \geq r^{1/6} \}, \\ 
&\mathcal{W} := \{(s,t) \in \mathbb{R}^{+} \times \mathbb{R}^{+} \colon t \leq s^{1/2}, s \geq r^{1/3} \}, \label{eq:w} \\
&\mathcal{S} := \{(s,t) \in \mathbb{R}^{+} \times \mathbb{R}^{+} \colon t \leq r^{-1/2} s^2, t \geq s^{1/2} \}.
\end{align}
The contour plot of $\psi_{2D}$ is shown in Figure \ref{fig:2DEnergies}(b), and the various regions\footnote{The notation $\mathcal{S}$ and $\mathcal{L}$ follows DeSimone and Dolzmann \cite{dd_arma_02} in their derivation of the relaxed three dimensional theory.  However, one should avoid the interpretation of {\it liquid}-like behavior for $\mathcal{L}$ and normal {\it solid} behavior for $\mathcal{S}$. $\mathcal{L}$ is associated both to crumpling $-$ which is also exhibited by normal thin solids $-$ as well as liquid-like features due to microstructure.  $\mathcal{S}$ is simply a region without instability.} $\mathcal{L}$ (of zero energy), $\mathcal{M}$ (related to stressed {\it microstructure}), $\mathcal{W}$ (related to {\it wrinkling}) and $\mathcal{S}$ (without instability) are shown in 
Figure \ref{fig:2DEnergies}(c).

It is instructive to look at the stress that results from this theory.  We can obtain the effective Cauchy stress of a nematic elastomer membrane as 
\begin{align}\label{eq:Cauchy}
\sigma^{mem} := (W_{2D}^{qc}),_{\tilde{F}} \tilde{F}^T, \quad \text{i.e.,} \quad (\sigma^{mem})_{ij} := (W_{2D}^{qc}),_{\tilde{F}_{i \alpha}} \tilde{F}_{j \alpha}, \quad \alpha = 1,2, \quad i,j = 1,2,3.
\end{align} 
To compute it explicitly, we use the singular value decomposition theorem to write
\begin{align}\label{eq:SVD}
\tilde{F} = \bar{\lambda}_M g_1 \otimes \tilde{f}_1 + \bar{\lambda}_m g_2 \otimes \tilde{f}_2
\end{align}
for orthonormal vectors $\{\tilde{f}_1, \tilde{f}_2\} \subset \mathbb{R}^2$ and $\{g_1, g_2\} \subset \mathbb{R}^3$ with $\bar{\lambda}_M \geq \bar{\lambda}_m \geq 0$ the singular values of $\tilde{F}$ (so that $\bar{\delta} = \bar{\lambda}_m \bar{\lambda}_M$). We find (again see \cite{cpk_arma_15}),
\begin{align}\label{eq:Cauchy1}
\sigma^{mem} = \mu r^{1/3} \begin{cases}
0 & \text{ if } (\bar{\lambda}_M,\bar{\delta}) \in \mathcal{L} \\
\left(\frac{\bar{\delta}}{r^{1/2}} - \frac{1}{\bar{\delta}^2} \right) (g_1 \otimes g_1 + g_2 \otimes g_2)  & \text{ if } (\bar{\lambda}_M ,\bar{\delta}) \in \mathcal{M} \\
\left(\frac{\bar{\lambda}_M^2}{r} - \frac{1}{\bar{\lambda}_M} \right) g_1 \otimes g_1 & \text{ if } (\bar{\lambda}_M, \bar{\delta}) \in \mathcal{W} \\
\left( \frac{\bar{\lambda}_M^2}{r} - \frac{1}{\bar{\delta}^2}\right) g_1 \otimes g_1 \left(\frac{\bar{\delta}^2}{\bar{\lambda}_M^2} - \frac{1}{\bar{\delta}^2} \right) g_2 \otimes g_2 & \text{ if } (\bar{\lambda}_M, \bar{\delta}) \in \mathcal{S}. 
\end{cases}
\end{align}

This formula highlights some striking features the membrane theory for nematic elastomers.  For one, the membrane is always in a state of plane stress in the tangent plane.  Secondly, the principal stresses (i.e., the eigenvalues of $\sigma^{mem}$) are always non-negative. Therefore, these membranes cannot sustain compressive stress.  Further, the stress is zero in region $\mathcal{L}$ where crumpling and microstrucutre relax the energy to zero, uniaxial tension in $\mathcal{W}$ where tension wrinkling relaxes the energy, equi-biaxial tension in $\mathcal{M}$ where microstructure relaxes the energy and biaxial tension in $\mathcal{S}$ where no fine-scale features emerge to relax the energy.  

For perspective, consider the special case $r = 1$ when this theory reduces to that of the neo-Hookean elastic membrane (recall $W^e$ in (\ref{eq:We}) simplifies to the incompressible neo-Hookean energy density in this case).  The region $\mathcal{M}$ now disappears and we are left with regions $\mathcal{L}, \mathcal{W}$ and $\mathcal{S}$ with zero, uniaxial tension and biaxial tension respectively.  This is the tension field theory originally proposed by Mansfield \cite{m_70_prsa} and later obtained systematically from three dimensional elasticity by Pipkin \cite{p_86_jam} and expanded upon by Pipkin and Steigmann in \cite{s_90_prsa,sp_89_qjmam}.  In essence, the theory for nematic elastomer membranes with $r > 1$ generalizes the tension field theory for isotropic membranes to account for nematic anisotropy. 

A remarkable feature of nematic elastomers is the additional region $\mathcal{M}$ where the state of stress is equi-biaxial tension.  This is true for a large range of unequal principal stretches $(\bar{\lambda}_M, \bar{\lambda}_m)$.  In other words, {\it a nematic elastomer membrane can have shear strain without shear stress} in a certain range.

%%%%%%%%%%%%%%%%%%%%%%%%%%%%%%%%%%%%%%%%%
%%%%%%%%%%%%%%%%%%%%%%%%%%%%%%%%%%%%%%%%%
\section{Koiter theory for nematic elastomer sheets} \label{sec:k}

The membrane energy described in the previous section relaxes over both microstructure and wrinkling.  While this can provide insights, it is not sufficient for our present purpose of understanding the formation of wrinkles.  We seek, instead, a theory that can explicitly describe the wrinkles.  However, as mentioned in the introduction, nematic elastomers can also form microstructure, and for the sheets relevant to the clamped stretch experiments, this microstructure is very fine compared to the wrinkles since $\sqrt{\kappa/\mu} \ll h$.  Therefore, we seek a theory that both relaxes over microstructure and resolves wrinkles. This motivates a Koiter-type theory for nemetic elastomer sheets.  In this section, we first describe the theory and subsequently justify it.

\subsection{Koiter theory for nematic elastomer sheets} 
We consider a nematic sheet with midplane $\omega \subset \mathbb{R}^2$ in the isotropic reference state.  For a midplane deformation $y \colon \omega \rightarrow \mathbb{R}^3$ of the sheet,  we take the Koiter elastic strain energy to be
\begin{align}\label{eq:Koiter}
\mathcal{E}_{K}^h(y) := h \int_{\omega} \left( W_{ps}(\tilde{\nabla} y) + \frac{\mu r^{1/3} h^2}{6} | \II_y|^2 \right) d\tilde{x}.
\end{align}
Here, the first term is term proportional to the thickness and describes the in-plane deformation accounting for the microstructure.  It is given by the nematic plane-stress energy density\\
\begin{align}\label{eq:WpsDef}
& W_{ps} (\tilde{F}) := \varphi_{ps}( \lambda_M(\tilde{F}), \delta(\tilde{F})), \\
&
\varphi_{ps}(s,t) := \begin{cases}
\varphi_{2D} ( s,t) &\text{ if } (s,t) \in \mathcal{C}\\
\psi_{2D} ( s,t) &\text{ if } (s,t) \in \mathcal{L}_m \cup \mathcal{M} \cup \mathcal{S}
\end{cases}, \\
&\mathcal{L}_m := \{(s,t) \in \mathbb{R}^{+}\times \mathbb{R}^{+} \colon t \leq s^2, r^{1/6} \ge t \ge r^{-1/6}s \}, \\
&\mathcal{M} := \{(s,t) \in \mathbb{R}^{+} \times \mathbb{R}^{+} \colon t \leq s^2, t \geq r^{-1/2} s^2, t \geq r^{1/6} \}, \\ 
&\mathcal{S} := \{(s,t) \in \mathbb{R}^{+} \times \mathbb{R}^{+} \colon t \leq r^{-1/2} s^2, t \geq s^{1/2} \}, \\
&\mathcal{C} := \{(s,t) \in \mathbb{R}^{+} \times \mathbb{R}^{+} \colon t \leq s^{1/2},  t \le r^{-1/6}s, t \le s^{1/2}
 \} .
\end{align}
Further, the second term is proportional to the cube of the thickness and describes the bending.  It penalizes the second fundamental form of the deformed membrane,   
\begin{align}\label{eq:secFundForm}
\II_y := (\tilde{\nabla} y)^T \tilde{\nabla} \nu_y, \quad \nu_y := \frac{\partial_1 y \times \partial_2 y}{|\partial_1 y \times \partial_2 y|}.
\end{align}
Note that $\nu_y$ is the surface normal to the deformed midplane of the sheet.

\begin{figure}
\centering
\includegraphics[width = 5 in]{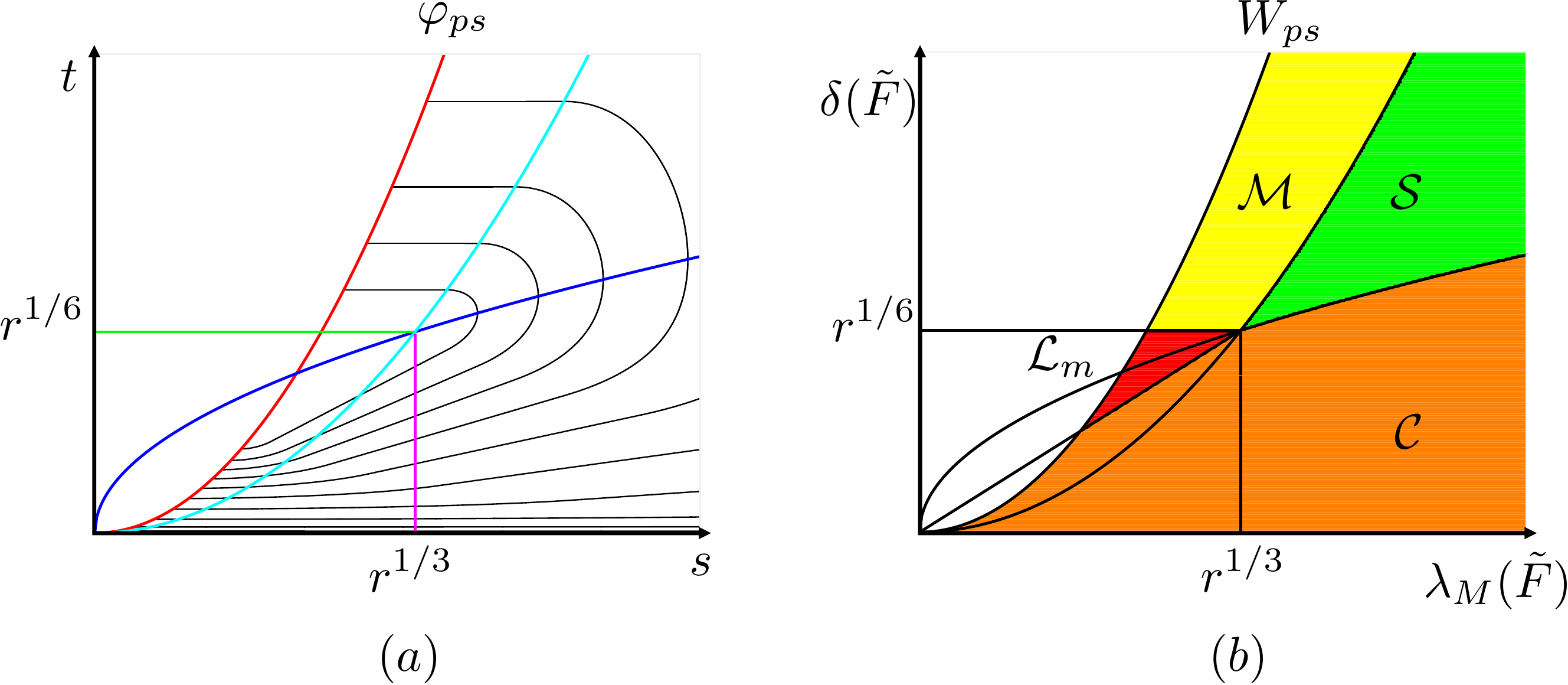}
\caption{The plane-stress energy density $W_{ps}$ for nematics.  (a) Contour plots of the function $\varphi_{ps}$ that describes the plane stress energy $W_{ps}$,  (b) identification of the regions $\mathcal{L}_m, \mathcal{M}, \mathcal{C}$ and $\mathcal{S}$.}
\label{fig:KEnergy}
\end{figure}

Before proceeding with the derivation of this theory (Section \ref{ssec:OutlineK}-\ref{ssec:TensEnergy}), we digress to comment on some of its more striking features:

First, the plane-stress energy density $W_{ps}$ has a revealing interpretation.  We note that it is actually derived from $W_{3D}^{qc}$, an energy density for which microstructure is already completely relaxed. (This is made precise in the coming sections: see, for instance, (\ref{eq:DeriveStretch})). Therefore, {\it microstructure is completely relaxed in this plane stress energy density}.  Particularly the region $\mathcal{L}_m$ (of {\it liquid}-like behavior) corresponds to the relaxation of microstructure and a state of zero energy and stress, and regions $\mathcal{M}$ and $\mathcal{S}$ are exactly as in the relaxed membrane theory.  

Next, note that wrinkling and crumpling are not relaxed in this theory as region $\mathcal{C}$ (which coincides with unstable regions of $W_{2D}$, see (\ref{eq:stretchEnergy})) is a region of {\it compressive} stresses in this plane-stress energy density.  These instabilities are instead regularized by bending (i.e., the second term in (\ref{eq:Koiter})).  

Next, turning to bending, note that modulus of bending here is $\mu r^{1/3} h^3/3$ (two times the constant in front of the bending term in (\ref{eq:Koiter}) as the modulus is by convention proportional to the stress, not the energy).  This can be rewritten in terms of the Young's Modulus $E$ as $Eh^3 r^{1/3}/9$ since nematic elastomers are incompressible with a Poisson's ratio of $1/2$.  In setting $r = 1$, corresponding to an incompressible neo-Hookean sheet, this reduces to $Eh^3/9=D$ for $D$ the bending modulus of an incompressible and isotropic plate of initial thickness $h$ in classical plate theory.  Thus, this bending modulus is  consistent with classical plate theory.  

As a final comment related to bending, note that changes in thickness associated with finite-deformation are properly accounted for even though it appears that the modulus only depends on the initial thickness $h$. This is a consequence of the fact that the second fundamental form (\ref{eq:secFundForm}) is computed with respect to the reference configuration.  For instance, imagine first deforming the specimen from the stress-free isotropic configuration $\omega$ to a stretched configuration $\omega_{\lambda} := (\lambda \tilde{e}_1 \otimes \tilde{e_1} + \lambda^{-1/2} \tilde{e}_2 \otimes \tilde{e}_2) \omega$ (i.e., accounting for the natural lateral contraction due to incompressibility). Then imagine measuring bending transverse to this stretch while taking $\omega_{\lambda}$ as the reference configuration.  In this scenario, the modulus of bending is  $\mu r^{1/3} h_{\lambda}^3/3$ for $h_{\lambda} = h \lambda^{-1/2}$.  Thus as expected, the modulus depends on the deformed thickness associated with the natural transverse contraction of an incompressible sheet under stretch.  

\subsection{Outline of the derivation of the Koiter Theory}\label{ssec:OutlineK}
We now sketch the derivation of the Koiter theory: First, we show how the plane stress energy generically emerges from the relaxed three dimensional free energy for any given midplane deformation.  Then, we outline the argument for the bending term, which accounts for the energy due to tension wrinkling. 

Let us first consider a nematic elastomer sheet with finite but small thickness $h \ll 1$ occupying the region $\Omega_h := \omega \times (-h/2,h/2)$ where $\omega \subset \mathbb{R}^2$ is the midplane in its isotropic reference state.  We assume that the microstructure is fully-relaxed so that a deformation of this region $y^h \colon \Omega_h \rightarrow \mathbb{R}^3$ is subject to a strain energy 
\begin{align}\label{eq:E3DQC}
\mathcal{E}_{3D}^h(y^h) := \int_{\Omega_h} W_{3D}^{qc}(\nabla y^h) dx.
\end{align}
Since we are dealing with a thin sheet, we make the ansatz that any low energy deformation $y^h$ with corresponding midplane deformation $y(\tilde{x}) := y^h(\tilde{x},0)$ behaves to leading order in $x_3$ as  
\begin{align} \label{eq:ext}
y^h(\tilde x, x_3) \approx y (\tilde x) + x_3 b(\tilde{x})
\end{align}
for some vector $b\colon \omega \to {\mathbb R}^3$.  (Note, this approximation can be obtained by Taylor expanding $y^h$ in $x_3$ about $x_3 = 0$, observing that $x_3 \in (-h/2,h/2)$ is small, and neglecting terms of $O(x_3^2)$ and above). Thus, assuming the $y^h$ is incompressible (i.e., $\det \nabla y^h = 1$ on $\Omega_h$), we find, in substituting (\ref{eq:ext}) into (\ref{eq:E3DQC}), that the energy for these generic deformations scales with the thickness as 
\begin{align} \label{eq:exp}
\mathcal{E}^h_{3D}(y^h) = h \int_{\omega} W_{3D}^{qc}(\tilde{\nabla} y|b) d \tilde x+ O(h^2)
\end{align}
due to the Lipschitz continuity of $W_{3D}^{qc}$ (on matrices with $\det F = 1$).  Now, in fixing the midplane deformation $y \colon \omega \rightarrow \mathbb{R}^3$, we observe that for sufficiently thin sheets, setting 
\begin{align}
b(\tilde{x}) = \arg\min_{\mathbb{R}^3} W_{3D}^{qc}(\tilde{\nabla} y(\tilde{x}) | \cdot ), \quad \tilde{x} \in \omega
\end{align}
approximately minimizes the relaxed three dimensional strain energy.   Therefore, we expect the energy for a generic midplane deformation of a sufficiently thin nematic elastomer sheet to behave as 
\begin{align}\label{eq:EnergyScale}
\mathcal{E}_{3D}^h(y^h) = h \int_{\omega} W_{ps} (\tilde{\nabla} y) d\tilde{x}  + O(h^2)
\end{align}
since actually 
\begin{align}\label{eq:DeriveStretch}
W_{ps}(\tilde{F}) = (W_{3D}^{qc})_{2D}(\tilde{F}) := \inf_{\mathbb{R}^3} W_{3D}^{qc}(\tilde{F}| \cdot)
\end{align}
(for this equality, see (\ref{eq:stretchEnergy}) and the discussion therein).  Thus, the first term in the energy (\ref{eq:EnergyScale}) coincides with the first term in the Koiter theory (\ref{eq:Koiter}).

We will show that $W_{ps}$ is not quasiconvex\footnote{Specifically, we show that $W_{ps}=W_{2D}^{qc}$ everywhere except the region ${\mathcal C}$ which crucially includes ${\mathcal W}$.}.  This means that there are certain midplane deformations $y$ which can be approximated by deformations $y_k \approx y$ which lower the energy through fine scale instabilities.  Mathematically, these midplane deformations have the property
\begin{equation}
\begin{aligned}\label{eq:lscfailure}
&\lim_{k \rightarrow \infty}  \int_{\omega} W_{ps}(\tilde{\nabla} y_k) d \tilde{x} = \int_{\omega} W_{2D}^{qc}(\tilde{\nabla} y) d \tilde{x} < \int_{\omega} W_{ps}(\tilde{\nabla} y) d\tilde{x}\\
&\quad \text{ for some } \quad  y_k \rightharpoonup y \text{ in } W^{1,2}(\omega,\mathbb{R}^3) \quad \text{ as } k \rightarrow \infty.
\end{aligned} 
\end{equation} 
In particular, this can happen for taut membranes, our topic of particular interest, when the deformation gradient $\tilde{F}_0 \in \mathbb{R}^{3\times2}$ corresponds to a point in the region $\mathcal{W}$.  We will show that the instabilities (in the sense of (\ref{eq:lscfailure}) with $y = \tilde{F}_0 \tilde{x}$) necessarily correspond to deformations of the form
\begin{align}\label{eq:ykDef}
y_k(\tilde{x}) = \bar{\lambda}_M x_1 e_1 + \bar{\lambda}_M^{-1/2}\gamma_k(x_2), \quad \tilde{x} \in \omega
\end{align}
for $\bar{\lambda}_M := \lambda_M(\tilde{F}_0)$ and some one-dimensional {\it wrinkled} curve $\gamma_k$ satisfying pointwise the constraints
\begin{align}\label{eq:gammak}
\gamma_k \cdot e_1 = 0, \quad |\gamma_k' | =1 
\end{align}
up to a rigid body rotation and/or a change in coordinates frame (dictated by $\tilde{F}_0$).  Further, the wrinkled curves $\gamma_k$ become finer and finer as $k$ increases so that in the limit the wrinkles are infinitely fine.  

However on this last point, we recognize this fineness as an artifact of the fact that we have not taken into account terms in (\ref{eq:EnergyScale}) which are higher order in the thickness $h$, and that bending (at a scale $h^3$) should emerge to resist or {\it regularize} such infinitely fine wrinkles.  Therefore, we extend the 
midplane deformations (\ref{eq:ykDef}) to the entire sheet $\Omega_h$ while respecting incompressiblity  (i.e., through an incompressible deformation $y_k^h \colon \Omega_h \rightarrow \mathbb{R}^3$ with $y_k^h(\tilde{x}, 0) = y_k(\tilde{x})$), and we find that bending emerges as the energy is characterized by 
\begin{align}
\mathcal{E}^h_{3D}(y_k^h) \approx \mathcal{E}_{K}^h(y_k).
\end{align}
Consequently, the Koiter theory (\ref{eq:Koiter}) appropriately captures the wrinkling behavior of taut nematic elastomer sheets.

\begin{figure}
\centering
\begin{subfigure}{\linewidth}
\centering
\includegraphics[width = 5.0 in]{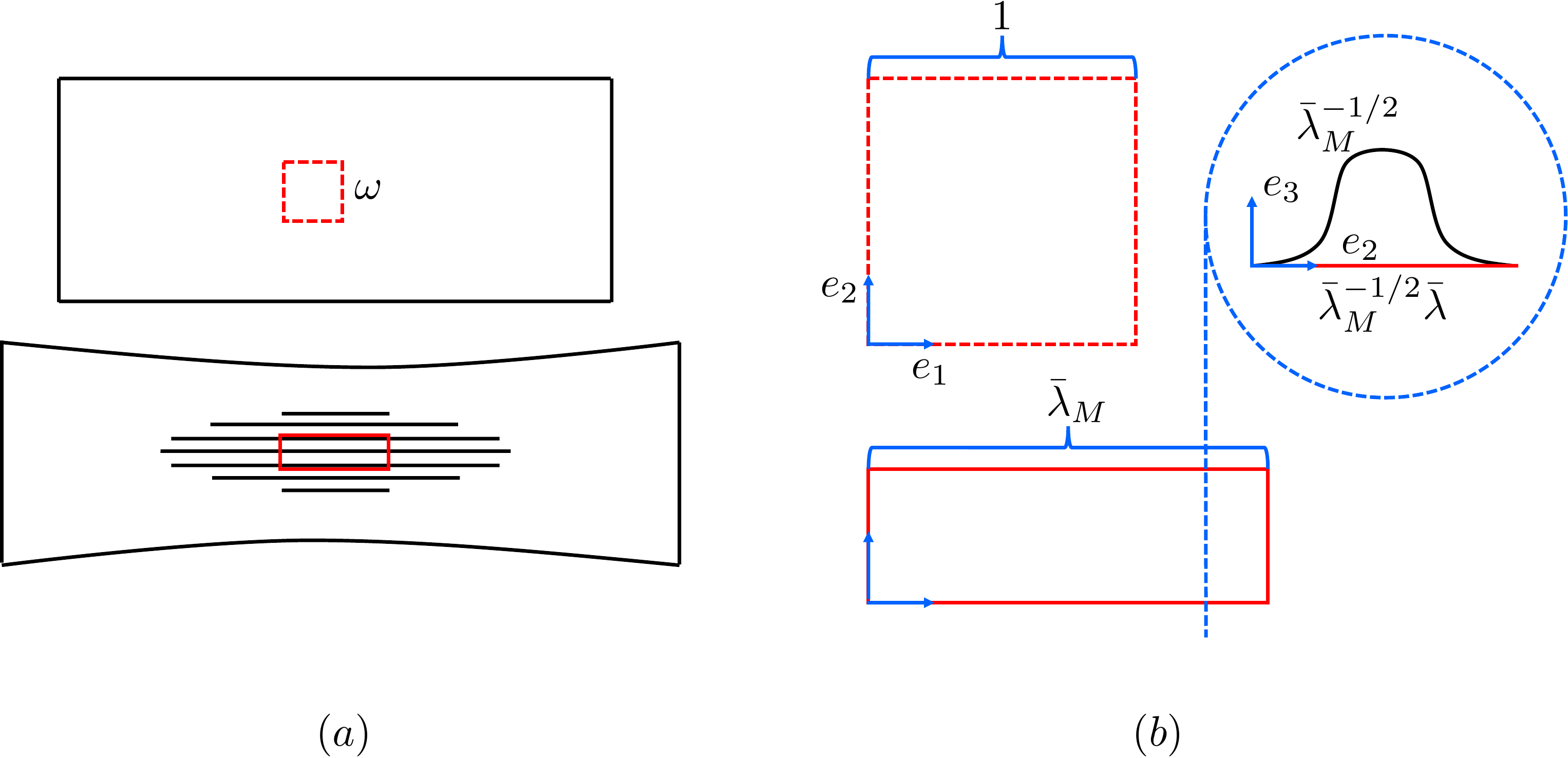}
\end{subfigure}
\begin{subfigure}{\linewidth}
\centering
\includegraphics[width = 5.0 in]{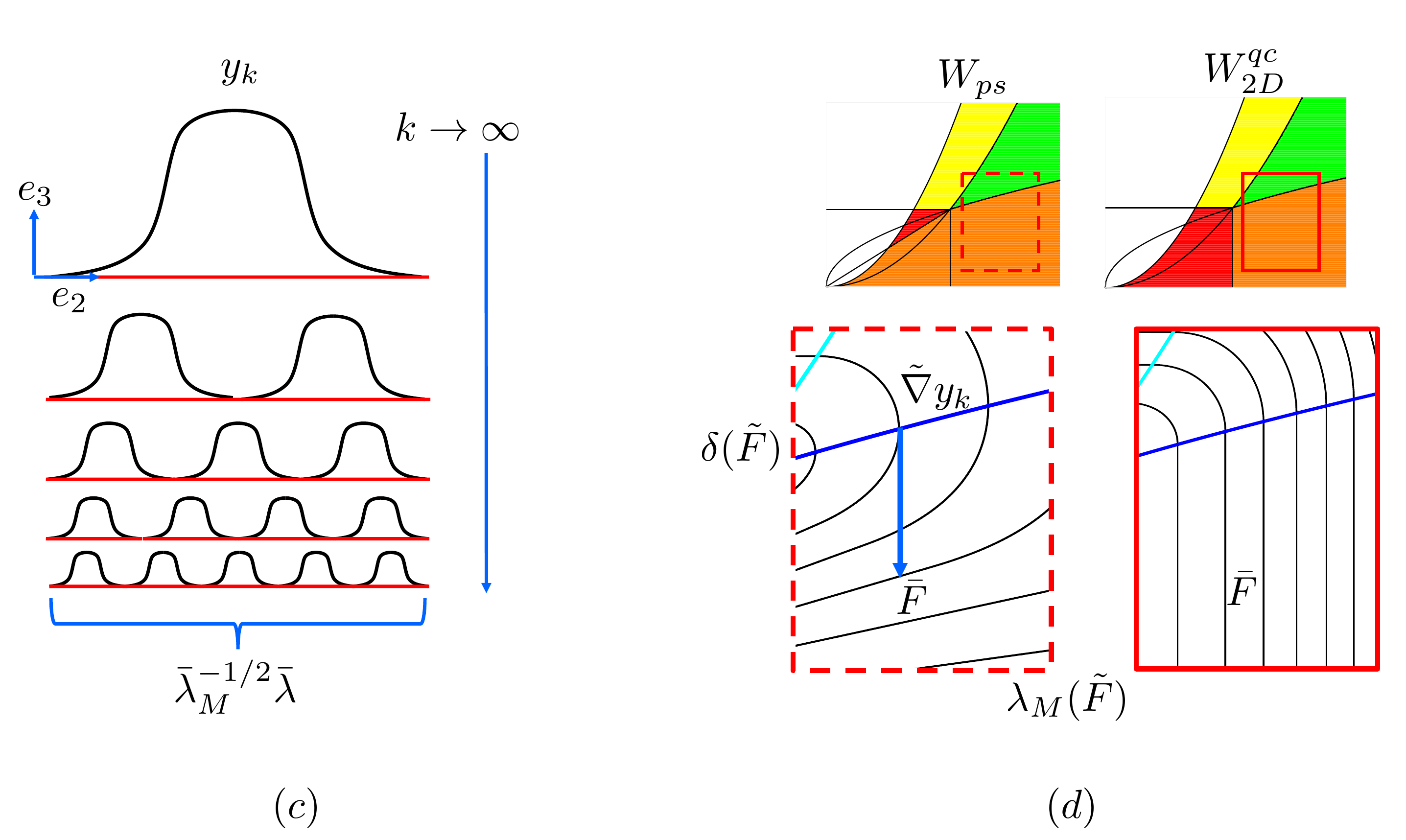}
\end{subfigure}
\caption{A few ideas related to the development of the Koiter theory for nematic elastomer sheets.  (a) A schematic representation of the clamped-stretch experiment.  (b)  The mesoscale deformation of a square shown outlined in red from part (a), and the construction of one wrinkle.  (c) A cross-section through the line marked in part (b) showing the canonical construction of a sequence of finer and finer wrinkles converging weakly to a flat deformation.   (d) The overall deformation gradient $\bar{F}$  corresponding to the point marked is obtained by deformations whose gradient takes the pointwise values corresponding to the point marked $\tilde{\nabla} y_k$.   This relaxes the energy $W_{ps}$ to the membrane energy $W_{2D}^{qc}$.}
\label{fig:WrinkCS}
\end{figure}

\subsection{Characterization of midplane deformations during tension wrinkling}\label{ssec:WrinkleChar}

To develop the Koiter theory, and in particular the bending term which emerges to regularize wrinkling, we first characterize tension wrinkling deformations in the asymptotic context of (\ref{eq:lscfailure}).  We sketch a few ideas related to this development in Figure \ref{fig:WrinkCS}.

We begin with the observation that 
\begin{align}\label{eq:stretchEnergy}
W_{ps}(\tilde{F}) = (W_{3D}^{qc})_{2D}(\tilde{F}) =\begin{cases}
W_{2D}(\tilde{F}) &\text{ if } (\lambda_M(\tilde{F}), \delta(\tilde{F})) \in \mathcal{C} \\
W_{2D}^{qc}(\tilde{F}) & \text{ if }   (\lambda_M(\tilde{F}), \delta(\tilde{F})) \in \mathcal{L}_{m} \cup \mathcal{M}\cup \mathcal{S}.
\end{cases}
\end{align}
This is an equivalent representation to (\ref{eq:WpsDef}) and the result of a long calculation proved in Proposition \ref{W3Dqc2D} in Appendix \ref{sec:KoiterAppend1}.  
In other words, $W_{ps}$ differs from the relaxed membrane energy $W_{2D}^{qc}$ only in the region $\mathcal{C}$ that consists of the region  $\mathcal{W}$ where wrinkling instability occurs and (a large portion of) the region $\mathcal{L}$ where one has crumpling or compression in both directions.  Therefore, all instabilities of 
$W_{ps}$ (i.e., (\ref{eq:lscfailure})) are related to mesoscale deformation gradients in the region $\mathcal{C}$.

We are interested in characterizing tension wrinkling, i.e., the instabilities due to mesoscale deformation gradients in $\mathcal{W}$.  Hence, we focus on these.  Specifically, we take a (region of a) membrane whose midplane is a unit square (i.e., $\omega = (0,1)^2$ the unit square) and subject it to a mesoscale deformation gradient $\tilde{F}_0 \in \mathbb{R}^{3\times3}$ in the wrinkling region, i.e.,  having $(\lambda_M(\tilde{F}_0), \delta(\tilde{F}_0)) \in \mathcal{W}$.  Since it develops instability, we can find a sequences of midplane deformations $y_k$ that satisfy 
\begin{align}\label{eq:sequenceImport}
y_k \rightharpoonup \tilde{F}_0 \tilde{x} \quad \text{ in } W^{1,2}(\omega, \mathbb{R}^3), \quad \int_{\omega} W_{ps}(\tilde{\nabla} y_k) d\tilde{x} \rightarrow  W_{2D}^{qc}(\tilde{F}_0) \quad \text{ as } k \rightarrow \infty.
\end{align} 
We show that such sequences are necessarily of the form (\ref{eq:ykDef}) up to a rigid body rotation and/or a change in coordinates frame.

For this characterization, we note that due to the frame invariance and isotropy of $W^{qc}_{2D}$ and $W_{ps}$ -- the fact that $W^{qc}_{2D}(Q\tilde{F}\tilde{R}) = W^{qc}_{2D}(\tilde{F})$ for all $\tilde{F} \in \mathbb{R}^{3\times2}$, $Q \in SO(3)$ and $\tilde{R} \in SO(2)$ (and similarly for $W_{ps}$) -- we only need to consider the case that $\tilde{F}_0$ is diagonal, i.e., 
\begin{align}\label{eq:barFWrink}
\tilde{F}_0 = \left(\begin{array}{cc} \bar{\lambda}_M & 0 \\ 0 & \bar{\delta}/ \bar{\lambda}_M \\ 0 & 0 \end{array}\right), \quad (\bar{\lambda}_M, \bar{\delta}) \in \mathcal{W}.
\end{align}
In this setting, $\bar{\lambda}_M > r^{1/3}$ and $\bar{\delta} < \bar{\lambda}_M^{1/2}$ (recall the definition of $\mathcal{W}$ in (\ref{eq:w})).  For future use we set 
\begin{align}\label{eq:thetaStuff}
\bar{\delta} = \bar{\lambda}_M^{1/2} \bar{\lambda}, \quad \bar{\lambda} \in (0,1).
\end{align}

Now, it follows from a result of Cesana {\it {\it et al.}\ }\cite{cpk_arma_15} (Theorem 6.1, Part 2), that sequences that satisfy (\ref{eq:sequenceImport}) have the property that 
\begin{align}\label{eq:nablaykCompact}
\tilde \nabla y_k(\tilde{x}) \in K_{\mathcal{W}}
\end{align}
on $\omega$ except perhaps in boundary layers\footnote{Their result shows that the Young measure is supported on $K_{\mathcal{W}}$ which implies 
(\ref{eq:nablaykCompact}) in an appropriate sense.}. Here, 
\begin{equation}
\begin{aligned}
K_{\mathcal{W}} := \left\{ \tilde{F} \in \mathbb{R}^{3\times2} \colon (\lambda_{M}(\tilde{F}), \delta(\tilde{F})) = (\bar{\lambda}_M, \bar{\lambda}_M^{1/2}), \quad \tilde{F} \tilde{e}_1 = \bar{\lambda}_M e_1 \right\}. 
\end{aligned} 
\end{equation}

We note that the condition (\ref{eq:nablaykCompact}) is exactly equivalent to the characterization in (\ref{eq:ykDef}) and (\ref{eq:gammak}). First, it follows from the definition of $K_{\mathcal{W}}$ that $\partial_1 y_k = \bar{\lambda}_M e_1$.  Integrating this yields $y_k(\tilde{x}) = \bar{\lambda}_M x_1 e_1 + \bar{\lambda}_M^{-1/2}\gamma_k(x_2)$ for some $\gamma_k \colon (0,1) \rightarrow \mathbb{R}^3$.  Second, it follows from definition of $K_{\mathcal{W}}$ that $\lambda_M(\tilde{\nabla} y_k) = \bar{\lambda}_M$, and this leads necessarily to the conclusion $\gamma_k' \cdot e_1 = 0$. Hence, $\delta(\tilde{\nabla} y_k) = \bar{\lambda}_M^{1/2} |\gamma_k'|$.  Finally, the requirement from $K_{\mathcal{W}}$ that $\delta(\tilde{\nabla}y_k) = \bar{\lambda}_M^{1/2}$ implies that $|\gamma_k'| = 1$.  Thus, we obtain 
(\ref{eq:ykDef}) and (\ref{eq:gammak}) as asserted.

Rather strikingly, these deformations correspond to pure uniaxial tension where the membrane stretches along its length by a factor $\bar{\lambda}_M$ and deforms transversely out-of-plane to preserve its natural width and avoid compression.   Moreover, the convergence (\ref{eq:sequenceImport}) which leads to the relaxation is assured if and only if 
\begin{align}\label{eq:gammakweak}
\gamma_k \rightharpoonup \bar{\lambda} x_2e_2 \quad \text{ in } W^{1,2}((0,1),\mathbb{R}^3) \quad \text{ as } k \rightarrow \infty.
\end{align}
This combined with the constraint (\ref{eq:gammak}) implies that the curves $\gamma_k$ must oscillate out-of-plane on a fine-scale as $k \rightarrow \infty$ to relax the energy.

%
%
%
%Indeed, any $y_k$ which satisfies  (\ref{eq:nablaykCompact}) is equivalently parameterized by
%\begin{align}\label{eq:ykDef}
%y_k(\tilde{x}) = \bar{\lambda}_M x_1 e_1 + \bar{\lambda}_M^{-1/2}\gamma_k(x_2), \quad \tilde{x} \in \omega
%\end{align}
%for some curve $\gamma_k \colon (0,1) \rightarrow \mathbb{R}^3$ satisfying 
%\begin{align}\label{eq:gammak}
%\gamma_k \cdot e_1 = 0, \quad |\gamma_k' | =1 \quad \text{ on } (0,1).  
%\end{align}
%Given this characterization, the convergence (\ref{eq:weakConvyk}) is assured if and only if 
%\begin{align}\label{eq:gammakweak}
%\gamma_k \rightharpoonup \sin(\theta) x_2 \quad \text{ in } W^{1,2}((0,1),\mathbb{R}^3) \quad \text{ as } k \rightarrow \infty.
%\end{align}
%As a result, $y_k$ necessarily corresponds to fine-scale tension wrinkling.  To see this, we note that each curve $\gamma_k$ has a total arclength of $1$ given its derivative condition in (\ref{eq:gammak}).  Simultaneously, the convergence (\ref{eq:gammakweak}) implies that the curves $\gamma_k$ must approximate pointwise (as $k \rightarrow \infty$) a flat curve which has total length of $\sin(\theta) < 1$.  To do this, $\gamma_k$ must oscillate out-of-plane on a fine-scale as $k \rightarrow \infty$ akin to the example in Figure \ref{fig:WrinkCS}(c).  

It is useful to illustrate these concepts through a canonical construction of tension-wrinkling curves which have the properties (\ref{eq:gammak}) and (\ref{eq:gammakweak}): Consider any smooth curve $\gamma \colon (0,1) \rightarrow \mathbb{R}^3$ in the $\{ e_2, e_3\}$ plane describing a single wrinkle which has a total length of $1$, is given by a regular parameterization (i.e., $\gamma' \neq 0$ on $(0,1)$), and satisfies $\gamma(0) = 0$, $\gamma(1) = \bar{\lambda}e_2$, $\gamma \cdot e_3 = 0$ in a neighborhood of $0$ and $1$.  The example curve $\bar{\lambda}_M^{-1/2} \gamma$ in Figure \ref{fig:WrinkCS}(b) has each of these properties for $\gamma$.  Importantly, any curve $\gamma$ with these properties can be reparameterized by its arc-length yielding a new parametrization $\gamma_0 \colon (0,1) \rightarrow \mathbb{R}^3$ corresponding to the same curve which satisfies (\ref{eq:gammak}).  For a curve $\gamma_k$ of $k$ wrinkles which satisfies (\ref{eq:gammak}), we simply extend $\gamma_0$ periodically to all of $\mathbb{R}$ so that $\gamma_0(x_2 + 1) = \gamma_0(x_2)$ for $x_2 \in \mathbb{R}$ and set 
\begin{align}
\gamma_k(x_2) = k^{-1} \gamma_0(k x_2), \quad x_2 \in (0,1).
\end{align}
This rescaling is akin to the depiction in Figure \ref{fig:WrinkCS}(c) where the curve in (b) is rescaled in precisely the same manner to obtain a curve of $k$ wrinkles of wavelength $1/k$.  Moreover, any curve $\gamma_k$ constructed in this manner satisfies (\ref{eq:gammakweak}) and obeys the estimates
\begin{equation}
\label{eq:estimateEx}
\begin{aligned}
\|\gamma_k - \bar{\lambda}x_2 e_2\|_{L^{\infty}} = k^{-1} \| \gamma_0 - \bar{\lambda} x_2 e_2\|_{L^{\infty}} \\
\|\gamma_k''\|_{L^{\infty}} = k \|\gamma_0''\|_{L^{\infty}}, \quad \|\gamma_k'''\|_{L^{\infty}} = k^2\|\gamma_0'''\|_{L^{\infty}}.
\end{aligned}
\end{equation}
Thus, we see that as $k$ increases, the curve $\gamma_k$ converges to the flat line $\bar{\lambda}x_2 e_2$, but the fine-scale oscillations needed to obtain this convergence result in amplified curvature and higher derivatives (as seen by the dependence on $k$ in the derivatives).  

While the canonical construction is illustrative, we do not restrict ourselves to it.  Instead, we consider a more generic class of wrinkled curves $\mathcal{A}_k^\tau$ which retains quantitative estimates on the amplification of curvature and higher order derivatives that we see in the examples with (\ref{eq:estimateEx}), but does not impose restrictions on the detailed appearance of the wrinkled curves: 
\begin{equation}
\label{eq:admissable}
\begin{aligned}
\mathcal{A}_k^\tau := \Big \{ \gamma_k  &\in C^3([0,1],\mathbb{R}^3)  \colon \gamma_k \text{ as in } (\ref{eq:gammak}),   \\
& \|\gamma_k - \bar{\lambda} x_2e_2\|_{L^{\infty}} \leq \tau k^{-1},  \quad \| \gamma_k''\|_{L^{\infty}} \leq \tau k, \quad  \|\gamma_k''' \|_{L^{\infty}} \leq \tau k^{2} \Big \}
\end{aligned}
\end{equation}
for some $\tau >0$ and $k \in \mathbb{Z}^+$ (a positive integer).   For a given $\tau >0$ and $k \in \mathbb{Z}^+$, this set encapsulates a rather generic class of wrinkled curves where we think of the index $k$ as denoting essentially the number of wrinkles (or $1/k$ as the approximate wavelength) and $\tau$ quantifying the localized features of the wrinkles -- with large $\tau$ allowing for more localized features.  Moreover for fixed $\tau >0$, any sequence of curves $\gamma_k \in \mathcal{A}_k^\tau$ is assured the convergence in (\ref{eq:gammakweak}) (up to a subsequence).  Hence, any sequence of midplane deformations $y_k$ in (\ref{eq:ykDef}) with wrinkled curves $\gamma_k \in \mathcal{A}_k^\tau$ has the desired limiting behavior in  (\ref{eq:sequenceImport}).  

%%%%%%%%%%%%%%%%%%%%%%%%%%%%%%%%%%%%%%%%%%%%%%%%
%%%%%%%%%%%%%%%%%%%%%%%%%%%%%%%%%%%%%%%%%%%%%%%%
\subsection{Incompressible extensions to the sheet}
Having characterized the tension wrinkling deformations of the midplane, we seek to extend them to the entire thin sheet.   We have to do so in such a manner that preserves incompressibility and yields low membrane energy.   Throughout this section we assume that $\tau>0$ is fixed and the integer $k$ and thickness are such that $hk < 1$.  

Given a midplane deformation $y_k$ of the form (\ref{eq:ykDef}) with $\gamma_k \in \mathcal{A}_k^\tau$, we define 
\begin{align}\label{eq:bkDef}
b_k := \frac{\partial_1 y_k \times \partial_2 y_k}{|\partial_1 y_k \times \partial_2 y_k|^2} = \delta(\tilde{\nabla} y_k)^{-1} \nu_{y_k} \quad \text{ on } \omega
\end{align} 
in light of the attainment result in Proposition \ref{W3Dqc2D} in Appendix \ref{sec:KoiterAppend1}.  We then extend the midplane deformation to $\Omega_h$ by setting 
\begin{align}\label{eq:GlobalIncomp1}
y_k^h(x) : =  y_k(\tilde{x}) + \xi_k^h(x) b_k(\tilde{x}), \quad x \in \Omega_h 
\end{align}
for some 
$\xi_k^h \in C^1(\overline{\Omega}_h,\mathbb{R}^3)$ satisfying 
\begin{equation} \label{eq:xi0}
\xi_k^h(\tilde{x},0) = 0.
\end{equation}
By an important result, Proposition \ref{IncompProp} stated in Appendix \ref{sec:KoiterAppend}, we can find a function $\xi_k^h$ which yields an incompressible extension if the sheet is sufficiently thin, i.e., 
\begin{align}\label{eq:GlobalIncomp2}
\det(\nabla y_k^h) = 1 \quad \text{ on } \Omega_h.
\end{align}
Further, the gradient of this extension satisfies the following approximation
\begin{equation}
\label{eq:ImportantForm}
 \begin{aligned}
\nabla y_k^h & =(\tilde{\nabla} y_k|b_k) + x_3(\tilde{\nabla} b_k|0) - x_3 \Tr(G_k) b_k \otimes e_3  \\
&\quad + \frac{3}{2} x_3^2 \Tr(G_k)^2b_k \otimes e_3 - \frac{1}{2} x_3^2 \Tr(G_k) )(\tilde{\nabla} b_k|0) \\ 
&\quad + O(k^2x_3^2) b_k \otimes e_2 + O(k^3 x_3^3) \quad \text{ on } \Omega_{h}, \\
 G_k &:= (\tilde{\nabla} y_k |b_k)^{-1} (\tilde{\nabla} b_k|0) \quad \text{ on } \omega. \\
\end{aligned}
\end{equation}
The derivation of this deformation makes use of techniques due to Conti and Dolzmann   \cite{cd_06_chap,cd_cov_09} (also see Plucinsky {\it {\it et al.}\ }\cite{plb_arxiv_16}).  Finally, the deformation satisfies other properties, which are listed in 
Propositions \ref{EigenProp} and \ref{PropIdentities} in Appendix \ref{sec:KoiterAppend}.

Note that the form (\ref{eq:GlobalIncomp1}) can be expanded in $x_3$ as $y_k^h = y_k + x_3 b_k - \frac{x_3^2}{2} \Tr(G_k) b_k + \ldots$, and so this deformation is consistent with the leading order behavior in (\ref{eq:ext}). 
\subsection{The energy of tension wrinkling}\label{ssec:TensEnergy}
We now compute the energy of the deformations (\ref{eq:GlobalIncomp1}).  Throughout, we make the physically reasonable restriction to deformations for which the radius of curvature is large compared to the thickness, which is tantamount to assuming $kh \ll 1$.  In this setting, we note from Proposition \ref{EigenProp} in Appendix \ref{sec:KoiterAppend} that
\begin{align}\label{eq:IdentLambdaM}
&(\lambda_M(\tilde{\nabla} y_k|b_k), \lambda_M(\cof (\tilde{\nabla} y_k |b_k)) )= (\bar{\lambda}_M, \bar{\lambda}_M^{1/2}) \in S \quad \text{ on } \omega,\\
&\lambda_M(\nabla y_k^h) =  \bar{\lambda}_M  \quad \text{ and } \quad (\lambda_M(\nabla y_k^h), \lambda_M(\cof \nabla y_k^h)) \in S \quad \text{ on } \Omega_h \label{eq:IdentLambdaM2}.
\end{align}
Further, for any $F \in \mathbb{R}^{3\times3}$ with $(\lambda_M(F), \lambda_M(\cof F)) \in S$,
\begin{align}
W_{3D}^{qc} (F) = W_{3D} (F) = \frac{\mu}{2} \left( r^{1/3} 
\left( |F|^2 - \left(\frac{r-1}{r} \right) (\lambda_M(F))^2 \right)  -3 \right).
\end{align}
So in setting
\begin{align}\label{eq:Akh}
A_k^h := \nabla y_k^h - (\tilde{\nabla} y_k |b_k) \quad \text{ on } \Omega_h,
\end{align}
we obtain 
\begin{equation}
\label{eq:energyCalc1}
\begin{aligned}
\mathcal{E}_{3D}^h(y_k^h) &= \frac{\mu}{2} \int_{\Omega_h} \left(r^{1/3}\left( |\nabla y_k^h|^2 - \left(\frac{r-1}{r} \right) (\lambda_M(\nabla y_k^h) )^2 \right) - 3\right) dx  \\
&= \int_{\Omega_h} \left( W_{3D}^{qc}((\tilde{\nabla} y_k|b_k)) + \frac{\mu  r^{1/3}}{2} \left( 2 \Tr((\tilde{\nabla} y_k |b_k)^T A_k^h) + |A_k^h|^2 \right) \right) dx \\
&= h \int_\omega W_{ps} (\tilde{\nabla} y_k) d \tilde x + 
\frac{\mu  r^{1/3}}{2} \int_{\Omega_h} \left( 2 \Tr((\tilde{\nabla} y_k |b_k)^T A_k^h) + |A_k^h|^2 \right) dx 
\end{aligned}
\end{equation}
using (\ref{eq:IdentLambdaM}) and Proposition \ref{W3Dqc2D} combined with the definition of $b_k$ in (\ref{eq:bkDef}).  

We compare (\ref{eq:ImportantForm}) and (\ref{eq:Akh}) to expand $A_k^h$.  Substituting this expansion into (\ref{eq:energyCalc1}), we see first that 
\begin{equation}
\label{eq:energyCalc3}
\begin{aligned}
\mu r^{1/3} \int_{\Omega_h} &\Tr((\tilde{\nabla} y_k|b_k)^T A_k^h) dx =\\
& \frac{r^{1/3}\mu h^3}{24} \int_{\omega} \left( 3 \Tr(G_k)^2 |b_k|^2 - \Tr(G_k) \Tr((\tilde{\nabla} y_k|b_k)^T (\tilde{\nabla} b_k|0)) \right) d\tilde{x} + O(k^3h^4)
\end{aligned}
\end{equation}
using the fact that the terms linear in $x_3$ vanish upon integration through the thickness and also using (\ref{eq:PropIdent1}) in Proposition \ref{PropIdentities} in Appendix \ref{sec:KoiterAppend}.  By a similar argument and with the additional identity (\ref{eq:PropIdent2}) in this proposition, we arrive at 
\begin{align}\label{eq:energyCalc4}
\frac{\mu r^{1/3} }{2} \int_{\Omega_h} |A_k^h|^2 dx = \frac{\mu r^{1/3} h^3}{24} \int_{\omega} \left(|\tilde{\nabla} b_k|^2 + \Tr(G_k)^2 |b_k|^2 \right)d\tilde{x} + O(k^3h^4).
\end{align}
Making use of the remaining identities in this proposition, we find that the two energies (\ref{eq:energyCalc3}) and (\ref{eq:energyCalc4}) combine to form a term penalizing the second fundamental form of the midplane deformation $y_k$,
\begin{align}\label{eq:energyCalc5}
\frac{\mu r^{1/3}}{2} \int_{\Omega_h} \left(2\Tr((\tilde{\nabla} y_k|b_k)^T A_k^h)  + |A_k|^2 \right)dx = \frac{\mu r^{1/3}h^3}{6} \int_{\omega} |\II_{y_k}|^2 d\tilde{x} + O(k^3h^4).
\end{align} 

Finally in combining (\ref{eq:energyCalc1}) and (\ref{eq:energyCalc5}), we obtain 
\begin{align}
\mathcal{E}_{3D}^h(y_k^h) = \mathcal{E}_{K}^h(y_k) + O(k^3 h^4).
\end{align}
Observe that the stretching term in $\mathcal{E}_{K}^h$ is $O(h)$ while the bending part is $O(k^2 h^3)$.  Thus, with the radius of curvature large compared to the thickness, $kh \ll 1$ and therefore the term $O(k^3 h^4)$ is negligible compared to the rest.

%%%%%%%%%%%%%%%%%%%%%%%%%%%%%%%%%%%%%%%%%%%%%%%%
%%%%%%%%%%%%%%%%%%%%%%%%%%%%%%%%%%%%%%%%%%%%%%%%
\section{Numerical implementation}\label{sec:Numerical}

\subsection{The Koiter theory}\label{ssec:NumImpKoit}

We implement the commercial software package ABAQUS via a user material model (UMAT) for S4 shell elements.  This requires Cauchy stress to be specified as a function of the $2 \times 2$ surface deformation gradient $\widehat{F} \in \mathbb{R}^{2\times2}$ as well as the consistent tangent modulus derived from this Cauchy stress (see, for instance, the Abaqus User Subroutines Reference Manual, the section on UMAT, and the subsection on large volume changes with geometric nonlinearity \cite{abq_610}).   For this formulation, we take the energy as in (\ref{eq:WpsDef}) restricted to $2 \times 2$ matrices $\tilde{F} = \widehat{F}$ where $\tilde{F}_{31} = \tilde{F}_{32} = 0$.  This corresponds to fixing a frame.  Also, $\delta(\tilde{F}) = \det \widehat{F}$, $\lambda_M(\tilde{F}) = \lambda_M(\widehat{F})$ and following the formalism of Cesana {\it et al.} \cite{cpk_arma_15}, the Cauchy stress is given by
\begin{align}\label{eq:CauchyK}
\sigma^{K}_r(\widehat{F}) :=  (W_{ps}),_{\widehat{F}} \widehat{F}^T 
\end{align}
where the subscript $r$ is used to emphasize the dependence of this stress on the anisotropy parameter.

Now, the membrane stress $\sigma_r^K$ in (\ref{eq:CauchyK}) inherits soft elasticity, a feature distinct to nematic elastomers as compared to purely elastic materials.  This poses some numerical challenges.
At small strains, much of the deformation of these elastomers can be accommodated by either soft or very lightly stressed microstructure (regions $\mathcal{L}_m$ and $\mathcal{M}$ on the energy landscape in Figure \ref{fig:KEnergy}).  For these regions, the tangent modulus derived from $\sigma_{r}^K$ in  (\ref{eq:CauchyK}) is positive semi-definite and not strictly positive definite: a nematic elastomer has zero stiffness in region $\mathcal{L}_m$ and it has no stiffness against shear in region $\mathcal{M}$.  These features lead to difficulties in numerical convergence at small strains.   To combat this, we introduce a small energy regularizer whose Cauchy stress has the form
\begin{align}\label{eq:Wepsilon}
\sigma_{\epsilon}^{reg}(\widehat{F}) := \mu \epsilon \left( \widehat{F} \widehat{F}^T - I_{2\times2} \right)
\end{align}
for $0 \leq \epsilon \ll 1$.  The consistent tangent modulus derived from $\sigma_{\epsilon}^{reg}$ is strictly positive definite for all surfaces deformation gradients $\widehat{F} \in \mathbb{R}^{2\times2}$ with $\det \widehat{F} > 0$.  Thus, introducing the regularizer $\epsilon \ll 1$ stabilizes the numerical calculation at small strains.   For our calculations we use $\epsilon = 0.05$.  We have studied the role of $\epsilon$ on our simulations and only present conclusions that are independent of this regularization.

The formulas derived in Sections \ref{sec:TheoryBack} and \ref{sec:k} take the isotropic state to be the reference.  However, notice that identity is deep in the interior of the soft region and $\sigma_{r}^K = 0$ in a neighborhood of identity for $r > 1$ (identity corresponds to the intersection of the red and dark blue curve in on the energy landscape for $W_{ps}$ in Figure \ref{fig:KEnergy}(a)).  This is a trivial soft regime, which is not of interest here.\footnote{We refer to Conti {\it {\it et al.}} \cite{cdd_02_pre,cdd_02_jmps}  for extensive investigations of soft elasticity in a similar framework.}  We are interested in the interplay of microstructure and tension wrinkling, an interplay which is broached in this setting when a genuine tensile response to stretch is induced in the membrane.   Therefore, we take the end of this trivial soft state, one with  $\lambda_M = \delta = r^{1/6}$ as the reference state by setting 
\begin{align}\label{eq:CauchyStress}
\sigma^{sim}_{r,\epsilon} (\widehat{F}) := \sigma^{K}_r(\widehat{F} \widehat{U}_r) + \sigma^{reg}_{\epsilon}(\widehat{F}), \quad \widehat{U}_{r} := \left(\begin{array}{cc} r^{1/6} & 0 \\ 0 & 1\end{array}\right).
\end{align} 

Finally, the Abaqus S4 shell element slightly modifies the bending term from that derived in our Koiter theory.
Specifically, the shell element is based on a kinematic ansatz of the shell deformation through the thickness (see the Abaqus Theory Manual, 3.6.5 Finite-strain shell element formulation \cite{abq_612}).   This formulation, in effect, yields a bending energy which depends on a linearization of the second fundamental form.

To study wrinkling, we use the two part procedure detailed by Wong and Pellegrino \cite{wp_06_jmms} and Ling Zheng \cite{z_08_wrinkle} to capture the wrinkling geometry in Abaqus.  This involves first a pre-buckling eigenvalue analysis, and then a post buckling analysis.  For the pre-buckling eigenvalue analysis, we introduce a small initial prestress in the form of an edge displacement to the membrane.  Then we perform an eigenvalue buckling analysis on this lightly stressed membrane to determine the likely buckling modes.   After computing the buckling mode shapes, we introduce a linear combination of one or more selected eigenmodes as a geometric imperfection in the membrane at the start of the post buckling analysis.   We then apply an initial prestress to the membrane (as in the pre-buckling analysis) in order to provide an initial out-of-plane stiffness to the membrane.  Finally, we use the static stabilization procedure in Abaqus to compute the wrinkled shape of the membrane under the clamped stretching deformation.  Our detailed implementation of this procedure follows the input file example (Ling Zheng \cite{z_08_wrinkle} in Appendix B) modified appropriately to incorporate the user material model (\ref{eq:CauchyStress}) for nematic elastomer sheets.

\subsection{Membrane (tension field) theory}

We compare the results of simulations under the implementation of the Koiter theory described above with analogous simulations of the membrane theory for nematic elastomers (described in Section \ref{ssec:EffectiveTheory}).  We implement this once again into 
%. This theory has a strain energy density $W_{2D}^{qc}$ in (\ref{eq:W2Dqc}) (also Figure \ref{fig:2DEnergies}(b) and (c)), and it captures instabilities such as wrinkling, microstructure and crumpling through effective planar deformation and stress.  In particular, this theory is a tension field theory void of any compressive stresses, and so the region of compression $\mathcal{C}$ inherent to the membrane stress term $(W_{3D}^{qc})_{2D}$ in the Koiter theory is replaced by a region of zero stress $\mathcal{L}$ corresponding to effective planar deformation to capture crumpling and a region of uni-axial stress $\mathcal{W}$ corresponding to effective planar deformation to capture wrinkling.  For the clamped stretched geometry under this theory, regions of effective deformation in $\mathcal{W}$, which emerge in equilibrium under stretch, offer a qualitative picture for the extent of wrinkling in these membranes.  By comparison, wrinkling emerges in the Koiter theory through a balance of bending energy and membrane energy induced by compressive stresses.   This contrast offers an interesting comparison of the wrinkling phenomenon in these sheets.  Alternatively, the two theories are, effectively, identical in their means of capturing microstructure for these simulations, thus facilitating another point of comparison.
%We implement the effective membrane theory into 
ABAQUS via UMAT for plane stress CPS4 element, where we take the Cauchy stress in the implementation to be akin to (\ref{eq:CauchyStress}), with
\begin{align}
\sigma_{r}^{sim}(\widehat{F}) := \sigma_{r}^{mem}(\widehat{F} \widehat{U}_r), \quad \sigma_{r}^{mem}(\widehat{F}) := (W_{2D}^{qc}),_{\widehat{F}} \widehat{F}^T.  
\end{align} 
Notice here that we do not employ the use of a regularizer.  For simulating the effective membrane theory, there is no wrinkling bifurcation to compute, as wrinkling is treated through effective deformation.   Consequently, we do not need to explore the stretch monotonically from small initial stretch.  Instead, we explore a range of stretch which excludes small stretch associated to soft elasticity.  This can be done without regularization (i.e., by first deforming the specimen homogeneously from the unstressed reference state to a state of stretch far away from soft elastic behavior with the top and bottom of the sheet fixed in the $e_2$ direction, and then relaxing the boundary conditions on top and bottom to a sheet which is traction free away from the clamped ends).

\section{Results: Microstructure induced suppression of wrinkling} \label{sec:results}

We now study the clamped stretch experiments of Kundler and Finkelmann \cite{kf_mrc_95}.    We consider rectangular sheets with length $254$ mm, width $101.6$ mm and thickness $h = 0.1$ mm  that is clamped on both ends and subject to a stretch along its length in the $e_1$ direction.  We choose these dimensions since purely elastic sheets of this dimension readily wrinkle when stretched as demonstrated experimentally and numerically by Zheng \cite{z_08_wrinkle} (also confirmed numerically by Nayyar {\it {\it et al.}\ }\cite{nrh_11_ijss} by Taylor {\it {\it et al.}\ }\cite{tbs_14_jmps}).  We also fix the shear modulus to be $\mu = 6\times 10^5$ Pa in accordance to experimental observations \cite{wt_lceboox_03}, though this choice is irrelevant as all terms in the energy and stress scale with $\mu$.  We take $r$ to be in the range $1$ through $1.45$ to explore the entire elastic to nematic range.  

We often use the nominal strain 
\begin{align}\label{eq:epsEng}
\varepsilon_{eng} := \frac{L_{final} - L_{initial}}{L_{initial}}
\end{align}
to display the results.   We also use a microstructure indicator parameter
\begin{align}\label{eq:MicroParam}
\frac{ r^{1/2} \delta}{\lambda_M^2}.
\end{align}
Recall that the curve $\lambda_M^2 = r^{1/2} \delta$ forms the boundary between the region ${\mathcal M}$ with microstructure and region ${\mathcal S}$ without.  So, a microstructure parameter (\ref{eq:MicroParam}) larger than $1$ indicates the presence of fine-scale microstructure.

%%%%%%%%%%%%%%%%%%%%%%%%%%%%%%%%%%%%%%%%%%%%%%%%
%%%%%%%%%%%%%%%%%%%%%%%%%%%%%%%%%%%%%%%%%%%%%%%%
\subsection{Simulations with Koiter theory}

\begin{figure}[t]
\centering
\begin{subfigure}{\linewidth}
\centering
\includegraphics[width =4.5in]{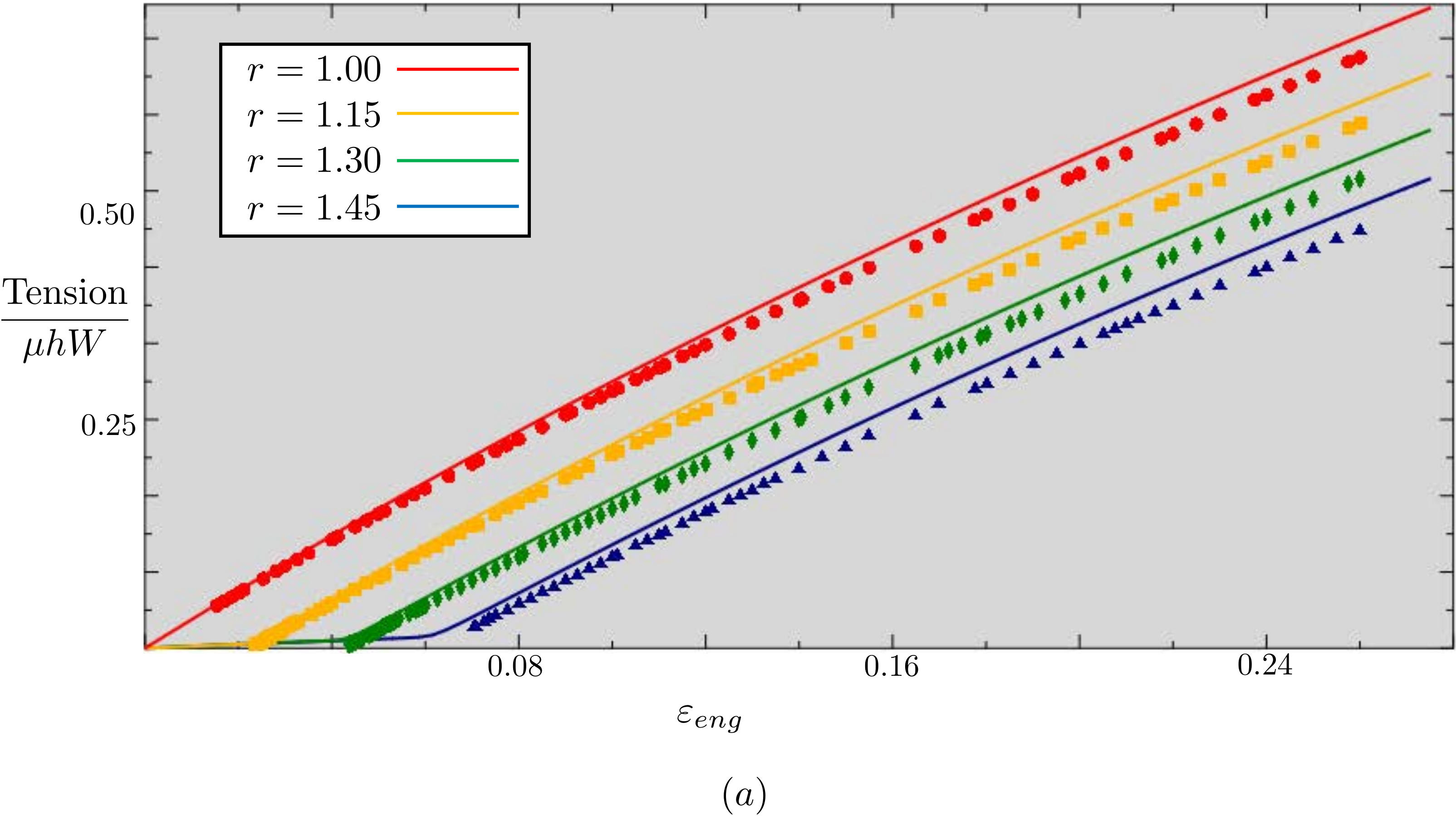}
\end{subfigure}
\begin{subfigure}{\linewidth}
\centering
\includegraphics[width = 4.5in]{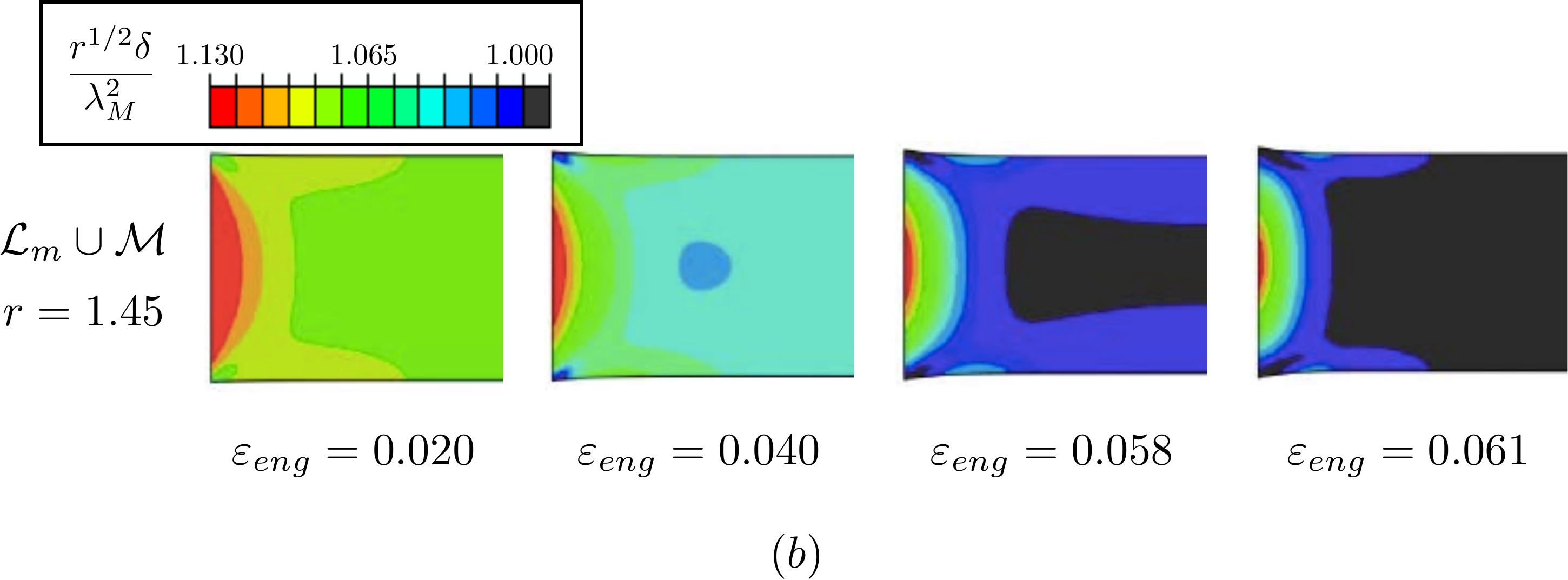}
\end{subfigure}
\caption{(a) Normalized tensile response to stretch (with $W = 101.6$ mm the undeformed width) for nematic elastomer membranes of varying anisotropy: Koiter theory simulations (lines), Effective membrane theory simulations (points).  (b) Soft elastic response for $r= 1.45$ nematic membrane, and transition to stressed elasticity.  Microstructure in $\mathcal{L}_m \cup \mathcal{M}$ indicated by colored regions ($r^{1/2} \delta/\lambda_M^2 > 1$).  Note that only the left half of the sheet is shown as the distribution is symmetric}
\label{fig:Soft}
\end{figure}

The nominal tensile force of the sheet is plotted in Figure \ref{fig:Soft}(a) as a function of the nominal strain.  The purely elastic sheet ($r=1$) is, as expected, immediately tensioned due to the stretch, whereas the nematic sheets ($r > 1$) are soft and nearly stress free during the initial stages of stretch.  We call the extent of the soft strain $\varepsilon_{eng}^{soft}$.   This depends on the nematic anisotropy $r$.   Recall that we start at $\widehat{U}_r$ or $(\lambda_M, \delta) = (r^{1/6}, r^{1/6})$, the far left point on the boundary between ${\mathcal M}$ and ${\mathcal L}_m$.  On the initial application of strain, we expect much of the membrane to traverse this boundary with no stress to the right until it reaches the point $(\lambda_M, \delta) = (r^{1/3}, r^{1/6})$.  So we expect to see soft behavior until $\varepsilon_{eng}^{soft} \approx r^{1/6} - 1$.  This is consistent with the simulations (e.g., the formula gives $\varepsilon_{eng}^{soft} = 0.063$ in agreement with $0.061$ in the simulations for $r=1.45$).  We note that the soft elastic strain in the Kundler and Finkelmann experiments is $r^{1/2} - 1$ since they start from a state where the director is uniformly vertical instead of $\widehat{U}_r$.

The formation of the microstructure is indicated in Figure \ref{fig:Soft}(b) for the case $r=1.45$ by plotting the distribution of the microstructure parameter (\ref{eq:MicroParam})  (note that only the left half of the sheet is shown as the distribution is symmetric).  Since we start from the state at $\widehat{U}_r$, the initial sheet has uniform microstructure.  As we stretch, the microstructure evolves differently close to the grips compared to away from them: it is gradually driven out in most of the sheet but persists at the grips.  Since the imposed deformation is accommodated by the rearrangement of the microstructure, the response is soft.  Eventually, all the microstructure is driven out and director is uniformly horizontal except close to the grips, signaling an end of the soft behavior.  Subsequent stretching leads to proportional increase in load.  All of this is consistent with the observations of Kundler and Finkelmann \cite{kf_mrc_95} and the simulations of Conti {\it {\it et al.}\ }\cite{cdd_02_jmps}.

\begin{figure}
\centering
\begin{subfigure}{\linewidth}
\centering
\includegraphics[width =6in]{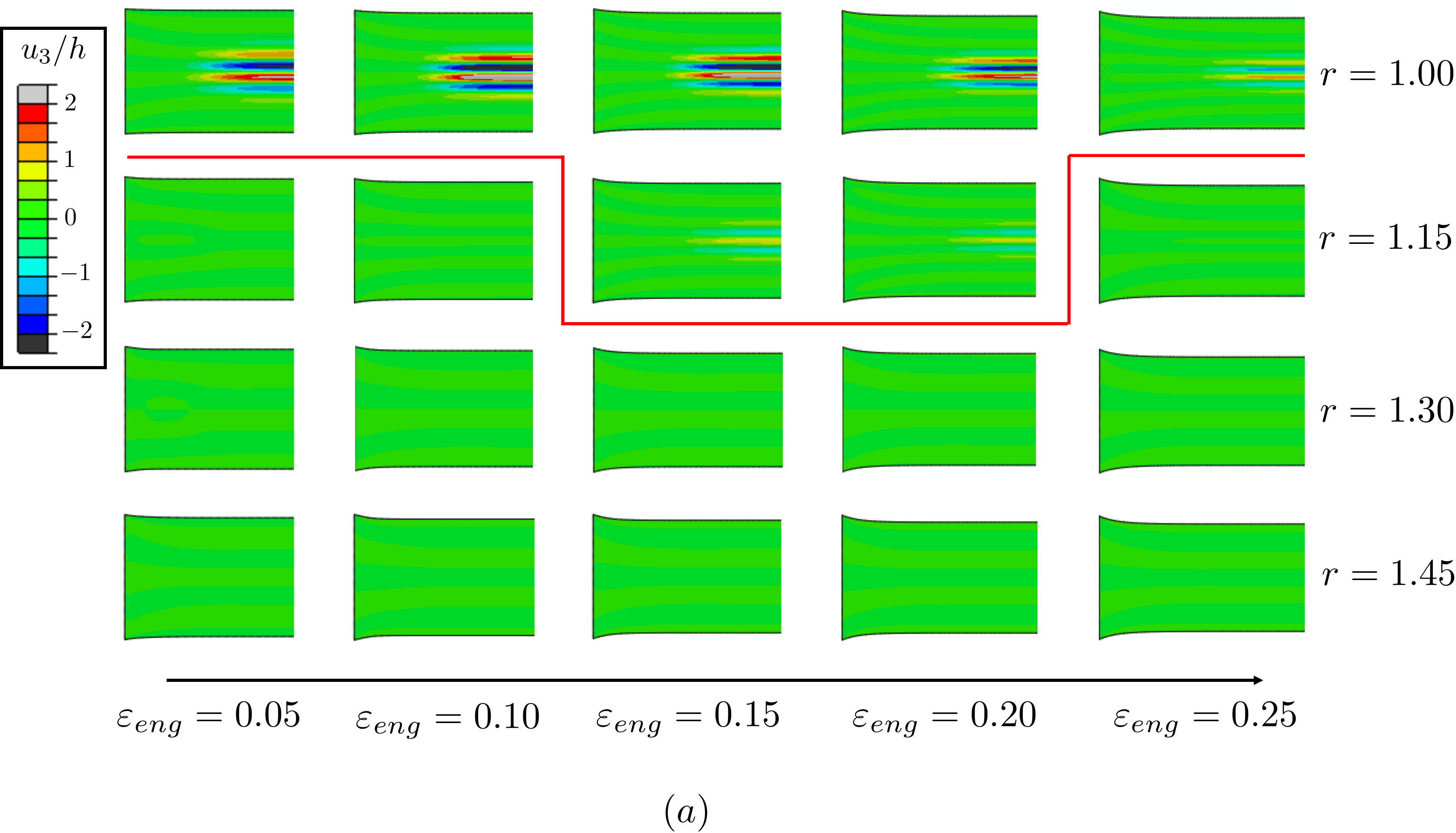}
\end{subfigure}
\begin{subfigure}{\linewidth}
\centering
\includegraphics[width = 6.0in]{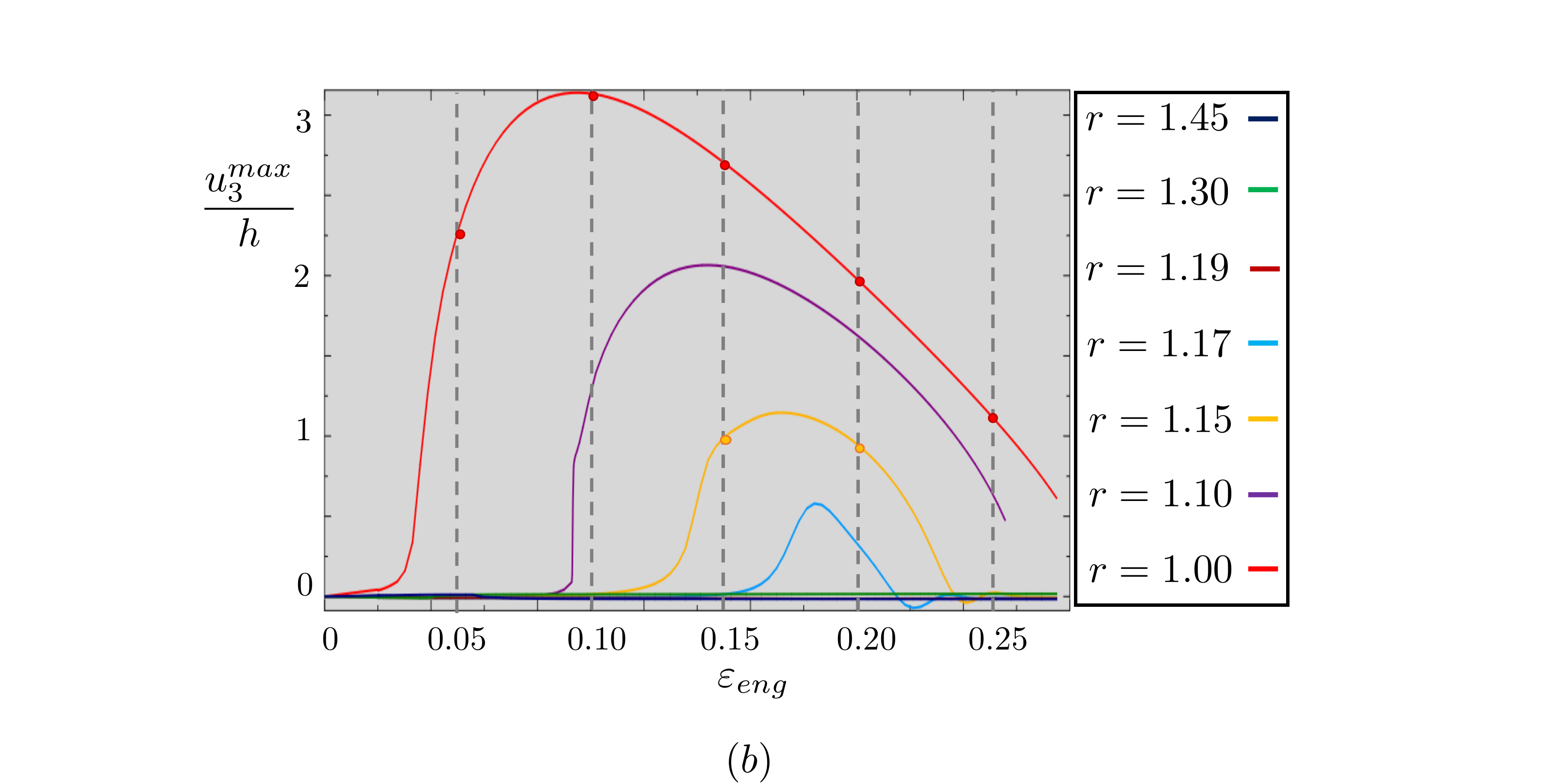}
\end{subfigure}
\caption{The evolution of wrinkles with applied stretch for a purely elastic material and nematic elastomers with increasing degrees of order.  (a) Snapshots of the out-of-plane displacement.  (b) The amplitude of wrinkles (i.e., maximum out of plane displacement) as a function of stretch for various values of $r$.  We see that wrinkles appear early, have large amplitude and disappear late for the usual elastic material ($r=1$), but are suppressed with the introduction of nematic order.}
\label{fig:Wrinkle}
\end{figure}

We show the evolution of wrinkles by plotting the out-of-plane displacement in Figure \ref{fig:Wrinkle} for a variety of $r$ and $\varepsilon_{eng}$.  Notice from the top row of Figure \ref{fig:Wrinkle}(a) that for the purely elastic sheet ($r=1$), wrinkles appear almost immediately upon stretch, grow with further stretch, reach a maximum amplitude $u_3 /h \approx 3$ and eventually diminish.  This, in fact, reproduces the results of Zheng \cite{z_08_wrinkle} and Nayyar {\it {\it et al.}\ }\cite{nrh_11_ijss}.  For a nematic sheet with small nematic order ($r=1.15$) shown on the second row, we see that wrinkles do not appear until a larger value of stretch, are much smaller in amplitude and disappear faster.  For higher values of $r$ shown in the third and fourth rows, wrinkles do not appear.  All of this is explored further in Figure \ref{fig:Wrinkle}(b), which shows the amplitude of the wrinkles as a function of stretch for various $r$.  We see that wrinkles appear early, have large amplitude and disappear late for the purely elastic sheet ($r=1$).  We also see that wrinkling is suppressed by the introduction of nematic order.  In fact, increasing $r$ leads to delayed onset, smaller amplitude and earlier disappearance of wrinkles.  Moreover, wrinkling is fully suppressed for values of $r$ greater than $1.2$.

\begin{figure}
\centering
\begin{subfigure}{\linewidth}
\centering
\includegraphics[width =6in]{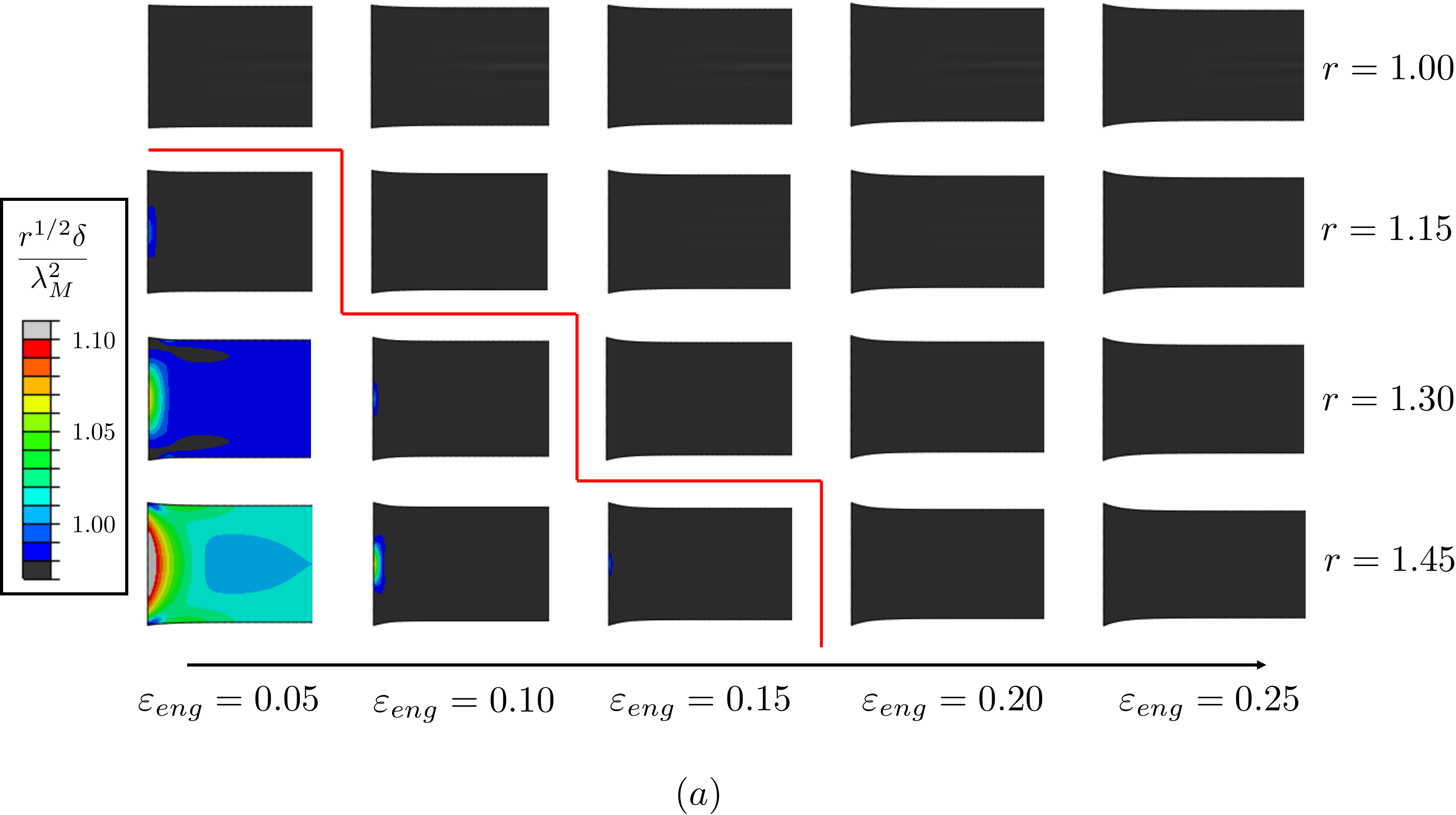}
\end{subfigure}
\begin{subfigure}{\linewidth}
\centering
\includegraphics[width = 6.0in]{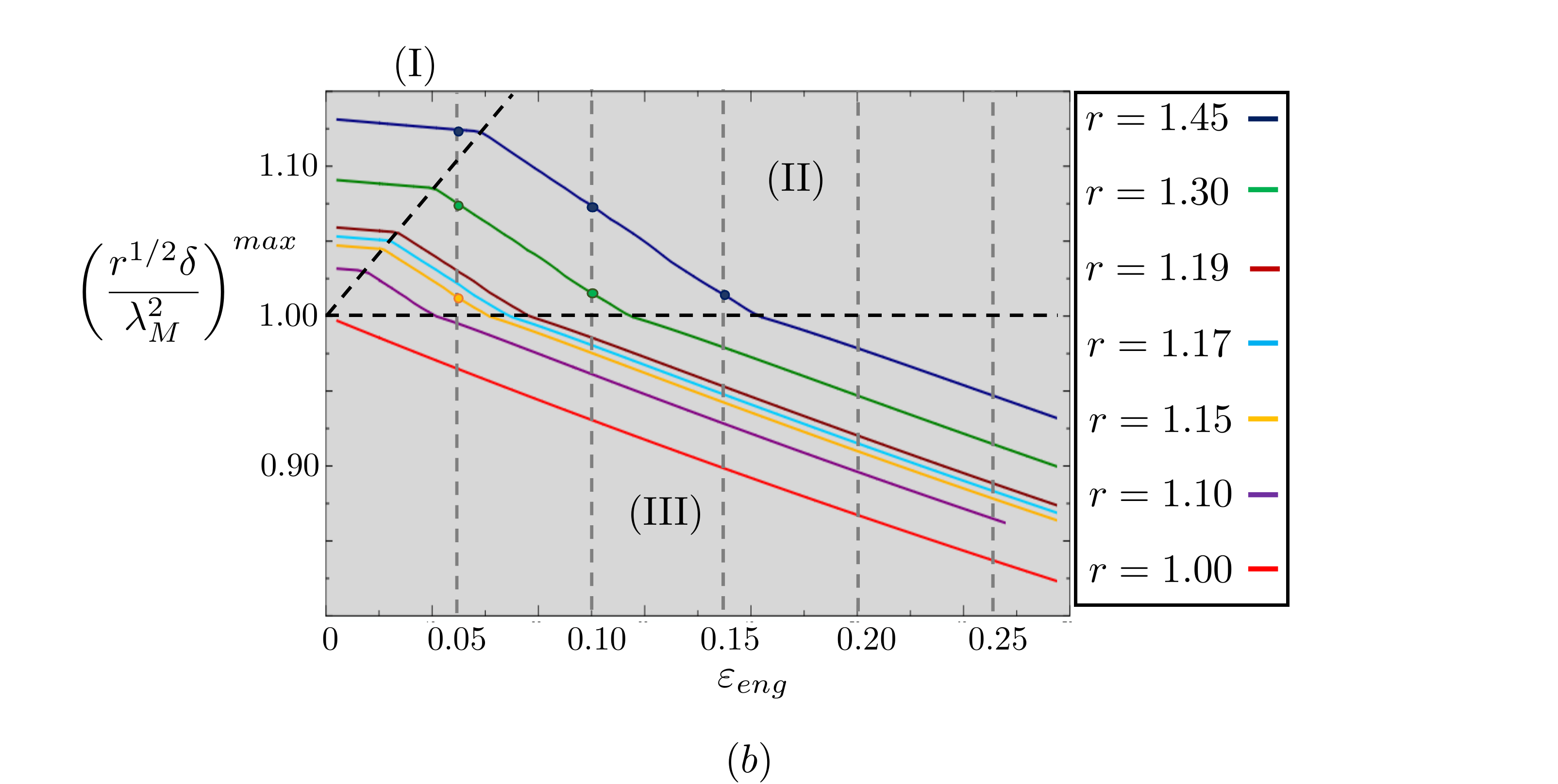}
\end{subfigure}
\caption{The evolution of microstructure with applied stretch.  (a) Snapshots of the microstructure parameter $r^{1/2} \delta/\lambda_M^2$ (values greater than 1 indicate microstructure).  (b) The maximum value of the microstructure parameter as a function of stretch for various values of $r$. This shows three distinct regions: (I) soft; (II) stressed but with microstructure; (III) without microstructure. }
\label{fig:Micro}
\end{figure}

Figure \ref{fig:Micro} shows the corresponding evolution of the microstructure.  We see from Figure \ref{fig:Micro}(a) that microstructure disappears in most of the sheet at small to modest stretch but persists for larger stretch near the grips.  We see from Figure \ref{fig:Micro}(b) that the maximum microstructure parameter $r^{1/2} \delta/\lambda_M^2$, i.e., the value at the grips, delineates three distinct regions describing microstructure in the sheet for the full range of stretch: (I) captures soft response to stretch of the sheet, (II) captures the stretch for which the sheet is stressed but with microstructure at the grips and (III) captures the stretch for which there is no longer any microstructure present in the sheet.  Thus, larger $r$ implies both a more prolonged soft elastic response and a more prolonged response for which microstructure persists at the clamps. 

\begin{figure}
\centering
\includegraphics[width =6in]{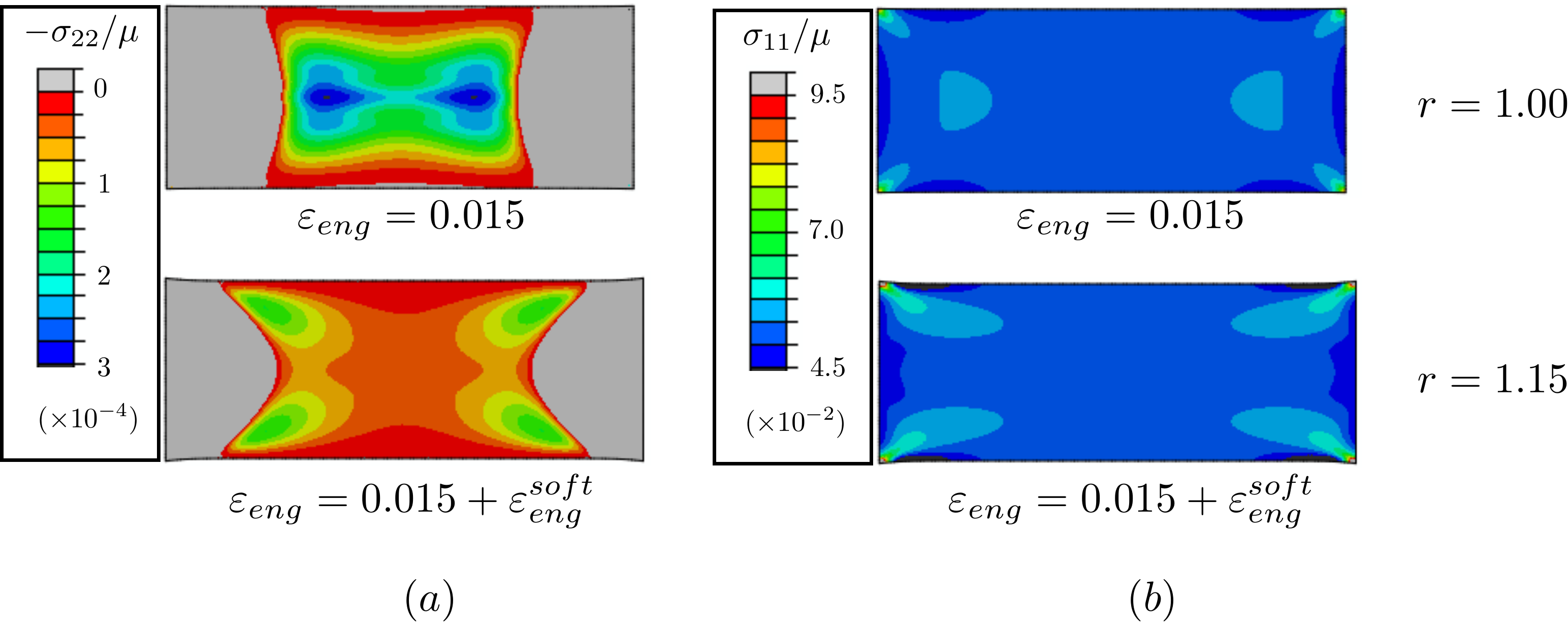}
\caption{The state of stress in a purely elastic sheet ($r=1$, top row) compared with that in a nematic sheet ($r=1.15$, bottom row) at comparable values of nominal load but before the onset of wrinkles in the elastic sheet.  Note that the purely elastic sheet shows significant transverse compression in the middle, but the nematic sheet does not.}
\label{fig:CompTen}
\end{figure}

We turn now to understanding the mechanism by which the presence of nematic microstructure suppresses wrinkling.  Figure \ref{fig:CompTen} compares the state of stress in a purely elastic sheet ($r=1$) with that of a nematic sheet ($r=1.15$) at comparable values of  nominal load but before the onset of wrinkles in the elastic sheet (i.e., at a value of $\varepsilon_{eng}$ of $0.015$ for the elastic sheet and $0.015$ beyond the soft strain for the nematic sheet).  Notice (top of Figure \ref{fig:CompTen}(a)) that the purely elastic sheet develops transverse compression.  This compressive stress leads to the buckling to a wrinkled state beginning at a stretch  $\varepsilon_{eng} \approx 0.025$.  However (bottom of Figure \ref{fig:CompTen}(a)), we notice that there is no transverse compression in the nematic sheet at a comparable value of nominal load.  There is some transverse compression closer to the edges, but it is much smaller in magnitude.  We see from Figure \ref{fig:CompTen}(b) that the distribution of longitudinal stress is also different.  All of this is a result of the persistence of the microstructure near the grips.  In other words, the ability of the nematic material to form microstructure not only gives rise to soft behavior in the bulk, but also qualitatively changes the distribution of stresses near the grips and that in turn suppresses the wrinkling.

Mechanistically, note that as the sheet is stretched, it seeks to compress laterally.  It is free to do so away from the grips, but is prevented from doing so near the clamped grips.  This leads to a shear deformation near the corners extending diagonally into the sheet.  This, in turn, leads to the compressive stress in the elastic sheet.  In contrast, the ability of the nematic sheet to form microstructure enables it to accommodate the shear strain through equi-biaxial tension at the clamps (recall that nematic sheets can accommodate shear strain without shear stress in the region or microstructure ${\mathcal M}$). 

\begin{figure}
\centering
\includegraphics[width =6in]{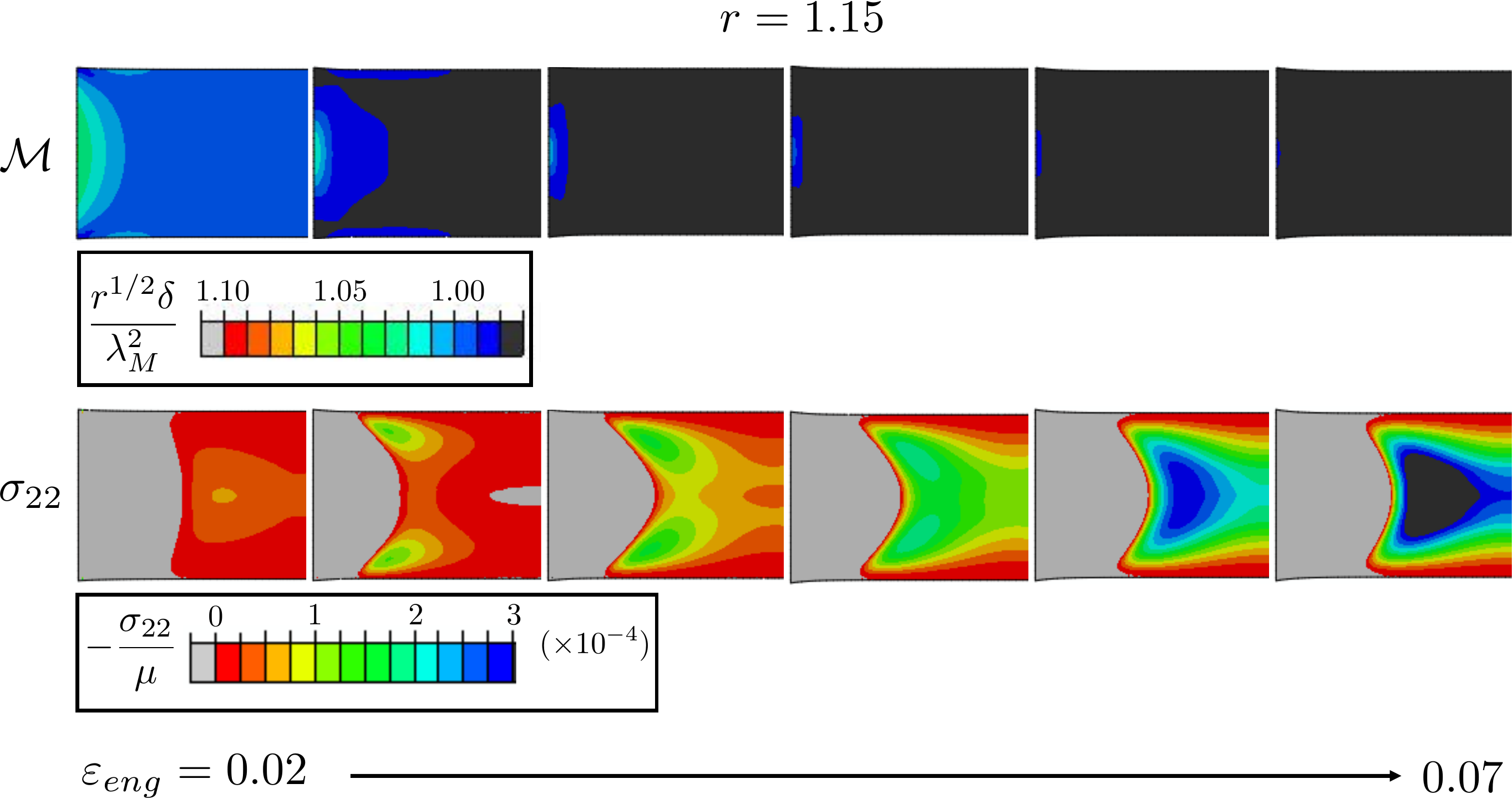}
\caption{The evolution of the microstructure and the lateral compressive stress with increasing stretch in a sheet with $r=1.15$.  Note that there is little to no compression in the middle as long as microstructure persists near the grips.}
\label{fig:CompTransition}
\end{figure}

\begin{figure}
\centering
\includegraphics[width =6in]{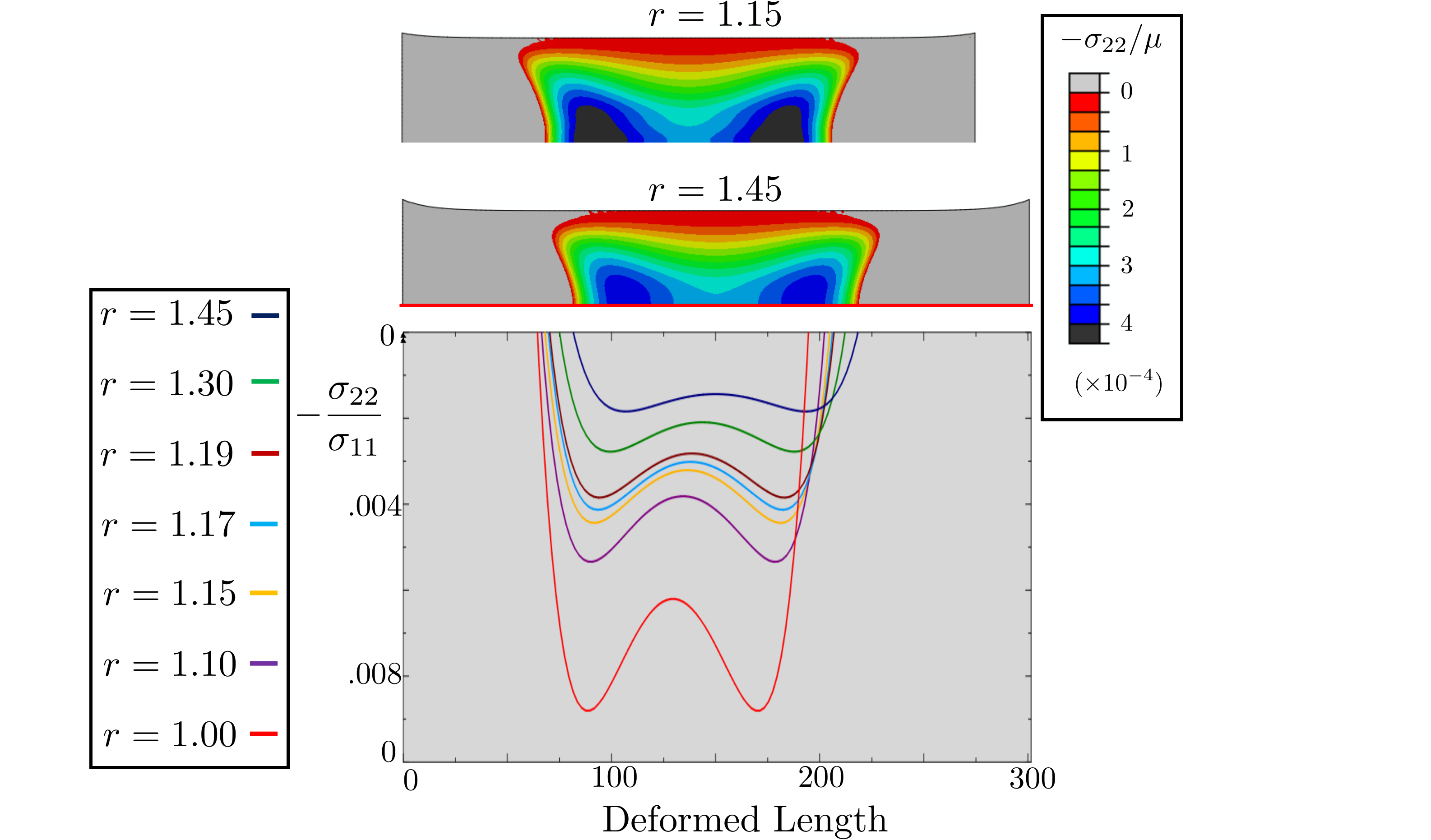}
\caption{The distribution of the lateral compressive stress (top) and the ratio of the lateral compressive stress to longitudinal tensile stress along the middle line in sheets with various $r$ at the stretch where the microstructure just disappears at the clamps.}
\label{fig:NoCompExplain}
\end{figure}

Figures \ref{fig:CompTransition} and \ref{fig:NoCompExplain} elaborate on this. Figure \ref{fig:CompTransition} shows how the microstructure parameter and transverse compression evolve in the sheet with increasing stretch, starting at the end of soft behavior.  We observe that the microstructure persists near the grips but reduces with increasing stretch until it is fully driven out.  Additionally, we see small regions of transverse compressive stress form near the free edges, and these region gradually moves inward with increasing magnitude as microstructure is driven out.  Finally, when all microstructure is driven out, we have transverse compressive stress in the middle similar to that of the purely elastic sheet (top of Figure \ref{fig:CompTen}(a)).  This compression increases with further stretch and eventually leads to wrinkling (Figure \ref{fig:Wrinkle}).  However, as shown in Figure \ref{fig:NoCompExplain}, the ratio of the transverse compression to the longitudinal tension at a value of stretch where all microstructure is driven out decreases with increasing $r$.   As we know from the study of elastic sheets, high relative longitudinal tension suppresses wrinkles\footnote{This is also evident in the role of aspect ratio as was shown by Ling Zheng \cite{z_08_wrinkle} and Nayyer {\it et al.} \cite{nrh_11_ijss}.} -- note from Figure \ref{fig:Wrinkle}(a) that wrinkles disappear in the elastic sheet at high stretch.   Thus, the microstructure near the grips delays the formation of a central region of transverse compressive stress and the onset of wrinkles.  However, large relative longitudinal tension also suppresses wrinkles; so if the delay is sufficient as is the case for larger $r$, then wrinkling is fully suppressed.

\subsection{Simulations with effective membrane theory and comparison}

\begin{figure}
\centering
\includegraphics[width =6.0in]{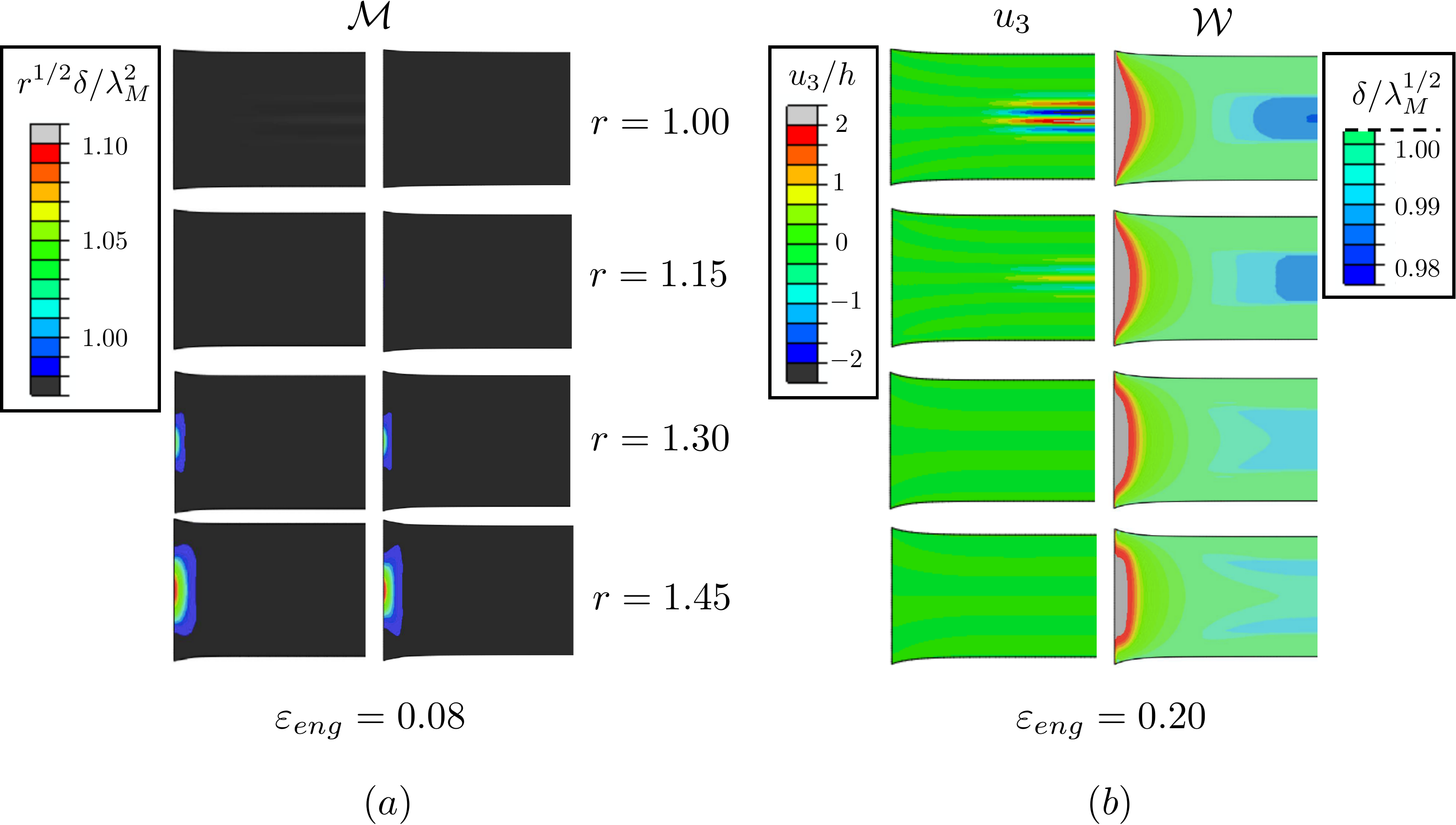}
\caption{Comparison of clamped stretched simulation under the effective membrane theory and Koiter theory for $r = 1,1.15,1.30$ and $1.45$ nematic sheets.  (a) Microstructure comparison at stretch $\varepsilon_{eng} = 0.08$; Koiter (left) and membrane (right).  (b) Wrinkling comparison at stretch $\varepsilon_{eng} =0.20$; Koiter (left) and membrane (right).  }
\label{fig:Comparison}
\end{figure}

We repeat the clamped-stretch simulations of purely elastic and nematic sheets using the membrane or tension field theory.  The nominal force vs. nominal strain is shown in Figure  \ref{fig:Soft}(a) as the points.  We see that the results agree well with those obtained using the Koiter theory.  We show the evolution of the microstructure at $\varepsilon_{eng} = 0.08$ for various values of $r$, and compare it with those obtained with the Koiter theory in Figure \ref{fig:Comparison}(a).  Again we find striking agreement.  Recall that the membrane theory does not describe the details of the wrinkles but relaxes over them, i.e., computes their consequence assuming that they were infinitely fine.  So, the results do not show any out of plane displacement.  However, recall that the curve $\delta = \lambda_M^{1/2}$ is the boundary between the presence and absence of wrinkling.  Therefore, any value of the parameter
\begin{equation}
\frac{\delta}{\lambda_M^{1/2}}
\end{equation}
below 1 indicates the presence of (infinitely) fine scale wrinkles in the simulation with this membrane theory.  This indicator is shown in Figure \ref{fig:Comparison}(b) for $\varepsilon_{eng} = 0.20$, and compared with the wrinkles computed earlier.  Again, we see very good agreement.  

Thus we conclude that the membrane or tension field theory provides a very good description of the overall behavior of the sheets including nominal stress-strain relation, the formation of microstructure and the formation of wrinkles.  Further, the agreement using an independent set of simulations provides confidence in our understanding regarding the suppression of wrinkles by nematic microstructure.

\section{Conclusions} \label{sec:conc}

We have studied the mechanical behavior of thin sheets of nematic elastomers subjected to stretch.   Nematic elastomers can form fine scale microstructure or stripe domains, whereas thin sheets can suffer wrinkling instabilities.  We have systematically developed a theory that is able to describe both instabilities.  Specifically, since  the scale of the microstructure is small compared to the scale of the wrinkles, we relax the microstructure but regularize the wrinkles by identifying the nature of the bending energy.  The result is a Koiter-type theory that includes a term to describe the in-plane stretch and a term to capture the out-of-plane deformation of initially flat sheets.

We have used this theory to show that the ability to form microstructure suppresses the wrinkles in nematic sheets subjected to stretch through clamped grips.  These results are consistent with the observations of Kundler and Finkelmann \cite{kf_mrc_95}.  We believe that this finding is technologically significant because wrinkling is a significant impediment in light-weight deployable space structures including solar sails, telescopes and antennas.  This remains a topic for future exploration.

Finally, our results show that the membrane or tension field theory of Cesana {\it {\it et al.}\ }\cite{cpk_arma_15} is able to capture the overall details including the macroscopic response, the ability of the material to form microstructure and the propensity of the sheets to wrinkle.

%%%%%%%%%%%%%%%%%%%%%%%%%%%%%%%%%%%%%%%%%%%%%%%%
%%%%%%%%%%%%%%%%%%%%%%%%%%%%%%%%%%%%%%%%%%%%%%%%
\section*{Acknowledgment}
P.P. is grateful for the support of the NASA Space Technology Research Program.

%%%%%%%%%%%%%%%%%%%%%%%%%%%%%%%%%%%%%%%%%%%%%%%%
%%%%%%%%%%%%%%%%%%%%%%%%%%%%%%%%%%%%%%%%%%%%%%%%

\appendix 

\section{The membrane term of the Koiter theory}\label{sec:KoiterAppend1}

\begin{proposition}\label{W3Dqc2D}
The equalities asserted in (\ref{eq:DeriveStretch}) and (\ref{eq:stretchEnergy}) do, in fact, hold.  Moreover, the infimum in (\ref{eq:DeriveStretch}) is attained for each full-rank $\tilde{F} \in \mathbb{R}^{3\times2}$ by setting
\begin{align}\label{eq:infimumAttained}
b^{\ast} := \arg \min_{b\in \mathbb{R}^3} W_{3D}^{qc}(\tilde{F}|b) = \frac{\tilde{F} \tilde{e_1} \times \tilde{F} \tilde{e_2}}{|\tilde{F} \tilde{e_1} \times \tilde{F} \tilde{e_2}|^2}.
\end{align}
\end{proposition}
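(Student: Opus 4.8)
\emph{Plan.} I would reduce to a diagonal planar gradient, identify the admissible out-of-plane vectors, show the minimizer over $b$ is exactly the vector in (\ref{eq:infimumAttained}), and then evaluate the energy there and match it region by region. Since $W_{3D}^{qc}(F)=\psi_{3D}(\lambda_M(F),\lambda_M(\cof F))$ depends on $F$ only through the frame-indifferent isotropic invariants $\lambda_M(F)$ and $\lambda_M(\cof F)$, the reduced density $(W_{3D}^{qc})_{2D}$ and its minimizer depend on $\tilde F$ only through $(s,t):=(\lambda_M(\tilde F),\delta(\tilde F))$ with $t\le s^2$, so I may take $\tilde F = s\, e_1\otimes\tilde e_1 + (t/s)\, e_2\otimes\tilde e_2$. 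Setting $n:=\tilde F\tilde e_1\times\tilde F\tilde e_2 = t\, e_3$, the incompressibility constraint $\det(\tilde F\,|\,b)=1$ reads $b\cdot n = 1$, so the admissible $b$ fill the affine plane $b^{\ast}+\spn\{e_1,e_2\}$ where $b^{\ast}=n/|n|^2 = e_3/t = (\tilde F\tilde e_1\times\tilde F\tilde e_2)/|\tilde F\tilde e_1\times\tilde F\tilde e_2|^2$ is the minimal-norm solution, i.e.\ precisely the candidate in (\ref{eq:infimumAttained}). Writing a general admissible $b$ as $b^{\ast}+v$ with $v=v_1e_1+v_2e_2$, it remains to show $v=0$ minimizes $b\mapsto W_{3D}^{qc}(\tilde F\,|\,b)$.

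This is the crux, and I would establish it by combining two monotonicity facts. First, $\psi_{3D}$ is nondecreasing in each argument on the physical domain $\{s^{1/2}\le t\le s^2\}$: one differentiates each of its three branch formulas and uses the defining inequalities of $L,M,S$ (for instance on $S$ one has $\partial_s\psi_{3D}\ge0\Leftrightarrow t\le r^{-1/2}s^2$ and $\partial_t\psi_{3D}\ge0\Leftrightarrow t\ge s^{1/2}$), together with continuity across the region interfaces. Second, both $\lambda_M(\tilde F\,|\,b^{\ast}+v)$ and $\lambda_M(\cof(\tilde F\,|\,b^{\ast}+v))$ are nondecreasing in $v_1^2$ and in $v_2^2$: a direct computation gives $|(\tilde F\,|\,b^{\ast}+v)|^2 = s^2+t^2/s^2+1/t^2+v_1^2+v_2^2$ and $|\cof(\tilde F\,|\,b^{\ast}+v)|^2 = 1/s^2+s^2/t^2+t^2+(t^2/s^2)v_1^2+s^2v_2^2$, so the first two elementary symmetric functions $e_1,e_2$ of the squared singular values are affine in $v_1^2,v_2^2$ with nonnegative coefficients, while $e_3\equiv1$; and the largest (resp.\ smallest) root of $\mu^3-e_1\mu^2+e_2\mu-1$ is nondecreasing (resp.\ nonincreasing) when $e_1,e_2$ increase with $e_3$ fixed, by the implicit function theorem, using $\sigma_1^2=\lambda_M(\tilde F\,|\,b)^2\ge s^2\ge t^2/s^2$ and $\sigma_3^2=\sigma_{\min}(\tilde F\,|\,b)^2\le (t/s)^2\le s^2$. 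Combining the two facts, $W_{3D}^{qc}(\tilde F\,|\,b^{\ast}+v)=\psi_{3D}(\lambda_M,\lambda_M(\cof))$ is minimized at $v=0$; this yields (\ref{eq:DeriveStretch}) in the form $(W_{3D}^{qc})_{2D}(\tilde F)=W_{3D}^{qc}(\tilde F\,|\,b^{\ast})$ and the attainment claim (\ref{eq:infimumAttained}).

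It then remains to evaluate $W_{3D}^{qc}(\tilde F\,|\,b^{\ast})=\psi_{3D}(\lambda_M(M^{\ast}),\lambda_M(\cof M^{\ast}))$ for $M^{\ast}:=\diag(s,\,t/s,\,1/t)$, whose singular values are $\{s,\,t/s,\,1/t\}$. Ordering them according to the signs of $st-1$ and $t^2-s$ gives $(\lambda_M(M^{\ast}),\lambda_M(\cof M^{\ast}))$, and a routine but lengthy check in the four regions shows the active branch of $\psi_{3D}$ reproduces exactly $\varphi_{ps}(s,t)$: on $\mathcal M$ and $\mathcal S$ one uses that $\mathcal M=M$ and $\mathcal S=S$ with identical formulas; on $\mathcal L_m\subseteq\mathcal L$ one checks that $M^{\ast}$ falls in the zero set $L$; and on $\mathcal C\subseteq\mathcal W\cup\mathcal L$ one finds $M^{\ast}$ falls in $S$ over $\mathcal C\cap(\mathcal N_1\cup\mathcal N_3)$ and in $M$ over $\mathcal C\cap\mathcal N_2$, the resulting expressions matching $\varphi_{2D}$ on $\mathcal N_1,\mathcal N_3$ and $\mathcal N_2$ respectively, i.e.\ $W_{2D}(\tilde F)$. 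This proves (\ref{eq:DeriveStretch}), and recording which branch is active proves the representation (\ref{eq:stretchEnergy}). As a shortcut, restricting the test maps for $W_{3D}^{qc}(\tilde F\,|\,b)$ to horizontal slices gives the general bound $(W_{3D}^{qc})_{2D}\ge W_{2D}^{qc}$, which already disposes of the lower bound on $\mathcal L_m\cup\mathcal M\cup\mathcal S$ and leaves only the region $\mathcal C$ to the monotonicity argument above.

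The main obstacle is the second monotonicity fact: showing that adding an in-plane component $v$ to the out-of-plane column of $(\tilde F\,|\,b^{\ast})$ can only spread the singular values of $(\tilde F\,|\,b)$ apart. Once this is in hand, optimality of $b^{\ast}$ and the attainment formula follow immediately, and the remaining effort is the bookkeeping of the four-region evaluation.
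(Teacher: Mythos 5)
Your proposal is correct in substance and follows the same skeleton as the paper's proof: reduce by frame-indifference and isotropy to a diagonal planar gradient, note that incompressibility pins the out-of-plane column to the affine plane through $b^{\ast}$, show that the normal choice (your $v=0$) minimizes by combining monotonicity of $\psi_{3D}$ with the fact that in-plane components of $b$ can only increase $\lambda_M(\cdot)$ and $\lambda_M(\cof\cdot)$, and finish with the region-by-region evaluation (your case bookkeeping by the signs of $st-1$ and $t^2-s$ is exactly (\ref{eq:qcExplicit1})--(\ref{eq:qcExplicit3})). The one genuine difference is how the invariant comparison is established. The paper does it in two lines: each column of $F(\bar{b}_1,\bar{b}_2)$ and of $\cof F(\bar{b}_1,\bar{b}_2)$ is at least as long as the corresponding column of the diagonal matrices $F_0$ and $\cof F_0$, and for a diagonal matrix the largest column norm \emph{is} the largest singular value, giving (\ref{eq:qcIneq1})--(\ref{eq:qcIneq2}) with no calculus. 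Your route through the characteristic polynomial is sound --- since $e_1=|M|^2$ and $e_2=|\cof M|^2$ are affine in $v_1^2,v_2^2$ with the coefficients you computed, the directional derivative of the largest root is $\mu_1\big[(\mu_1-t^2/s^2)\,dv_1^2+(\mu_1-s^2)\,dv_2^2\big]/p'(\mu_1)\ge 0$ by your bound $\sigma_1\ge s$ (and $t\le s^2$), and similarly the smallest root is nonincreasing using $\sigma_3\le t/s$ --- but it is heavier and, as stated, needs a patch at parameter values where the extreme root is repeated, since $p'(\mu)=0$ there and the implicit function theorem fails (handle by continuity or a.e.\ differentiability along the path); the column-norm comparison avoids this entirely, and note that only the comparison with $v=0$ is needed, not monotonicity along the whole path. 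Two smaller remarks: your branch-by-branch verification that $\psi_{3D}$ is nondecreasing is a welcome substitute for the paper's appeal to the contour plot; and your ``shortcut'' $(W_{3D}^{qc})_{2D}\ge W_{2D}^{qc}$ by ``restricting the test maps to horizontal slices'' is loosely phrased --- restricting test functions bounds an infimum from above, not below; the correct version bounds the three-dimensional integrand below by $W_{2D}$ and slices by Fubini --- but this aside is inessential since your main argument already covers all regions.
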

\begin{proof}
Let $\tilde{F} \in \mathbb{R}^{3\times2}$ and $b \in \mathbb{R}^3$.  We may assume that $\tilde{F}$ is full-rank as the equality holds trivially otherwise (with $W_{3D}^{qc} = (W_{3D}^{qc})_{2D} = +\infty$).  By the singular value decomposition, we can set 
\begin{align}\label{eq:SVD1}
\tilde{F} = Q \tilde{D} \tilde{R}, \quad \text{ for } \quad Q \in SO(3), \quad \tilde{R} \in O(2),\quad  \tilde{D} = \text{diag}(\bar{\lambda}_M, \bar{\delta}/\bar{\lambda}_M) \in \mathbb{R}^{3\times2}.
\end{align}
Here $\bar{\lambda}_M = \lambda_M(\tilde{F}) > 0$ is the maximum singular value of $\tilde{F}$, and $\bar{\delta} =\delta(\tilde{F}) >0$ is the areal stretch of $\tilde{F}$ as defined below (\ref{eq:W2Dqc}).  Both are positive since $\tilde{F}$ is full-rank.  In addition, we let 
\begin{align}\label{eq:bijectB}
\bar{b} := (\det \tilde{R}) Q^T b.
\end{align}
By the frame invariance and isotropy of $W_{3D}^{qc}$, we observe that 
\begin{align}\label{eq:W3DqcSimple}
W_{3D}^{qc}(\tilde{F}|b) = W_{3D}^{qc}(Q(\tilde{D}|\bar{b})(\tilde{R}|(\det \tilde{R}) e_3)) = W_{3D}^{qc}(\tilde{D}|\bar{b})
\end{align}
since $Q$ and $(\tilde{R}|(\det \tilde{R}) e_3)$ are both in $SO(3)$.  Given this equality and recalling that $\tilde{F}$ is an arbitrary full-rank $\mathbb{R}^{3\times2}$ matrix and $b$ is arbitrary, we see that to prove the equality (\ref{eq:stretchEnergy}) it suffices to optimize $W_{3D}^{qc}$ amongst the deformation 
\begin{align}\label{eq:optimizeF}
F(\bar{b}_1,\bar{b}_2) := \left(\begin{array}{ccc} \bar{\lambda}_M & 0 & \bar{b}_1 \\ 0 & \bar{\delta}/\bar{\lambda}_M & \bar{b}_2 \\ 0 & 0 & \bar{\delta}^{-1} \end{array}\right).
\end{align}
Indeed, since $\bar{b}$ in (\ref{eq:bijectB}) is a bijective function of $b$, we have given (\ref{eq:W3DqcSimple})
\begin{align}\label{eq:InfChains}
(W_{3D}^{qc})_{2D}(\tilde{F}) := \inf_{b \in \mathbb{R}^3} W_{3D}^{qc}(\tilde{F}|b) = \inf_{\bar{b} \in \mathbb{R}^3} W_{3D}^{qc}(\tilde{D}|\bar{b}) = \inf_{(\bar{b}_1, \bar{b}_2) \in \mathbb{R}^2} W_{3D}^{qc}(F(\bar{b}_1,\bar{b}_2)).
\end{align}
For the last equality, as the infimum is necessarily incompressible, we enforce $\det( \tilde{D} |\bar{b}) = 1$ (which is only true if $\bar{b} \cdot e_3 = \delta^{-1}$ as in (\ref{eq:optimizeF})).  

Now, we claim that for any full-rank $\tilde{F}$,
\begin{align}\label{eq:qcClaim}
(W_{3D}^{qc})_{2D}(\tilde{F}) = W_{3D}^{qc}(F_0) \quad \text{ where } \quad F_{0} := F(0,0).
\end{align}
To show this, we will make use of the fact that $W_{3D}^{qc}(F) = \psi_{3D}(\lambda_M(F), \lambda_M(\cof F))$ and the fact that $\psi_{3D}$ is a monotonically increasing function in both of its arguments (this can be seen from the contour plot in Figure \ref{fig:3DEnergies}(b)).   In this direction, we first observe that 
\begin{align}\label{eq:qcIneq1}
\lambda^2_M(F(\bar{b}_1, \bar{b}_2)) \geq \max_{i \in \{1,2,3\}} |F(\bar{b}_1, \bar{b}_2) e_i|^2 \geq \max_{i \in \{1,2,3\}} |F_0 e_i|^2 = \lambda^2_M(F_0).
\end{align}
This is deduced from the parameterization in (\ref{eq:optimizeF}); as is the fact that 
\begin{align}\label{eq:cofFbarbs}
\cof F(\bar{b}_1, \bar{b}_2) =\left(\begin{array}{ccc} \bar{\lambda}_M^{-1} & 0 & 0 \\ 0 & \bar{\lambda}_M \bar{\delta}^{-1} & 0 \\ -\bar{b}_1 \bar{\delta}/\bar{\lambda}_M & -\bar{\lambda}_M \bar{b}_2 & \delta \end{array}\right).
\end{align}
With (\ref{eq:cofFbarbs}), we find that 
\begin{align}\label{eq:qcIneq2}
\lambda^2_M(\cof F(\bar{b}_1, \bar{b}_2)) \geq \max_{i \in \{1,2,3\}} |\cof F(\bar{b}_1, \bar{b}_2) e_i|^2 \geq \max_{i \in \{1,2,3\}} |\cof F_0 e_i|^2 = \lambda^2_M(\cof F_0)
\end{align}
as well.  Finally, combining (\ref{eq:qcIneq1}) and (\ref{eq:qcIneq2}) and using the fact that $\psi_{3D}$ is a monotonically increasing function in both its arguments, we find that 
\begin{equation}
\begin{aligned}
W^{qc}_{3D}(F(\bar{b}_1, \bar{b}_2)) &= \psi_{3D}(\lambda_M(F(\bar{b}_1,\bar{b}_2)), \lambda_M(\cof F(\bar{b}_1, \bar{b}_2)))\\
& \geq \psi_{3D}(\lambda_M(F_0), \lambda_M(\cof F_0)) = W_{3D}^{qc}(F_0).
\end{aligned}
\end{equation}
Hence, the infimum is attained by setting $(\bar{b}_1, \bar{b}_2) = 0$, i.e., 
\begin{align}\label{eq:qcClaimIdent}
\inf_{(\bar{b}_1,\bar{b}_2) \in \mathbb{R}^2} W^{qc}_{3D}(F(\bar{b}_1, \bar{b}_2)) = W_{3D}^{qc}(F_0),
\end{align}
and combining (\ref{eq:qcClaimIdent}) with (\ref{eq:InfChains}), we deduce the claim in (\ref{eq:qcClaim}).

The formulas (\ref{eq:WpsDef}) and (\ref{eq:stretchEnergy}) are obtained using (\ref{eq:qcClaim}) by explicit verification: briefly, if $\bar{\lambda}_M \geq \bar{\delta}^{-1}$ and $\bar{\delta} \geq \bar{\lambda}_M \bar{\delta}^{-1}$, then $(\lambda_M(F_0), \lambda_M(\cof F_0)) = (\bar{\lambda}_M, \bar{\delta})$ and 
\begin{align}\label{eq:qcExplicit1}
W_{3D}^{qc}(F_0) = \psi_{3D}(\bar{\lambda}_M, \bar{\delta}) = \psi_{2D}(\bar{\lambda}_M, \bar{\delta}) = W_{2D}^{qc}(\tilde{F}) , \quad (\lambda_M(\tilde{F}), \delta(\tilde{F})) \in \mathcal{M} \cup \mathcal{S} \cup (\mathcal{L}_M \cap L);
\end{align}
if $\bar{\lambda}_M \geq \bar{\delta}^{-1}$ and $\bar{\lambda}_M \bar{\delta}^{-1} > \bar{\delta}$, then $(\lambda_M(F_0), \lambda_M(\cof F_0)) = (\bar{\lambda}_M, \bar{\lambda}_M \bar{\delta}^{-1})$ and 
\begin{equation}
\begin{aligned}\label{eq:qcExplicit2}
W_{3D}^{qc}(F_0) = \psi_{3D}(\bar{\lambda}_M, \bar{\lambda}_M \bar{\delta}^{-1}) = \begin{cases}
\psi_{2D}(\bar{\lambda}_M, \bar{\delta}) = W_{2D}^{qc}(\tilde{F}) & \text{ if } (\lambda_M(\tilde{F}), \delta(\tilde{F}) )\in \mathcal{L}_m \setminus L \\
\varphi_{2D}(\bar{\lambda}_M, \bar{\delta}) = W_{2D}(\tilde{F}) & \text{ if }  (\lambda_M(\tilde{F}), \delta(\tilde{F})) \in \mathcal{C} \cap \tilde{\mathcal{C}}
\end{cases}
\end{aligned}
\end{equation}
where $\tilde{\mathcal{C}} := \{ (s,t) \in \mathbb{R}^+ \times \mathbb{R}^+ \colon t \geq s^{-1}\}$; if $\bar{\lambda}_M < \bar{\delta}^{-1}$ and $\bar{\lambda}_M \bar{\delta}^{-1} > \bar{\delta}$, then $(\lambda_M(F_0), \lambda_M(\cof F_0)) = (\bar{\delta}^{-1}, \bar{\lambda}_M \bar{\delta}^{-1})$ and 
\begin{align}\label{eq:qcExplicit3}
W_{3D}^{qc}(F_0) = \psi_{3D}(\bar{\delta}^{-1}, \bar{\lambda}_M \bar{\delta}^{-1})  = \varphi_{2D}(\bar{\lambda}_M, \bar{\delta}) = W_{2D}(\tilde{F}), \quad (\lambda_M(\tilde{F}), \delta(\tilde{F})) \in \mathcal{C} \setminus \tilde{\mathcal{C}}.
\end{align}
This exhausts all possible cases, as any other case is actually incompatible with the fact that $\bar{\delta} = \bar{\lambda}_M \bar{\lambda}_m$ where $\bar{\lambda}_m = \lambda_m(\tilde{F})$ is the minimum singular value of $\tilde{F}$.  Combining (\ref{eq:qcExplicit1}), (\ref{eq:qcExplicit2}) and (\ref{eq:qcExplicit3}) with the fact that (\ref{eq:qcClaim}) holds, we arrive at the representation in (\ref{eq:stretchEnergy}).  

To see that the infimum is attained with $b^{\ast}$ in (\ref{eq:infimumAttained}), we notice that actually 
\begin{align}
b^{\ast} = \bar{\delta}^{-2} Q (\tilde{D} \tilde{R} \tilde{e}_1 \times \tilde{D} \tilde{R} \tilde{e}_2) = \bar{\delta}^{-1}(\det \tilde{R}) Qe_3.
\end{align}
for $Q, \tilde{D}$ and $\tilde{R}$ in (\ref{eq:SVD1}).  Hence, again using the frame invarience and isotropy of $W_{3D}^{qc}$, 
\begin{align}
W_{3D}^{qc}(\tilde{F}|b^{\ast}) = W_{3D}^{qc}(Q(\tilde{D}| \bar{\delta}^{-1} e_3)(\tilde{R}| (\det \tilde{R}) e_3)) = W_{3D}^{qc}(F_0).  
\end{align}
So given (\ref{eq:qcClaim}), the infimum is attained with $b^{\ast}$ as desired.  This completes the proof.  
\end{proof}

%%%%%%%%%%%%%%%%%%%%%%%%%%%%%%%%%%%%%%%%%%%%%%%%
%%%%%%%%%%%%%%%%%%%%%%%%%%%%%%%%%%%%%%%%%%%%%%%%
\section{Extension of the midplane deformation to the entire sheet}\label{sec:KoiterAppend}

%%%%%%%%%%%%%%%%%%%%%%%%%%%%%%%%%%%%%%%%%%%%%%%%
\begin{proposition}\label{IncompProp}
Fix $\tau > 0$.  There exists an $\bar{h} = \bar{h}(\tau) > 0$ with $\bar{h}k < 1$ such that for any $y_k$ in (\ref{eq:ykDef}), $b_k$ in (\ref{eq:bkDef}) and $h \in (0,\bar{h})$, there exists a unique $\xi_k^h \in C^1(\bar{\Omega}_h,\mathbb{R}^3)$ for which $y_k^h$ defined in (\ref{eq:GlobalIncomp1}) satisfies (\ref{eq:GlobalIncomp2}) and $\nabla y_k^h$ satisfies (\ref{eq:ImportantForm}).
\end{proposition}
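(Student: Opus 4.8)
The plan is to reduce the pointwise incompressibility constraint (\ref{eq:GlobalIncomp2}) to a scalar ODE in the thickness variable $x_3$, following the type of construction used by Conti and Dolzmann \cite{cd_06_chap,cd_cov_09}, and then to solve that ODE explicitly. Writing $y_k^h = y_k + \xi_k^h b_k$ as in (\ref{eq:GlobalIncomp1}), the columns of $\nabla y_k^h$ are $\partial_\alpha y_k + (\partial_\alpha\xi_k^h)b_k + \xi_k^h\,\partial_\alpha b_k$ for $\alpha=1,2$ and $(\partial_3\xi_k^h)b_k$. The crucial structural fact about the wrinkling ansatz (\ref{eq:ykDef}) is that $\partial_1 y_k = \bar\lambda_M e_1$ is constant and $\partial_2 y_k = \bar\lambda_M^{-1/2}\gamma_k'$, so that $b_k$ in (\ref{eq:bkDef}) depends on $x_2$ only; in particular $\partial_1 b_k = 0$, $\partial_1\xi_k^h = 0$, and $b_k$ is normal to the tangent plane of $y_k$. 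Since every term of the first two columns that is a multiple of $b_k$ is parallel to the third column, it drops out of the determinant, and expanding the remaining $3\times3$ determinant using (i) $\det(\tilde\nabla y_k|b_k) = (\partial_1 y_k\times\partial_2 y_k)\cdot b_k = 1$ — which holds precisely by the choice of $b_k$, cf. the attainment identity (\ref{eq:infimumAttained}) in Proposition \ref{W3Dqc2D} — and (ii) the cofactor identity $\det(\partial_1 y_k|\partial_2 b_k|b_k) = (b_k\times\partial_1 y_k)\cdot\partial_2 b_k = \Tr(G_k)$, which follows from $(\tilde\nabla y_k|b_k)^{-T}e_2 = \cof(\tilde\nabla y_k|b_k)\,e_2 = b_k\times\partial_1 y_k$, yields the factorization
\begin{align}\label{eq:detfact}
\det\nabla y_k^h = \partial_3\xi_k^h\,\bigl(1 + \xi_k^h\,\Tr(G_k)\bigr),
\end{align}
where $\Tr(G_k) = \Tr(G_k(\tilde x))$ is a function of $x_2$ alone (explicitly, $\Tr(G_k) = \gamma_k'\cdot(e_1\times\gamma_k'')$).

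Setting the right side of (\ref{eq:detfact}) equal to $1$ gives, for each $\tilde x$, the separable first-order ODE $\partial_3\xi_k^h\,(1+\xi_k^h\Tr(G_k)) = 1$; integrating in $x_3$ with $\xi_k^h(\tilde x,0)=0$ from (\ref{eq:xi0}) yields $\xi_k^h + \tfrac12\Tr(G_k)(\xi_k^h)^2 = x_3$, whose branch through the origin is the explicit function
\begin{align}\label{eq:xiexplicit}
\xi_k^h(\tilde x,x_3) = \frac{\sqrt{1 + 2x_3\,\Tr(G_k(\tilde x))}-1}{\Tr(G_k(\tilde x))},
\end{align}
read as $x_3$ where $\Tr(G_k)$ vanishes (it extends analytically across $\Tr(G_k)=0$, so there is no loss of smoothness there). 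Uniqueness among $C^1$ extensions with $\xi_k^h(\cdot,0)=0$ follows since $\eta\mapsto\eta+\tfrac12\Tr(G_k)\eta^2$ is strictly monotone on the branch containing $\eta=0$ as long as the radicand in (\ref{eq:xiexplicit}) is positive. To guarantee positivity I would invoke the admissibility class: from $b_k = \bar\lambda_M^{-1/2}(e_1\times\gamma_k')$ one gets $\partial_2 b_k = \bar\lambda_M^{-1/2}(e_1\times\gamma_k'')$, hence $\Tr(G_k) = \gamma_k'\cdot(e_1\times\gamma_k'')$ and $\|\Tr(G_k)\|_{L^\infty}\le\|\gamma_k''\|_{L^\infty}\le\tau k$ for $\gamma_k\in\mathcal A_k^\tau$ (recall (\ref{eq:admissable})). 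Thus $|2x_3\Tr(G_k)|\le\tau k h < 1$ once $h<\bar h(\tau)$ with $\bar h k$ small enough, so $1+2x_3\Tr(G_k)$ is bounded below by a positive constant on $\overline\Omega_h$; then (\ref{eq:xiexplicit}) is real-valued, and since $\gamma_k\in C^3$ makes $\Tr(G_k)$ a $C^1$ function of $x_2$, the formula (\ref{eq:xiexplicit}) defines a function in $C^1(\overline\Omega_h)$.

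It remains to verify the expansion (\ref{eq:ImportantForm}) by Taylor expanding (\ref{eq:xiexplicit}) about $x_3=0$. Writing $g:=\Tr(G_k(x_2))$, one has $\partial_3\xi_k^h = (1+2gx_3)^{-1/2} = 1 - gx_3 + \tfrac32 g^2x_3^2 + O(g^3x_3^3)$ and $\xi_k^h = x_3 - \tfrac12 gx_3^2 + O(g^2x_3^3)$; moreover $\partial_1\xi_k^h = 0$ and $\partial_2\xi_k^h = (\partial_g\xi_k^h)\,g'$ with $\partial_g\xi_k^h = -\tfrac12(\xi_k^h)^2/(1+g\xi_k^h) = O(x_3^2)$ and $|g'| = |\gamma_k'\cdot(e_1\times\gamma_k''')|\le\tau k^2$, so $\partial_2\xi_k^h = O(k^2x_3^2)$. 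Substituting into
\begin{align}\label{eq:gradexp}
\nabla y_k^h = (\tilde\nabla y_k|b_k) + (\partial_3\xi_k^h - 1)\,b_k\otimes e_3 + \xi_k^h\,(\tilde\nabla b_k|0) + (\partial_2\xi_k^h)\,b_k\otimes e_2
\end{align}
and collecting terms by order — noting that $\Tr(G_k)$ and $(\tilde\nabla b_k|0)$ are each $O(k)$, so that, e.g., $x_3^2\,\Tr(G_k)\,(\tilde\nabla b_k|0)=O(k^2x_3^2)$ and the neglected remainders are $O(k^3x_3^3)$ — reproduces (\ref{eq:ImportantForm}) exactly; note also that $y_k^h = y_k + x_3 b_k - \tfrac{x_3^2}{2}\Tr(G_k)b_k + \dots$, consistent with the leading-order ansatz (\ref{eq:ext}).

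I expect the main obstacle to be the first step, the determinant factorization (\ref{eq:detfact}): it is what converts incompressibility into a tractable scalar ODE, and it rests on the rigid structure of (\ref{eq:ykDef}) — $\partial_1 y_k$ constant, $b_k$ normal to the midplane and independent of $x_1$ — together with the cofactor identity for $(\tilde\nabla y_k|b_k)^{-1}$. Once (\ref{eq:detfact}) is available, the remaining work — solving the ODE, establishing uniqueness and $C^1$ regularity, and carrying out the expansion — is essentially routine, the only care being the bookkeeping of powers of $k$ and $x_3$ via the $\mathcal A_k^\tau$ estimates.
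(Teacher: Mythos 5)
Your argument is correct, and it shares the paper's skeleton up to a point but then takes a genuinely more self-contained route. Both proofs rest on the same structural reduction: because $b_k$ and $\tilde{\nabla}y_k$ depend only on $x_2$, with $\det(\tilde{\nabla}y_k|b_k)=1$ and $\Tr(\cof G_k)=\det G_k=0$, the incompressibility constraint collapses to the scalar thickness ODE $\partial_3\xi_k^h\,(1+\xi_k^h\Tr (G_k))=1$ with $\xi_k^h(\tilde{x},0)=0$ --- the paper gets this via the composition $y_k^h(x)=v_k^h(\tilde{x},\xi_k^h(x))$ with $v_k^h=y_k+x_3b_k$ and $\det\nabla v_k^h=1+x_3\Tr(G_k)$ (its (\ref{eq:ode}) and (\ref{eq:detIdentity})), whereas you get the identical factorization by direct column manipulation and the cofactor identity; the two derivations are equivalent. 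The divergence is in how the ODE is handled: the paper invokes an abstract contraction-mapping result (imported from \cite{plb_arxiv_16}) to obtain existence, uniqueness and the estimates (\ref{eq:xikhProps}), and then extracts (\ref{eq:ImportantForm}) from the implicit relations (\ref{eq:xikh1})--(\ref{eq:xikh4}); you instead observe that the ODE is separable with coefficient affine in $\xi_k^h$, integrate it to $\xi_k^h+\tfrac12\Tr(G_k)(\xi_k^h)^2=x_3$, and write the branch through zero in closed form, after which regularity, uniqueness, the $\mathcal{A}_k^\tau$-based smallness condition ($\tau kh$ small so the radicand stays positive), and the expansion (\ref{eq:ImportantForm}) all follow by direct Taylor expansion. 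What your route buys is elementarity and self-containedness --- no appeal to external fixed-point machinery, and the estimates on $\xi_k^h$, $\partial_3\xi_k^h$, $\partial_2\xi_k^h$ come for free from the explicit formula; what the paper's route buys is generality, since the contraction argument does not rely on $\det\nabla v_k^h$ being affine in the thickness variable and so extends to midplane deformations beyond the uniaxial wrinkling ansatz (\ref{eq:ykDef}). Two cosmetic remarks: listing $\partial_1\xi_k^h=0$ among the inputs to the factorization is unnecessary (and a priori unknown) --- as your own derivation shows, the $b_k$-parallel contributions of the first two columns drop out regardless, and $\partial_1\xi_k^h=0$ is recovered a posteriori from the explicit solution; and your smallness requirement $\tau kh<1$ is a mild strengthening of the standing assumption $kh<1$, but it matches the role of $\bar{h}(\tau)$ in the statement and the analogous smallness implicit in the paper's contraction argument.
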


\begin{proof}
We first consider the naive deformation 
\begin{align}
v_k^h(x) := y_k(\tilde{x}) + x_3 b_k(\tilde{x}), \quad \tilde{x} \in \omega.
\end{align}
This deformation is not incompressible.  However, it is nearly so, as the gradient
\begin{align}\label{eq:vkhGrad}
\nabla v_k^h = (\tilde{\nabla} y_k |b_k) + x_3 (\tilde{\nabla} b_k|0) \quad \text{ on } \Omega_h
\end{align}
has a determinant which is $O(kx_3)$ close to unity.  Indeed, $\det(\tilde{\nabla} y_k |b_k) = 1$ by construction, and so we define 
\begin{align}\label{eq:GkDef}
G_k := (\tilde{\nabla} y_k|b_k)^{-1} (\tilde{\nabla} b_k|0) \quad \text{ on } \omega,
\end{align}
 and observe that given (\ref{eq:vkhGrad}) and (\ref{eq:GkDef}) and since $\det((\tilde{\nabla} y_k |b_k)^{-1}) = 1$ on $\omega$, 
\begin{equation}
\begin{aligned}\label{eq:detIdentity}
\det(\nabla v_k^h) &= \det( (\tilde{\nabla} y_k|b_k)^{-1} \nabla v_k^h) = \det( I + x_3G_k) = 1 + x_3 \Tr(G_k)  \quad \text{ on } \Omega_h . 
\end{aligned}
\end{equation}
Here, we used that both $\Tr(\cof G_k)$ and $\det G_k $ are zero on $\omega$.  With this relationship, we find that  
\begin{equation}
\label{eq:detEsts}
\begin{aligned}
&|\det(\nabla v_k^h) -1| \leq C(\tau) k|x_3|, \quad \text{ on } \Omega_h, \quad kh < 1, \\
&|\partial_2 \det(\nabla v_k^h)| \leq  C(\tau) k^2 |x_3|, \quad \text{ on } \Omega_h, \quad kh < 1, \\
& \partial_1 \det(\nabla v_k^h) = 0 \quad \text{ on } \Omega_h,
\end{aligned}
\end{equation}
using the fact that $G_k$ is independent of $x_1$ and $\gamma_k \in \mathcal{A}_k^\tau$.  An interesting observation is that as implied, the constant $C(\tau) >0$ can be chosen to depend only on $\tau$.  In particular, $G_k$ does not depend appreciably on $\bar{\lambda}_M > r^{1/3}$, and so the constant need not depend on this stretch.

Now, as (\ref{eq:detIdentity}) shows, the determinant of a generic $\nabla v_k^h$ is not unity.  Thus, we introduce a $\xi_k^h \in C^1(\bar{\Omega}_h, \mathbb{R})$  (for some $h>0$ to be chose later) satisfying $\xi_k^h(\tilde{x},0)= 0$ and define $y_k^h$ as in (\ref{eq:GlobalIncomp1}).  Using properties of the determinant, we find that 
\begin{align}\label{eq:ode}
\det (\nabla y_k^h )= 1 \quad \text{ on } \Omega_h \quad \Leftrightarrow \quad \partial_3 \xi_k^h(x) = \frac{1}{\det \left(\nabla v_k^h(\tilde{x}, \xi_k^h(x)) \right)}, \quad x \in \Omega_h.  
\end{align}
Therefore, the incompressibility of $\nabla y_k^h$ is equivalent to solving an ordinary differential equation in $\xi_k^h(\tilde{x}, \cdot)$.  It turns out that there exists an $\bar{h} = \bar{h}(\tau)> 0$ such that for any $h \in (0,\bar{h})$ and $k$ such that $kh < 1$, there exists a unique $\xi_k^h \in C^{1}(\bar{\Omega}_h,\mathbb{R})$ satisfying this differential equation subject to the initial condition $\xi_k^h(\tilde{x}, 0) = 0$.  Moreover, from (\ref{eq:detEsts}), $\xi_k^h$ has the properties that 
\begin{equation}
\begin{aligned}\label{eq:xikhProps}
&|\xi_k^h - x_3| \leq C(\tau)k|x_3|^2, \quad |\partial_3 \xi_k^h - 1| \leq C(\tau) k|x_3|, \quad \text{ on } \Omega_h\\
&|\partial_2 \xi_k^h| \leq C(\tau)k^2 |x_3|^2, \quad \partial_1 \xi_k^h = 0, \quad  \text{ on }  \Omega_h.
\end{aligned}
\end{equation}
This result is the consequence of a contraction map principle and some further analysis (presented, for instance, in \cite{plb_arxiv_16} Section 3). 

In the remainder, we assume $h \in (0,\bar{h})$, $kh < 1$, and $y_k^h$ as in (\ref{eq:GlobalIncomp1}) for $\xi_k^h \in C^{1}(\bar{\Omega}_h, \mathbb{R})$ satisfying $\xi_k^h(\tilde{x},0) = 0$, the ordinary differential equation (\ref{eq:ode}) and the estimates (\ref{eq:xikhProps}).  By explicit calculation, we have 
\begin{equation}
\label{eq:gradykh}
\begin{aligned}
\nabla y_k^h &= (\tilde{\nabla} y_k|b_k) + x_3(\tilde{\nabla} b_k|0)  + (\partial_3 \xi_k^h - 1) b_k \otimes e_3  \\
&\quad + (\xi_k^h - x_3) (\tilde{\nabla} b_k|0) + b_k \otimes \tilde{\nabla} \xi_k^h \quad \text{ on } \Omega_h. 
\end{aligned}
\end{equation}

We can extract a more illuminating form by examining closely identities and estimates for $\xi_k^h$.  
Indeed, combining the ordinary differential equation (\ref{eq:ode}) with the parameterization of $\det(\nabla v_k^h)$ in (\ref{eq:detIdentity}), we find that 
\begin{equation}
\label{eq:xikh1}
\begin{aligned}
\partial_3 \xi_k^h - 1 &= - \partial_3 \xi_k^h \xi_k^h \Tr(G_k)  \quad \text{ on } \Omega_h.
\end{aligned}
\end{equation}
In addition, making use of the boundary condition $\xi_k^h(\tilde{x},0) = 0$, the fundamental theorem of calculus, and our newfound parameterization (\ref{eq:xikh1}),
\begin{equation}
\label{eq:xikh2}
\begin{aligned}
\xi_k^h - x_3 &= -\frac{1}{2} (\xi_k^h)^2 \Tr(G_k)  \quad \text{ on } \Omega_h.
\end{aligned}
\end{equation}
Hence, with the estimates (\ref{eq:xikhProps}) and these parameterizations, we establish that 
\begin{equation}
\label{eq:xikh3}
\begin{aligned}
\partial_3 \xi_k^h - 1 &= -x_3 \Tr(G_k) + \frac{3}{2} x_3^2 \Tr(G_k)^2  + O(k^3 x_3^3)  \quad \text{ on } \Omega_h, \\
\xi_k^h - x_3 &=-\frac{1}{2} x_3^2 \Tr(G_k) + O(k^2 x_3^3) \quad \text{ on } \Omega_h.
\end{aligned}
\end{equation}
Finally, the results on the planar derivatives of $\xi_k^h$ in (\ref{eq:xikhProps}) imply that 
\begin{equation}
\label{eq:xikh4}
\begin{aligned}
b_k \otimes \tilde{\nabla} \xi_k^h & = O(k^2 x_3^2) b_k \otimes e_2 \quad \text{ on } \Omega_h.
\end{aligned}
\end{equation}
Combining (\ref{eq:gradykh}), (\ref{eq:xikh2}), (\ref{eq:xikh3}) and (\ref{eq:xikh4}), we arrive at the parameterization for $\nabla y_k^h$ in (\ref{eq:ImportantForm}) as desired.  
\end{proof}

%%%%%%%%%%%%%%%%%%%%%%%%%%%%%%%%%%%%%%%%%%%%%%%%
\begin{proposition}\label{EigenProp}
Let $y_k^h$ be as in Proposition \ref{IncompProp} under the assumption therein.  (\ref{eq:IdentLambdaM}) holds and if $kh$ is sufficiently small, then (\ref{eq:IdentLambdaM2}) also holds.  
\end{proposition}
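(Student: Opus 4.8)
\noindent\emph{Proof strategy.} The plan is to read off the full singular spectrum of $(\tilde{\nabla} y_k|b_k)$ and of $\nabla y_k^h$ directly from the rigid structure of the wrinkling ansatz. For the first claim (\ref{eq:IdentLambdaM}), recall from (\ref{eq:ykDef})--(\ref{eq:gammak}) that $\partial_1 y_k = \bar{\lambda}_M e_1$ and $\partial_2 y_k = \bar{\lambda}_M^{-1/2}\gamma_k'$ with $\gamma_k'\cdot e_1 = 0$ and $|\gamma_k'| = 1$; note $\gamma_k\cdot e_1\equiv 0$ forces $\gamma_k$, hence $\gamma_k'$ and $\gamma_k''$, into the $\{e_2,e_3\}$-plane. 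Then $|\partial_1 y_k \times \partial_2 y_k| = \bar{\lambda}_M^{1/2}$, so (\ref{eq:bkDef}) gives $b_k = \bar{\lambda}_M^{-1/2}(e_1\times\gamma_k')$, a vector of length $\bar{\lambda}_M^{-1/2}$ orthogonal to both $e_1$ and $\gamma_k'$. Thus the three columns of $(\tilde{\nabla} y_k|b_k)$ are mutually orthogonal with lengths $\bar{\lambda}_M$, $\bar{\lambda}_M^{-1/2}$, $\bar{\lambda}_M^{-1/2}$, so these are exactly its singular values (and its determinant is $1$, consistent with incompressibility). Since $\bar{\lambda}_M > r^{1/3}\geq 1$, we get $\lambda_M(\tilde{\nabla} y_k|b_k) = \bar{\lambda}_M$, while $\cof(\tilde{\nabla} y_k|b_k)$ has singular values $\bar{\lambda}_M^{1/2},\bar{\lambda}_M^{1/2},\bar{\lambda}_M^{-1}$, so $\lambda_M(\cof(\tilde{\nabla} y_k|b_k)) = \bar{\lambda}_M^{1/2}$. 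The pair $(\bar{\lambda}_M,\bar{\lambda}_M^{1/2})$ lies in $S$ because $\bar{\lambda}_M^{1/2}=(\bar{\lambda}_M)^{1/2}$ (so $t\geq s^{1/2}$) and $\bar{\lambda}_M^{1/2}\leq r^{-1/2}\bar{\lambda}_M^2$ is equivalent to $\bar{\lambda}_M\geq r^{1/3}$, which holds since $(\bar{\lambda}_M,\bar{\delta})\in\mathcal{W}$.

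For (\ref{eq:IdentLambdaM2}) I would compute $\nabla y_k^h$ column by column from (\ref{eq:GlobalIncomp1}). Since $b_k$ and $\xi_k^h$ are independent of $x_1$ (the latter by (\ref{eq:xikhProps})), $\partial_1 y_k^h = \partial_1 y_k = \bar{\lambda}_M e_1$ \emph{exactly}. The key structural identity is that $\partial_2 b_k$ is parallel to $\gamma_k'$: from $b_k = \bar{\lambda}_M^{-1/2}(e_1\times\gamma_k')$ we get $\partial_2 b_k = \bar{\lambda}_M^{-1/2}(e_1\times\gamma_k'')$, and since $\gamma_k''$ lies in the $\{e_2,e_3\}$-plane and is orthogonal to $\gamma_k'$ (because $|\gamma_k'|\equiv 1$), we may write $\gamma_k'' = c_k\,(e_1\times\gamma_k')$ with $c_k := \gamma_k''\cdot(e_1\times\gamma_k')$ and $|c_k|\leq\|\gamma_k''\|_{L^\infty}\leq\tau k$; the identity $e_1\times(e_1\times\gamma_k') = -\gamma_k'$ then gives $\partial_2 b_k = -\bar{\lambda}_M^{-1/2}c_k\gamma_k'$. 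Consequently $\partial_2 y_k^h = \bar{\lambda}_M^{-1/2}(1-\xi_k^h c_k)\gamma_k' + (\partial_2\xi_k^h)\,b_k$ and $\partial_3 y_k^h = (\partial_3\xi_k^h)\,b_k$ both lie in the $\{e_2,e_3\}$-plane, so $(\nabla y_k^h)^T\nabla y_k^h$ is block diagonal: it has the entry $\bar{\lambda}_M^2$ along $e_1$ and a $2\times2$ Gram block in the $\{e_2,e_3\}$-plane. Writing $\sigma_1\geq\sigma_2$ for the singular values of that in-plane block, $\det\nabla y_k^h = 1$ forces $\bar{\lambda}_M\sigma_1\sigma_2 = 1$, hence $\sigma_1\geq(\sigma_1\sigma_2)^{1/2} = \bar{\lambda}_M^{-1/2}$.

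It then remains to bound $\sigma_1$ from above. In the orthonormal frame $\{\gamma_k',\,e_1\times\gamma_k'\}$ of the $\{e_2,e_3\}$-plane, the in-plane block equals $\bar{\lambda}_M^{-1/2}$ times the lower-triangular matrix with diagonal $1-\xi_k^h c_k$, $\partial_3\xi_k^h$ and below-diagonal entry $\partial_2\xi_k^h$; by (\ref{eq:xikhProps}) and $|c_k|\leq\tau k$, on $\Omega_h$ with $kh<1$ one has $|\xi_k^h c_k|\leq C(\tau)kh$, $|\partial_3\xi_k^h-1|\leq C(\tau)kh$, $|\partial_2\xi_k^h|\leq C(\tau)(kh)^2$, so the operator norm of that matrix is $1+O(kh)$ and therefore $\sigma_1\leq\bar{\lambda}_M^{-1/2}(1+C(\tau)kh)$. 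Since $\bar{\lambda}_M>r^{1/3}$ strictly, $\bar{\lambda}_M^{-1/2}<r^{-1/2}\bar{\lambda}_M\leq\bar{\lambda}_M$ with a positive gap, so for $kh$ small enough (a threshold allowed to depend on $\bar{\lambda}_M$, $r$, $\tau$) we get $\sigma_1<r^{-1/2}\bar{\lambda}_M\leq\bar{\lambda}_M$. Hence $\lambda_M(\nabla y_k^h)=\max(\bar{\lambda}_M,\sigma_1)=\bar{\lambda}_M$, and $\cof\nabla y_k^h$ has largest singular value $\bar{\lambda}_M\sigma_1$, which obeys $\bar{\lambda}_M^{1/2}=\bar{\lambda}_M\cdot\bar{\lambda}_M^{-1/2}\leq\bar{\lambda}_M\sigma_1<\bar{\lambda}_M\cdot r^{-1/2}\bar{\lambda}_M=r^{-1/2}\bar{\lambda}_M^2$; with $s=\bar{\lambda}_M$ and $t=\lambda_M(\cof\nabla y_k^h)$ this reads $s^{1/2}\leq t<r^{-1/2}s^2$, so $(\lambda_M(\nabla y_k^h),\lambda_M(\cof\nabla y_k^h))\in S$.

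The one genuinely non-routine ingredient is the identity $\partial_2 b_k\parallel\gamma_k'$: it is what makes $\nabla y_k^h$ split \emph{exactly} into an $\bar{\lambda}_M$-stretch along $e_1$ and a purely in-plane $2\times2$ block, and hence makes $\lambda_M(\nabla y_k^h)=\bar{\lambda}_M$ an identity rather than an $O(kh)$ approximation. Everything else is bookkeeping of the $\tau$- and $kh$-dependence of the constants via (\ref{eq:xikhProps}); the only mild subtlety is that the gap $r^{-1/2}\bar{\lambda}_M-\bar{\lambda}_M^{-1/2}$ shrinks as $\bar{\lambda}_M\downarrow r^{1/3}$, but it stays strictly positive because $(\bar{\lambda}_M,\bar{\delta})\in\mathcal{W}$ forces $\bar{\lambda}_M>r^{1/3}$, so the smallness threshold on $kh$ may be taken to depend on the data.
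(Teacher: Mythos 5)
Your proof is correct and takes essentially the same route as the paper's: both exploit the adapted orthonormal frame $\{e_1,\gamma_k',\nu_{y_k}\}$ (with $b_k=\bar{\lambda}_M^{-1/2}\nu_{y_k}$ independent of $x_1$) together with the estimates on $\xi_k^h$ to show that $\nabla y_k^h$ splits into an exact $\bar{\lambda}_M$-stretch along $e_1$ plus an in-plane block of size $\bar{\lambda}_M^{-1/2}(1+O(kh))$, then identify $\lambda_M(\nabla y_k^h)=\bar{\lambda}_M$ and verify membership in $S$ by excluding the region-$M$ alternative, i.e.\ by checking $\lambda_M(\cof\nabla y_k^h)\le r^{-1/2}\bar{\lambda}_M^2$ for $kh$ small. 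The only difference is cosmetic: you rebuild the gradient column-by-column (using $\partial_2 b_k\parallel\gamma_k'$ and the determinant constraint for the lower bound) where the paper reads the same information off the expansion (\ref{eq:ImportantForm}) and the estimates from Proposition \ref{IncompProp}.
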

\begin{proof}
The functions $\{ e_1,\gamma_k', \nu_{y_k}\}$ form an orthonormal basis of $\mathbb{R}^3$ and $b_k = \bar{\lambda}_M^{-1/2} \nu_{y_k}$ and is independent of $x_1$.  Here $\nu_{y_k}$ is the surface normal of $y_k$ as defined in (\ref{eq:secFundForm}).  Further, $y_k^h$ is as in Proposition \ref{IncompProp}, so in particular the representations (\ref{eq:gradykh}) and (\ref{eq:ImportantForm}) hold.  Combining all this, we conclude that 
\begin{equation}
\label{eq:eigenvectorCalc}
\begin{aligned}
(\nabla y_k^h)^T e_1 &= (\tilde{\nabla} y_k |b_k)^T e_1 =  \bar{\lambda}_M  e_1  \quad \text{ on } \Omega_h \\
(\nabla y_k^h)^T \gamma_k' &= (\tilde{\nabla} y_k|b_k)^T \gamma_k' + O(x_3) (\tilde{\nabla} b_k|0)^T \gamma_k' = \left( \bar{\lambda}_M^{-1/2}  + O(kx_3)\right)e_2  \quad \text{ on } \Omega_h\\
(\nabla y_k^h)^T \nu_{y_k} &= (\tilde{\nabla} y_k |b_k)^T \nu_{y_k} + O(kx_3) (e_3 \otimes b_k) \nu_{y_k} + O(k^2x_3^2) (e_2 \otimes b_k) \nu_{y_k} \\
&=\left(\bar{\lambda}_M^{-1/2} + O(kx_3)\right)e_3 + O(k^2 x_3^2) e_2 \quad \text{ on } \Omega_h
\end{aligned}
\end{equation}
where the first result uses that $e_1$ is orthogonal to $b_k$ and $\tilde{\nabla} b_k$, the second result uses that $\gamma_k'$ is orthogonal to $b_k$ and the last result uses that $\nu_{y_k}$ is orthogonal the $\tilde{\nabla} b_k$.  Since $\bar{\lambda}_M > \bar{\lambda}_M^{-1/2}$, the first term dominates the other two terms for $kh$ sufficiently small, and we readily conclude that
\begin{align}\label{eq:EigEqual}
\lambda_M(\nabla y_k^h) = \lambda_M(\tilde{\nabla} y_k|b_k) = \bar{\lambda}_M > r^{1/3}
\end{align}
as desired. 

Now to show the $\lambda_M( \cof (\tilde{\nabla} y_k|b_k)) = \bar{\lambda}_M^{-1/2}$, we note that $\lambda_M(\cof F) = \lambda_M(F) \lambda_2(F)$ for any $F \in \mathbb{R}^{3\times3}$ where $\lambda_2$ denotes the middle singular value.  Thus, the result follows from the fact that on $\omega$, $(\tilde{\nabla} y_k|b_k)^T \gamma_k' = \bar{\lambda}_M^{-1/2} e_2 $, $(\tilde{\nabla} y_k|b_k)^T \nu_{y_k} = \bar{\lambda}_M^{-1/2} e_3$ and $(\tilde{\nabla} y_k|b_k)^T e_1 = \bar{\lambda}_M e_1$.  This proves (\ref{eq:IdentLambdaM}).

Now given (\ref{eq:EigEqual}) we know that $\nabla y_k^h$ cannot lie in $L$ on the energy landscape for $W_{3D}^{qc}$.  So to complete the proof, we simply have to conclude that $\nabla y_k^h$ is not in $M$ on this landscape.  This is assured if 
\begin{align}\label{eq:notM}
\bar{\lambda}_M > r^{1/2} \lambda_2(\nabla y_k^h),
\end{align}
again given (\ref{eq:EigEqual}) and the definition of the set $M$ in (\ref{eq:setM}).  It is easy to see that from (\ref{eq:eigenvectorCalc}) that
\begin{align}\label{eq:lipschitz2}
\lambda_2(\nabla y_k^h)  = \bar{\lambda}_M^{-1/2} + O(kx_3).
\end{align}
In addition, since $\bar{\lambda}_M > r^{1/3}$, we know that $\bar{\lambda}_M > r^{1/2} \bar{\lambda}_M^{-1/2}$.  Thus, (\ref{eq:notM}) follows from (\ref{eq:lipschitz2}) for $kh$ sufficiently small, and therefore $\nabla y_k^h$ lies in $S$ on the energy landscape.  This completes the proof. 
\end{proof}

%%%%%%%%%%%%%%%%%%%%%%%%%%%%%%%%%%%%%%%%%%%%%%%%
\begin{proposition}\label{PropIdentities}
For $y_k$ as in (\ref{eq:ykDef}), $b_k$ as in (\ref{eq:bkDef}) and $G_k$ as in (\ref{eq:ImportantForm}) we have the following pointwise identities everywhere on $\omega$:
\begin{align}
&(\tilde{\nabla} y_k |b_k)^T b_k = |b_k|^2 e_3 = \bar{\lambda}_M^{-1} e_3, \label{eq:PropIdent1} \\
&(\tilde{\nabla} b_k|0)^T b_k = 0,\label{eq:PropIdent2} \\
&|\tilde{\nabla} b_k|^2 = |\II_{y_k}|^2 \label{eq:PropIdent6} \\
&\Tr((\tilde{\nabla} y_k |b_k)^T(\tilde{\nabla} b_k|0)) = \bar{\lambda}_M^{-1/2}\Tr(\II_{y_k}), \label{eq:PropIdent7} \\
&\Tr(G_k) = \bar{\lambda}_M^{1/2}\Tr( \II_{y_k}), \label{eq:PropIdent8}\\
&\Tr(\II_{y_k})^2 = |\II_{y_k}|^2. \label{eq:PropIdent9}
\end{align}
\end{proposition}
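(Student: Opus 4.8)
The plan is to reduce every identity to the explicit form of the midplane map together with one structural fact about the wrinkled curve. First I would record the consequences of $y_k(\tilde{x}) = \bar{\lambda}_M x_1 e_1 + \bar{\lambda}_M^{-1/2}\gamma_k(x_2)$ and the constraints (\ref{eq:gammak}): the columns of $\tilde{\nabla} y_k$ are $\partial_1 y_k = \bar{\lambda}_M e_1$ and $\partial_2 y_k = \bar{\lambda}_M^{-1/2}\gamma_k'$, so $\partial_1 y_k \times \partial_2 y_k = \bar{\lambda}_M^{1/2}(e_1 \times \gamma_k')$ has unit length because $e_1 \perp \gamma_k'$ and $|\gamma_k'| = 1$. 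Hence $\nu_{y_k} = e_1 \times \gamma_k'$, $\delta(\tilde{\nabla} y_k) = \bar{\lambda}_M^{1/2}$, and by (\ref{eq:bkDef}) $b_k = \bar{\lambda}_M^{-1/2}\nu_{y_k}$ with $|b_k|^2 = \bar{\lambda}_M^{-1}$. In particular $\{e_1, \gamma_k', \nu_{y_k}\}$ is an orthonormal frame and $b_k$ is a multiple of $\nu_{y_k}$ which, like $\gamma_k$, depends only on $x_2$. Identities (\ref{eq:PropIdent1}) and (\ref{eq:PropIdent2}) then drop out immediately: the $i$-th entry of $(\tilde{\nabla} y_k|b_k)^T b_k$ is the inner product of the $i$-th column of $(\tilde{\nabla} y_k|b_k)$ with $b_k$, which orthonormality annihilates for the first two columns while the third gives $|b_k|^2 = \bar{\lambda}_M^{-1}$; and $\partial_\alpha b_k \perp b_k$ since $|b_k|^2$ is constant, giving (\ref{eq:PropIdent2}).

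The only genuine observation, which I expect to carry the weight of the proof, is the following sublemma: with $\kappa_k := \gamma_k' \cdot (e_1 \times \gamma_k'')$, one has $e_1 \times \gamma_k'' = \kappa_k \gamma_k'$ pointwise on $(0,1)$. Indeed, differentiating $\gamma_k \cdot e_1 = 0$ twice and $|\gamma_k'|^2 = 1$ once gives $\gamma_k', \gamma_k'' \in \spn\{e_2, e_3\}$ and $\gamma_k' \cdot \gamma_k'' = 0$; since $\{\gamma_k', \nu_{y_k}\}$ is an orthonormal basis of $\spn\{e_2, e_3\}$ and the identity $u \cdot v = (e_1 \times u)\cdot(e_1 \times v)$ for $u, v \perp e_1$ shows that the $\nu_{y_k}$-component of $e_1 \times \gamma_k''$ equals $\gamma_k' \cdot \gamma_k'' = 0$, the vector $e_1 \times \gamma_k''$ is the $\kappa_k$-multiple of $\gamma_k'$. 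This is where the unit-speed, planar character of the wrinkle enters; it also yields $|e_1 \times \gamma_k''|^2 = \kappa_k^2$.

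With the sublemma the remaining identities are short computations, all governed by the rank-one structure of $\II_{y_k}$. Since $\nu_{y_k}$ depends only on $x_2$, $\tilde{\nabla} \nu_{y_k} = (e_1 \times \gamma_k'') \otimes \tilde{e}_2$, so $\II_{y_k} = (\tilde{\nabla} y_k)^T \tilde{\nabla} \nu_{y_k} = \bar{\lambda}_M^{-1/2}\kappa_k\, \tilde{e}_2 \otimes \tilde{e}_2$ (the $\tilde{e}_1$-component vanishes because $e_1 \perp e_1 \times \gamma_k''$). Being a symmetric rank-one matrix with single entry $\bar{\lambda}_M^{-1/2}\kappa_k$, it satisfies $\Tr(\II_{y_k})^2 = |\II_{y_k}|^2 = \bar{\lambda}_M^{-1}\kappa_k^2$, which is (\ref{eq:PropIdent9}). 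Next, $\tilde{\nabla} b_k = \bar{\lambda}_M^{-1/2}\tilde{\nabla} \nu_{y_k}$, hence $|\tilde{\nabla} b_k|^2 = \bar{\lambda}_M^{-1}|e_1 \times \gamma_k''|^2 = \bar{\lambda}_M^{-1}\kappa_k^2 = |\II_{y_k}|^2$, which is (\ref{eq:PropIdent6}).

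For the last two, the sublemma gives $(\tilde{\nabla} b_k|0) = \bar{\lambda}_M^{-1/2}\kappa_k\, \gamma_k' \otimes e_2$. Then $(\tilde{\nabla} y_k|b_k)^T (\tilde{\nabla} b_k|0) = \bar{\lambda}_M^{-1/2}\kappa_k\big((\tilde{\nabla} y_k|b_k)^T \gamma_k'\big) \otimes e_2 = \bar{\lambda}_M^{-1}\kappa_k\, e_2 \otimes e_2$, using $(\tilde{\nabla} y_k|b_k)^T \gamma_k' = \bar{\lambda}_M^{-1/2}e_2$ by orthonormality; its trace is $\bar{\lambda}_M^{-1}\kappa_k = \bar{\lambda}_M^{-1/2}\Tr(\II_{y_k})$, which is (\ref{eq:PropIdent7}). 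Finally, the columns of $(\tilde{\nabla} y_k|b_k)$ form an orthogonal basis, so it is invertible, and $(\tilde{\nabla} y_k|b_k)e_2 = \partial_2 y_k = \bar{\lambda}_M^{-1/2}\gamma_k'$ gives $(\tilde{\nabla} y_k|b_k)^{-1}\gamma_k' = \bar{\lambda}_M^{1/2}e_2$; hence $G_k = (\tilde{\nabla} y_k|b_k)^{-1}(\tilde{\nabla} b_k|0) = \bar{\lambda}_M^{-1/2}\kappa_k\big((\tilde{\nabla} y_k|b_k)^{-1}\gamma_k'\big) \otimes e_2 = \kappa_k\, e_2 \otimes e_2$, so $\Tr(G_k) = \kappa_k = \bar{\lambda}_M^{1/2}\Tr(\II_{y_k})$, which is (\ref{eq:PropIdent8}). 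I anticipate no obstacle beyond keeping this bookkeeping straight — everything hinges on the sublemma and the consequent rank-one form of $\II_{y_k}$.
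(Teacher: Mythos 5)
Your proof is correct, and it reaches the paper's identities by a slightly different mechanism. Both arguments start identically: $b_k=\bar{\lambda}_M^{-1/2}\nu_{y_k}$ with $\{e_1,\gamma_k',\nu_{y_k}\}$ orthonormal gives (\ref{eq:PropIdent1}) and (\ref{eq:PropIdent2}) at once, and both exploit the fact that $\II_{y_k}$ degenerates to the single entry $\partial_2 y_k\cdot\partial_2\nu_{y_k}$. Where you differ is in how the remaining four identities are extracted: your key device is the Frenet-type sublemma $e_1\times\gamma_k''=\kappa_k\,\gamma_k'$ (valid because $\gamma_k$ is a planar unit-speed curve), which yields fully explicit rank-one expressions $\tilde{\nabla}b_k=\bar{\lambda}_M^{-1/2}\kappa_k\,\gamma_k'\otimes\tilde e_2$, $\II_{y_k}=\bar{\lambda}_M^{-1/2}\kappa_k\,\tilde e_2\otimes\tilde e_2$ and $G_k=\kappa_k\,e_2\otimes e_2$, after which (\ref{eq:PropIdent6})--(\ref{eq:PropIdent9}) are read off. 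The paper instead never computes the curvature: it writes $(\tilde{\nabla}y_k|b_k)=R_k\Lambda_M$ with $R_k\in SO(3)$ and $\Lambda_M=\diag(\bar{\lambda}_M,\bar{\lambda}_M^{-1/2},\bar{\lambda}_M^{-1/2})$, then uses $|\tilde{\nabla}b_k|=|R_k^T\tilde{\nabla}b_k|$ and $(\tilde{\nabla}y_k|b_k)^{-1}=\Lambda_M^{-2}(\tilde{\nabla}y_k|b_k)^T$ to reduce every quantity to the scalar $\partial_2 y_k\cdot\partial_2\nu_{y_k}$. Your route buys concrete formulas (e.g.\ $\Tr G_k=\kappa_k$ and $\Tr\II_{y_k}=\bar{\lambda}_M^{-1/2}\kappa_k$, which make the scaling in $\mathcal{A}_k^\tau$ transparent), at the cost of differentiating the cross-product expression for $\nu_{y_k}$ and invoking $C^2$ regularity of $\gamma_k$ (available here since $\gamma_k\in C^3$); the paper's factorization argument is slightly more frame-algebraic and avoids the explicit curvature bookkeeping. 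Either way the conclusions coincide, and all your intermediate computations check out.
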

\begin{proof}
First, we observe that $b_k = \bar{\lambda}_M^{-1/2} \nu_{y_k}$ where $\nu_{y_k}$ is the surface normal of $y_k$.  Thus, $(\tilde{\nabla}y_k)^T b_k = 0$ by definition, and (\ref{eq:PropIdent1}) follows.  In addition, since $\nu_{y_k}$ is a unit vector, $\partial_{\alpha} \nu_{y_k} \cdot \nu_{y_k} = (1/2) \partial_{\alpha} |\nu_{y_k}|^2 = 0$.  So (\ref{eq:PropIdent2}) follows.  

Now, for the third equality, we first observe that for tension wrinkling
\begin{align}\label{eq:Fund1}
\II_{y_k} = (\tilde{\nabla} y_k)^T (\tilde{\nabla} \nu_{y_k}) = \left(\begin{array}{cc} \partial_1 y_k \cdot \partial_1 \nu_{y_k} & \partial_1 y_k \cdot \partial_2 \nu_{y_k} \\  \partial_2 y_k \cdot \partial_1 \nu_{y_k} & \partial_2 y_k \cdot \partial_2 \nu_{y_k} \end{array}\right) = \left(\begin{array}{cc} 0 & 0 \\ 0 & \partial_2 y_k \cdot \partial_{2} \nu_{y_k} \end{array} \right)
\end{align}
using the fact that $\nu_{y_k}$ is independent of $x_1$ and that $\partial_1 y_k \cdot \partial_2 \nu_{y_k} = \bar{\lambda}_M e_1 \cdot \partial_2 \nu_{y_k} = 0$ since $\nu_{y_k} \cdot e_1 = 0$.  Thus, the second fundamental form is greatly simplified in the context of tension wrinkling.  In addition, we observe that 
\begin{equation}
\label{eq:Fund2}
\begin{aligned}
&(\tilde{\nabla} y_k |b_k) = R_k \Lambda_M,  \quad \Lambda_M = \text{diag}( \bar{\lambda}_M, \bar{\lambda}_M^{-1/2}, \bar{\lambda}_M^{-1/2} )\\
&R_k := \left(\begin{array}{ccc} 1 & 0 & 0 \\ 0 & (\gamma_k' \cdot e_2) & -(\gamma_k' \cdot e_3) \\ 0 & (\gamma_k' \cdot e_3) & (\gamma_k' \cdot e_2) \end{array}\right) \in SO(3).
\end{aligned}
\end{equation}
Consequently,
\begin{align}\label{eq:Fund3}
|\tilde{\nabla} b_k|^2 = |R_k^T (\tilde{\nabla} b_k)|^2 = |\Lambda_M^{-1} (\tilde{\nabla} y_k|b_k)^T \tilde{\nabla} b_k|^2 = (\partial_2 y_k \cdot \partial_2 \nu_{y_k})^2.  
\end{align}
Combining (\ref{eq:Fund1}) and (\ref{eq:Fund3}), we obtain (\ref{eq:PropIdent6}) as desired.  The calculation for (\ref{eq:PropIdent7}) is similar.  For (\ref{eq:PropIdent8}), we can deduce from (\ref{eq:Fund2}) that 
\begin{align}
(\tilde{\nabla} y_k |b_k)^{-1} = \Lambda_M^{-1} R_k^T = \Lambda_M^{-2} (\tilde{\nabla} y_k |b_k)^T.  
\end{align}
Consequently,
\begin{align}
\Tr(G_k ) = \Tr(\Lambda_M^{-2} (\tilde{\nabla} y_k |b_k)^T(\tilde{\nabla} b_k|0))  = \bar{\lambda}_M^{1/2} (\partial_2 y_k \cdot \partial_2 \nu_{y_k}),
\end{align}
and (\ref{eq:PropIdent8}) follows from (\ref{eq:Fund1}).  Finally, (\ref{eq:PropIdent9}) also follows from (\ref{eq:Fund1}).  This completes the proofs.  
\end{proof}

\bibliographystyle{abbrv}
\bibliography{pbWrinkMicro}

\end{document}